\algnewcommand\algorithmicinput{\textbf{Input:}}
\algnewcommand\Input{\item[\algorithmicinput]}
\newmdenv[linecolor=black, linewidth=1pt]{factbox}
\newcommand{\knote}[1]{{\color{brown} (Kunal: #1)}}
\newcommand{\jnote}[1]{{\color{teal} (James: #1)}}
\newcommand{\anote}[1]{{\color{orange} (Anuj: #1)}} 
\newcommand{\enote}[1]{{\color{blue} (Eunou: #1)}}  
\newcommand{\onote}[1]{{\color{magenta} (Ojas: #1)}}
\renewcommand{\knote}[1]{}
\renewcommand{\jnote}[1]{}
\renewcommand{\anote}[1]{} 
\renewcommand{\enote}[1]{}  
\renewcommand{\onote}[1]{}
\newcommand{\defeq}{\stackrel{\mathrm{\scriptscriptstyle def}}{=}}
\newcommand{\of}[1]{\left( #1 \right)}
\newcommand{\ofc}[1]{\left\{ #1 \right\}}
\newcommand{\ofb}[1]{{\left[#1\right]}}
\newcommand{\ofk}[1]{{\left\langle#1\right\rangle}}
\newcommand{\abs}[1]{{\left|#1\right|}}
\newcommand{\floor}[1]{{\left\lfloor#1\right\rfloor}}
\newcommand{\lmax}{\lambda_{max}}
\newcommand{\lmin}{\lambda_{min}}
\newcommand{\tr}{\mathrm{Tr}}
\newcommand{\eigs}{\mathrm{eigs}}
\newcommand{\StoqMA}{\class{StoqMA}}
\newtheorem{conjecture}{Conjecture}
\crefname{conjecture}{Conjecture}{Conjectures}
\newtheorem{claim}{Claim}
\newtheorem{lemma}{Lemma}
\newtheorem*{lemma*}{Lemma}
\newtheorem{theorem}{Theorem}
\newtheorem{corollary}{Corollary}
\newtheorem{fact}{Fact}
\theoremstyle{definition}
\newtheorem{definition}{Definition}
\newtheorem{remark}{Remark}
\begin{document}
\title{Conjectured Bounds for 2-Local Hamiltonians via Token Graphs}
\author{
  Anuj Apte\thanks{University of Chicago}
  \and
  Ojas Parekh\thanks{Sandia National Laboratories}
  \and
  James Sud\footnotemark[1]\;\,\thanks{jsud@uchicago.edu}
}

\date{}
\maketitle

\begin{abstract}
    We explain how the maximum energy of the Quantum MaxCut, XY, and EPR Hamiltonians on a graph $G$ are related to the spectral radii of the \emph{token} graphs of $G$. From numerical study, we conjecture new bounds for these spectral radii based on properties of $G$.  We show how these conjectures tighten the analysis of existing algorithms,
    implying state-of-the-art approximation ratios for all three Hamiltonians. Our conjectures also provide simple combinatorial bounds on the ground state energy of the antiferromagnetic Heisenberg model, which we prove for bipartite graphs.
\end{abstract}

\clearpage
\newpage

\small
\tableofcontents
\normalsize

\clearpage
\newpage

\section{Introduction}\label{sec:intro}
 
Finding ground states of local Hamiltonians is a central problem in quantum computing \cite{kitaev2002}.
Certain well-studied Hamiltonians have been shown to connect to combinatorial graph-theoretic objects \cite{rudolph2002,osborne2006}, enabling the study of 
of these Hamiltonians via techniques from graph theory. In this work, we continue this study by investigating certain $2$-local Hamiltonians via their connection to \emph{token graphs}~\cite{fabila-monroy2012}. As our results concern both quantum computing and spectral graph theory, we provide two introductions. The graph theoretic introduction presents our conjectures on the eigenspectra of token graphs. The quantum computational introduction presents the implications of these conjectures for $2$-Local Hamiltonians and approximation algorithms. Each introduction is largely self-contained, and the reader can choose whichever sections they prefer. A third perspective, which we do not discuss here, arises from the representation theory of the symmetric group (see e.g.~\cite{watts2024,takahashi2023}).

\subsection{A graph theory perspective}
Let $G(V,E,w)$ be a simple graph with vertex set $V=\ofc{1,2,\ldots,n}$, edge set $E$ and positive edge weights $w: E \rightarrow \mathbb{R}_+$. For an integer $1\le k \le n$, the $k$\emph{-th} token graph $F_k(G)$ of $G$ is defined as the simple graph whose vertex set is the $\binom{n}{k}$ $k$-tuples of $V$, where vertices $A$, $B$ are adjacent 
if and only if their symmetric difference is a pair $(a,b)$ such that $a \in A$, $b \in B$, and $(a,b) \in E$, with corresponding edge weight $w_{AB}=w_e$. Token graphs were introduced under the names \emph{k-tuple vertex graphs} \cite{johns1988}, \emph{symmetric k-th power graphs} \cite{audenaert2007}, and \emph{k-tuple vertex graphs} \cite{alavi2002}. Token graphs are also special cases of \emph{Kikuchi} graphs (\cite{manohar2024, trevisan2024}), defined with respect to uniform hypergraphs. Kikuchi graphs were originally introduced by \cite{wein2019} under the name \emph{symmetric difference matrices}. Token graphs were introduced under their current name in \cite{fabila-monroy2012}. The phrase ``token graph'' corresponds to the intuition of placing $k$ indistinguishable tokens on distinct vertices of the graph --- yielding $\binom{n}{k}$ token configurations --- with configurations being adjacent if they can be obtained from each other by moving a single token along an edge in the graph. Note that $F_k(G) \cong F_{n-k}(G)$, so it suffices to consider $1\le k \le \floor{n/2}$.

Let $A(G)$ and $D(G)$ denote the adjacency and degree matrix of a weighted graph $G$. Then, let $L(G)\defeq D(G)-A(G)$ and $Q(G)\defeq D(G)+A(G)$ denote the Laplacian and signless Laplacian matrix of $G$. Let $\lmax\of{\cdot}$, $\lmin\of{\cdot}$, and $\eigs\of{\cdot}$ denote the maximum, minimum, and set of unique eigenvalues of a matrix, respectively. We are interested in studying the eigenspectra of $A(F_k(G))$, $L(F_k(G))$ and $Q(F_k(G))$. These spectra have been the object of much study \cite{alavi1991, rudolph2002, audenaert2007,  jacob2007, barghi2009, alzaga2010, fabila-monroy2012, dalfo2021, reyes2023, reyes2024, barik2024, lew2024, song2024}. However, these studies are limited to the case of unweighted graphs, and we are not aware of current results for the spectra of the signless Laplacian.

For an unweighted graph $G$, let $m$ denote the number of edges. For all unweighted graphs $G$ and for all $1\le k \le \floor{n/2}$, we conjecture

\begin{restatable}{conjecture}{ConjLMax}\label{conj:L_max}
        $\lmax\of{L\of{F_k\of{G}}} \le m+k$.
\end{restatable}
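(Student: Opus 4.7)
The plan is to establish the base case $k = 1$ via a classical Laplacian bound and then attack the general case by exploiting the operator decomposition $L(F_k(G)) = \sum_{ij \in E(G)} L_{ij}$ into edge Laplacians. For $k = 1$ the conjecture reduces to $\lmax(L(G)) \le m + 1$, which follows from the Anderson--Morley inequality $\lmax(L(G)) \le \max_{uv \in E(G)}(d(u) + d(v))$ together with the elementary observation that for any edge $uv$ of a simple graph the edges incident to $\{u, v\}$ number at most $m$ (with $uv$ counted twice in $d(u) + d(v)$), giving $d(u) + d(v) \le m + 1$. This base case is tight on stars $K_{1, n-1}$.

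For $k \ge 2$, the first attempt is to apply Anderson--Morley directly to $F_k(G)$. Given adjacent $A, B \in F_k(G)$ with $C = A \cap B$, $A \triangle B = \{a, b\}$ (so $ab \in E(G)$), and $X = V \setminus (A \cup B)$, partitioning edges of $G$ by endpoints yields
\[
d_{F_k(G)}(A) + d_{F_k(G)}(B) = m + e(C, X) + 1 - e(C, C) - e(X, X).
\]
This sum can exceed $m + k$ (for instance when $G$ consists of the edge $ab$ together with a complete bipartite graph between $C$ and $X$), yet direct computation on such examples confirms that $\lmax(L(F_k(G))) \le m + k$ still holds. So pure degree information is too coarse, and the proof must genuinely exploit that the edge Laplacians cannot simultaneously saturate their maxima. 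The extremal case is plausibly $G = kK_2$, whose token graph contains the $k$-cube $Q_k$ as a component, achieving $\lmax(L(Q_k)) = 2k = m + k$.

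The cleanest general route is to work with the decomposition $L(F_k(G)) = \sum_{ij \in E(G)} L_{ij}$ itself, where each $L_{ij}$ is PSD with spectrum $\{0, 2\}$, attaining $2$ precisely on the ``swap-antisymmetric'' vectors $\tfrac{1}{\sqrt{2}}(\ket{A} - \ket{A \triangle \{i,j\}})$. Testing against the ``linear lift'' family $x_A = \sum_{i \in A} y_i$ with $y \perp \mathbf{1}$, a direct calculation gives Rayleigh quotient $y^\top L(G) y / \|y\|^2$, hence $\lmax(L(F_k(G))) \ge \lmax(L(G))$; the conjecture then precisely quantifies how much higher-order symmetric lifts of $y$ can add above this linear floor. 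The main technical hurdle will be extracting exactly $+k$ of slack above the baseline of $m$. I would attempt induction on $k$ via vertex deletion --- fix $v \in V$, partition $\binom{V}{k}$ by presence or absence of $v$ into induced copies of $F_{k-1}(G - v)$ and $F_k(G - v)$ joined by cross-edges indexed by $E_G(v)$ --- or, alternatively, an $S_n$-isotypic decomposition of $\mathbb{R}^{\binom{V}{k}}$ that isolates these contributions cleanly. Achieving the sharp additive constant of $+1$ per token, rather than a coarser $O(d_G(v))$, is what makes either approach delicate.
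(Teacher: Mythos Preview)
The statement you are attempting to prove is listed in the paper as an open \emph{conjecture}; the paper does not contain a proof, only numerical verification up to order ten together with some structural reductions (\cref{lem:token_remove_triangles} reduces to triangle-free graphs, and the weighted consequence in \cref{lem:h_bounds} is proved unconditionally for bipartite $G$). So there is no ``paper's own proof'' to compare against.

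Your $k=1$ argument is correct and complete: Anderson--Morley plus the edge count $d(u)+d(v)\le m+1$ settles $\lmax(L(G))\le m+1$, a case the paper does not isolate or prove. Your extremal example $kK_2$ with $\lmax(L(Q_k))=2k=m+k$ is also right and shows the bound is sharp for all $k$.

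For $k\ge 2$, however, your proposal is not a proof but a sketch of possible strategies, and you explicitly flag this. The vertex-deletion induction you outline decomposes $L(F_k(G))$ into blocks $L(F_{k-1}(G-v))$, $L(F_k(G-v))$, and a cross term governed by the edges at $v$; the obstacle is exactly what you identify---controlling the cross term contributes $d_G(v)$ rather than $+1$, and there is no evident mechanism to cancel the excess. The $S_n$-isotypic decomposition is natural (indeed the containment $\eigs(L(F_k))\subseteq\eigs(L(F_{k+1}))$ in \cref{lem:laplacian_spectrum_containment} reflects it), but you give no indication of how it produces the additive $+k$. In short: the base case is settled, the extremal configuration is identified, but the inductive or representation-theoretic step that would close the argument for general $k$ remains genuinely open, both in your proposal and in the paper.
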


\begin{restatable}{conjecture}{ConjQMax}\label{conj:Q_max}
    $\lmax\of{Q\of{F_k\of{G}}} \le m+k$.
\end{restatable}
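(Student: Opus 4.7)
My first step is the $k=1$ base case. Since $F_1(G) = G$, the classical inequality $\lmax(Q(H)) \le \max_{uv \in E(H)}\{d_H(u) + d_H(v)\}$, combined with the observation that $d_G(u) + d_G(v) \le m + 1$ for every edge $uv \in E(G)$ (the edge $uv$ is counted twice; each of the remaining $m-1$ edges contributes at most $1$), yields $\lmax(Q(G)) \le m + 1$. This argument does not generalize directly: for $k \ge 2$ one can find edges $AB \in E(F_k(G))$ with $d_{F_k(G)}(A) + d_{F_k(G)}(B) > m + k$ (e.g.\ $G = K_{n/2,n/2}$ with $k = n/2$ and $n \ge 6$), so the edge-degree-sum bound alone is too weak.

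For $k \ge 2$ my plan is a Rayleigh-quotient decomposition over $(k-1)$-subsets. Parameterizing each edge of $F_k(G)$ by a pair $(e, S)$ with $e = uv \in E(G)$ and $S \in \binom{V \setminus \{u,v\}}{k-1}$ gives
\[
x^\top Q(F_k(G))\, x \;=\; \sum_{S \in \binom{V}{k-1}} (y^S)^\top Q(G_S)\, y^S, \qquad G_S \defeq G[V \setminus S], \quad (y^S)_u \defeq x_{S \cup \{u\}}.
\]
Applying the $k=1$ bound $\lmax(Q(G_S)) \le |E(G_S)| + 1$ to each inner graph and re-indexing by $A = S \cup \{u\}$ (each $k$-subset $A$ appears in exactly $k$ of the $y^S$) yields
\[
x^\top Q(F_k(G))\, x \;\le\; \sum_A x_A^2 \Big[k(m+1) - (k-1)|\partial_G A| - k\,|E(G[A])|\Big].
\]
Matching this against $(m+k)\|x\|^2$ reduces the conjecture to the pointwise combinatorial inequality $(k-1)\,|E(G[V \setminus A])| \le |E(G[A])|$ for every $k$-subset $A$. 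This fails whenever $A$ is an independent set while $V \setminus A$ carries substantial edge density (e.g.\ $A$ equal to $k$ leaves of a star), so the naive decomposition is too lossy on its own.

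The refinement I would pursue is to exploit the redundancy among the $y^S$ --- each coordinate $x_A$ appears in exactly $k$ of them --- either by (i) replacing the per-$S$ bound with a non-uniformly weighted version that penalizes $(k-1)$-subsets $S$ inside edge-rich regions so that the slack telescopes across $S$'s, or (ii) rewriting $Q(F_k(G)) = 2\sum_{e \in E(G)} C_e P_e^+$, where $C_e$ is the diagonal indicator of $k$-subsets that cut $e$ and $P_e^+ = (I + T_e)/2$ is the projection onto the $+1$-eigenspace of the swap $T_e$, and exhibiting $(m+k)\,I - Q(F_k(G))$ as a manifestly PSD combination of such operators. In both cases the equality cases --- the star $K_{1,n-1}$ and the Johnson graphs $F_k(K_n)$ for $n \in \{2k, 2k+1\}$ --- would calibrate the weights and provide a sanity check.

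The principal obstacle I anticipate is the non-commutativity of the swaps: $T_e$ and $T_{e'}$ commute only for vertex-disjoint $e, e'$, so there is no joint eigenbasis on which to diagonalize, and any sum-of-squares certificate must respect this structure. A fallback plan is induction on $n$: peel off a vertex $v$, use the decomposition $V(F_k(G)) = \{A : v \in A\} \sqcup \{A : v \notin A\}$ whose induced subgraphs are isomorphic to $F_{k-1}(G-v)$ and $F_k(G-v)$ respectively, plus a bipartite cross-layer indexed by $N_G(v)$, and invoke the two inductive hypotheses $\lmax \le (m-d(v)) + (k-1)$ and $\lmax \le (m-d(v)) + k$. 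The delicate step is to show that the cross-block coupling contributes precisely $d(v) + 1$ to the overall maximum eigenvalue with no extra slack on the tight instances.
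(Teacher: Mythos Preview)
The statement you are attempting to prove is listed in the paper as an \emph{open conjecture}, not a theorem: the paper offers no proof, only numerical verification on all graphs up to order ten, a reduction to biconnected factor-critical graphs (\cref{lem:h_reductions}), a proof for bipartite graphs (via \cref{lem:bipartite_equivalence} since factor-critical graphs contain odd cycles), and a base case for odd cycles (\cref{lem:cycle_bound}). There is therefore no ``paper's proof'' to compare your attempt against; any complete argument would resolve an open problem explicitly flagged in the Discussion section.

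Your $k=1$ case is correct and clean. Your Rayleigh-quotient decomposition over $(k-1)$-subsets is also set up correctly, and you have accurately diagnosed why it fails: the pointwise inequality $(k-1)\,|E(G[V\setminus A])|\le |E(G[A])|$ is false in general, so the per-$S$ bound is genuinely too lossy. From that point on, however, your proposal is a list of strategies rather than a proof. The operator-decomposition idea $(m+k)I-Q(F_k(G))$ as a PSD sum of swap-projectors is essentially what the paper's quantum framing (via the EPR Hamiltonian and \cref{fact:epr_Q_equivalence}) would suggest, and the paper's authors were likewise unable to push it through; the non-commutativity obstacle you flag is real and is precisely what makes the problem hard. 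The vertex-peeling induction you sketch runs into the same issue the paper circumvents only partially: the cross-block contribution is not uniformly $d(v)+1$, and the tight instances (stars, $K_{2k}$, $K_{2k+1}$) behave very differently under peeling, so there is no single ``precise'' slack to track.

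In short: your $k=1$ argument is a genuine (easy) contribution, but the remainder is a research programme, not a proof, and the paper itself does not claim one.
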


\begin{restatable}{conjecture}{ConjAMax}\label{conj:A_max}
    $\lmax\of{A\of{F_k\of{G}}} \le \frac{1}{2}\of{m+k}$.
\end{restatable}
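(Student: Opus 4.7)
The plan is to derive Conjecture~\ref{conj:A_max} as a direct consequence of Conjecture~\ref{conj:Q_max}, reducing the task entirely to bounding the top signless Laplacian eigenvalue of $F_k(G)$. The key observation is the elementary identity $A(H) = \frac{1}{2}\of{Q(H) - L(H)}$ for any graph $H$, which follows immediately from $Q = D + A$ and $L = D - A$. Applied with $H = F_k(G)$, and combined with the fact that $L(F_k(G))$ is positive semi-definite, this yields the Loewner-order inequality $A(F_k(G)) \preceq \frac{1}{2} Q(F_k(G))$.

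Concretely, for any unit vector $v$,
\[
    v^{T} A(F_k(G)) v \;=\; \tfrac{1}{2}\bigl(v^{T} Q(F_k(G)) v - v^{T} L(F_k(G)) v\bigr) \;\le\; \tfrac{1}{2}\, v^{T} Q(F_k(G)) v,
\]
since $v^{T} L(F_k(G)) v \ge 0$. Maximizing over $v$ gives
$\lmax\of{A\of{F_k\of{G}}} \le \tfrac{1}{2}\lmax\of{Q\of{F_k\of{G}}}$, and Conjecture~\ref{conj:Q_max} then yields $\lmax\of{A\of{F_k\of{G}}} \le \tfrac{1}{2}\of{m+k}$ as required.

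The main obstacle is thus displaced entirely onto Conjecture~\ref{conj:Q_max}; no independent combinatorial work is needed to handle the adjacency bound once the signless Laplacian bound is in hand. As a sanity check on the tightness of this reduction, note that when $G$ is bipartite the token graph $F_k(G)$ is also bipartite (a swap along an edge flips the parity of $\abs{A \cap V_1}$ for either side $V_1$ of the bipartition). For bipartite graphs, $L$ and $Q$ are isospectral, so the inequality $\lmax\of{A\of{F_k\of{G}}} \le \tfrac{1}{2}\lmax\of{Q\of{F_k\of{G}}}$ becomes an equality. Hence on the extremal bipartite family the factor of $\tfrac{1}{2}$ is not lossy, and one should not expect to obtain a sharper constant by working with $A$ directly rather than passing through $Q$.
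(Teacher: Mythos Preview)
The statement is a \emph{conjecture} in the paper, not a theorem: the paper offers no proof, only numerical verification on graphs up to order ten. So there is no ``paper's own proof'' to compare against.

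Your reduction is correct and worth recording: from $2A = Q - L$ and $L \succeq 0$ one gets $\lmax(A(F_k(G))) \le \tfrac{1}{2}\lmax(Q(F_k(G)))$, so \cref{conj:Q_max} indeed implies \cref{conj:A_max}. But this is a reduction between open conjectures, not a proof; you acknowledge this yourself. The paper lists its conjectures as logically independent open problems, and your observation shows one is subsumed by another, which is useful but does not settle either.

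Your sanity check, however, contains an error. Isospectrality of $L(H)$ and $Q(H)$ for bipartite $H$ does \emph{not} force $\lmax(A(H)) = \tfrac{1}{2}\lmax(Q(H))$. Take $H = P_3 = F_1(P_3)$: then $\lmax(A(P_3)) = \sqrt{2}$ while $\lmax(Q(P_3)) = 3$, so $\tfrac{1}{2}\lmax(Q) = 1.5 > \sqrt{2}$. Equality in your bound requires the top eigenvector of $A$ to lie in $\ker L$, i.e.\ to be constant on components; this holds for regular graphs (bipartite or not) but fails in general. So the claim that the factor $\tfrac{1}{2}$ is lossless on bipartite token graphs is unsupported.
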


\begin{restatable}{conjecture}{ConjAMin}\label{conj:A_min}
    $\lmin\of{A\of{F_k\of{G}}} \ge -\frac{1}{2}\of{m+k}$.
\end{restatable}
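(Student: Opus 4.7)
The plan is to deduce \cref{conj:A_min} directly from \cref{conj:A_max} via the Perron--Frobenius theorem. The key observation is that since $G$ is unweighted and $F_k(G)$ is a simple unweighted graph, the matrix $A(F_k(G))$ is a symmetric, entrywise non-negative matrix. For any such matrix, the Perron--Frobenius theorem guarantees that the spectral radius is attained by the largest eigenvalue, i.e.\ $\lmax(A(F_k(G))) \ge |\lambda|$ for every eigenvalue $\lambda$ of $A(F_k(G))$. Specialising to $\lambda = \lmin(A(F_k(G)))$ yields
$$-\lmin(A(F_k(G))) \le \lmax(A(F_k(G))).$$

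Chaining this with \cref{conj:A_max} then immediately gives
$$\lmin(A(F_k(G))) \;\ge\; -\lmax(A(F_k(G))) \;\ge\; -\tfrac{1}{2}(m+k),$$
which is exactly \cref{conj:A_min}. Thus, up to the assumption of \cref{conj:A_max}, the proof is essentially a one-line reduction and requires no new combinatorial input about token graphs beyond what is already needed for the $\lmax$ bound.

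The main obstacle, then, is entirely shifted onto \cref{conj:A_max}: this approach does not give an unconditional proof of \cref{conj:A_min}. It is worth noting that the reduction is tight precisely when $G$ is bipartite. In that case $F_k(G)$ is also bipartite --- a $2$-colouring is given by the parity of the number of tokens in one fixed part of $G$, since every edge of $F_k(G)$ corresponds to moving a single token across an edge of $G$ and hence flips this parity --- so $\lmin(A(F_k(G))) = -\lmax(A(F_k(G)))$ and \cref{conj:A_min,conj:A_max} are literally equivalent. For non-bipartite $G$ the Perron--Frobenius bound is strict, so an independent proof of \cref{conj:A_min} could in principle be easier than a proof of \cref{conj:A_max}; a plausible route would be to exhibit, for any unit vector $x$, a decomposition of $x^\top A(F_k(G)) x$ into contributions indexed by edges of $G$ and tokens, and to bound each contribution from below by $-\tfrac12$ times the relevant weight, possibly via a signless-Laplacian identity of the form $x^\top Q(F_k(G)) x \ge 0$ combined with an upper bound on $x^\top D(F_k(G)) x$. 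Pursuing this would, however, essentially reproduce the techniques needed for \cref{conj:Q_max,conj:A_max}, so the cleanest statement remains the conditional reduction above.
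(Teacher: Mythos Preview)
The statement you are addressing is \cref{conj:A_min}, which the paper presents as an \emph{open conjecture}; there is no proof in the paper to compare against. The paper only offers numerical verification on all graphs up to order ten and the reductions of \cref{lem:token_reductions} and \cref{lem:h_reductions}, none of which settle the conjecture.

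That said, your conditional reduction is correct and is a genuine observation that the paper does not make explicit. For any symmetric entrywise non-negative matrix $A$ and any unit eigenvector $v$ with eigenvalue $\lambda$, one has $|\lambda| = |v^\top A v| \le |v|^\top A\,|v| \le \lmax(A)$, so indeed $\lmin(A(F_k(G))) \ge -\lmax(A(F_k(G)))$ and \cref{conj:A_max} implies \cref{conj:A_min}. The paper treats the two conjectures as logically independent nodes in \cref{fig:conjecture_dag}; your remark would add an arrow from \cref{conj:A_max} to \cref{conj:A_min} there. Your tightness analysis for bipartite $G$ via \cref{lem:bipartite_graph_has_bipartite_token} is also correct and matches the paper's \cref{lem:bipartite_token_symmetry}.

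The only thing to flag clearly is scope: this is a reduction between two open conjectures, not a proof of either. You acknowledge this, but the framing as a ``proof proposal'' for \cref{conj:A_min} overstates what has been achieved. A fair summary is that you have shown \cref{conj:A_min} is redundant given \cref{conj:A_max}, which is useful structural information but leaves the actual conjecture open exactly as in the paper.
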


For an arbitrary graph $G$, let $W(G)$ be the sum over edge weights $\sum_{e \in E} w_e$. Let $M_k(G)$ be the weight of the maximum weight matching of $G$ that consists of at most $k$ edges. We show that the above conjectures imply bounds for all \emph{weighted} graphs:

\begin{restatable}{lemma}{LemTokenReductions}\label{lem:token_reductions}
    For all weighted graphs $G$ and all $1 \le k \le \floor{n/2}$
    \begin{align}
        \cref{conj:L_max} &\implies \lmax\of{L\of{F_k\of{G}}} \le W(G)+M_k(G)\,, \label{eq:l_max_reduction}\\
         \cref{conj:Q_max}  &\implies \lmax\of{Q\of{F_k\of{G}}} \le  W(G)+M_k(G)\,, \label{eq:q_max_reduction}\\
         \cref{conj:A_max}  &\implies \lmax\of{A\of{F_k\of{G}}} \le  \frac{1}{2}\of{W(G)+M_k(G)}\,, \label{eq:a_max_reduction}\\
         \cref{conj:A_min}  &\implies \lmin\of{A\of{F_k\of{G}}} \ge -\frac{1}{2}\of{W(G)+M_k(G)}\,\label{eq:a_min_reduction}.
    \end{align}
\end{restatable}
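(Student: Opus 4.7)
The plan is to establish all four implications via a common strategy. The starting observation is that each of the operators $M \in \{L, Q, A\}$ on the token graph decomposes linearly in the edge weights,
\begin{equation*}
M(F_k(G)) = \sum_{e \in E} w_e\, M(F_k(G_e))\,,
\end{equation*}
where $G_e$ denotes the unweighted simple graph on vertex set $V$ whose only edge is $e$; this follows from the linearity of the degree and adjacency matrices in the edge weights. In particular, for any unweighted subgraph $H \subseteq G$ we have $M(F_k(H)) = \sum_{e \in E(H)} M(F_k(G_e))$, so each unweighted conjecture applied to $H$ translates into a scalar inequality on the quadratic forms $\langle \psi | M(F_k(G_e)) | \psi \rangle$.

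By homogeneity of each claimed inequality in $w$ and continuity of eigenvalues, it suffices to treat positive integer edge weights. For integer-weighted $G$, my approach would be to construct an unweighted simple graph $G^\star$ and an integer token count $k^\star$ such that (i)~$\lmax(L(F_k(G))) \le \lmax(L(F_{k^\star}(G^\star)))$ (with the analogous spectral dominations for $Q$ and for both extremes of $A$), and (ii)~$|E(G^\star)| + k^\star \le W(G) + M_k(G)$. Applying the relevant unweighted conjecture (\cref{conj:L_max}, \cref{conj:Q_max}, \cref{conj:A_max}, or \cref{conj:A_min}) to $(G^\star,k^\star)$ would then yield the corresponding weighted bound. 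A natural initial candidate is to subdivide each edge $e$ of $G$ into a path of $w_e$ unit edges via $w_e - 1$ fresh intermediate vertices, which ensures $|E(G^\star)| = W(G)$; the value of $k^\star$ must be calibrated to recover the $M_k(G)$ term.

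The hard part will be establishing (i). A purely LP-based reduction using only the scalar constraints $\sum_{e \in E(H)} \langle \psi | L(F_k(G_e)) | \psi \rangle \le |E(H)| + k$ (from \cref{conj:L_max} applied to subgraphs $H$) is insufficient: LP duality produces only the weaker upper bound $W(G) + k \cdot \max_e w_e$ on $\sum_e w_e \langle \psi | L(F_k(G_e)) | \psi \rangle$, which strictly exceeds $W(G) + M_k(G)$ whenever the top-$k$ weighted edges of $G$ do not form a matching (for instance, $G = K_3$ with weights concentrated on two incident edges). Thus the proof must exploit the operator structure of the $L(F_k(G_e))$'s beyond their scalar inner products --- either via an explicit embedding or covering-style lifting of eigenvectors of $L(F_k(G))$ into those of $L(F_{k^\star}(G^\star))$ for a refined construction $G^\star$, or via a direct operator-theoretic argument that tracks how eigenvectors of the weighted Laplacian project onto matching-supported subspaces. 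I expect verifying (i) for an appropriate $G^\star$ to be the technically delicate step.
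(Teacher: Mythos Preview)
Your dismissal of the LP route is the real gap, and your alternative construction is left unfinished. The paper's proof \emph{is} precisely an LP/matching-polytope argument built on the scalar quantities $x_e = v^\top L(F_k(G_e))\,v$ that you wrote down; what you are missing is that the right LP is not the naive one generated only by \cref{conj:L_max} applied to arbitrary subgraphs, but the Edmonds-style description of $M_k(G)$ (the cardinality-constrained matching polytope) with its star, odd-set, nonnegativity, and global cardinality constraints. Each of these four families is certified by a \emph{different} ingredient:
\begin{itemize}
  \item the star constraint $\sum_{j\in N(i)} (x_{ij}-1)\le 1$ comes from the \emph{proven} exact bound $\lmax(L(F_k(S_m)))=m+1$ (monogamy of entanglement), which is strictly tighter than \cref{conj:L_max} on stars and is exactly what breaks your $K_3$ counterexample;
  \item the odd-set constraints come from applying \cref{conj:L_max} to the induced subgraph $G(S)$ with $|S|$ odd, together with Laplacian spectrum containment to force the effective token count down to $\lfloor |S|/2\rfloor=(|S|-1)/2$;
  \item the cardinality constraint $\sum_e (x_e-1)\le k$ is \cref{conj:L_max} applied to the unweighted version of $G$ itself;
  \item nonnegativity $x_e\ge 1$ is obtained by a minimality argument: take a minimal counterexample $G$, and observe that any edge with $x_e\le 1$ could be deleted to produce a smaller counterexample.
\end{itemize}
With all four constraint families in hand, the variables $y_e=x_e-1$ are feasible for the $M_k$ LP, so $\sum_e w_e y_e\le M_k(G)$ and hence $\lmax=\sum_e w_e x_e\le W(G)+M_k(G)$. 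The $Q$ case is identical; for $\pm A$ one works with $2A$ and $-2A$ and uses the adjacency star bound in place of the Laplacian one.

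Your LP-duality calculation showing only $W(G)+k\cdot\max_e w_e$ is correct \emph{for the weaker constraint family you wrote}, but that is not the obstruction: you simply did not include the star constraints (which are not consequences of \cref{conj:L_max}) or extract the odd-set constraints via spectrum containment. Your proposed subdivision $G^\star$ with a calibrated $k^\star$ does not obviously give spectral domination in the required direction, and you yourself flag step~(i) as unresolved; I would abandon that route in favor of the matching-polytope argument.
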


We also analyze the relationship between the spectra of token graphs for different $k$. It is shown in \cite{dalfo2021} that the spectrum of $L(F_k(G))$ is contained in the spectrum of $L(F_{k+1}(G))$, i.e. $\eigs\of{L(F_k(G))} \subseteq \eigs\of{L(F_{k+1}(G))}$. However, this is not true for the signless Laplacian and adjacency matrix. This fact was known for the adjacency matrix, and we provide a counterexample for the signless Laplacian. In contrast, we do observe containment of \emph{maximum} eigenvalues, and for all $G$ and $1\le k < \floor{n/2}$ we conjecture  

\begin{restatable}{conjecture}{ConjQMonotonic}\label{conj:Q_monotonic}
    $\lmax\of{Q\of{F_{k}\of{G}}} \le \lmax\of{Q\of{F_{k+1}\of{G}}}$.
\end{restatable}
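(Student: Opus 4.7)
The plan is a test-vector argument. Let $u \ge 0$ be a Perron--Frobenius eigenvector of $Q\of{F_k\of{G}}$ with eigenvalue $\lambda_k \defeq \lmax\of{Q\of{F_k\of{G}}}$; non-negativity holds since $Q\of{F_k\of{G}}$ is entrywise non-negative. Define the raising map $R\colon \mathbb{R}^{\binom{n}{k}} \to \mathbb{R}^{\binom{n}{k+1}}$ by $\of{Ru}_A = \sum_{v \in A} u_{A \setminus \ofc{v}}$, and take the test vector $y \defeq Ru$. The conjecture would follow from the Rayleigh-quotient inequality $\ofk{y, Q\of{F_{k+1}\of{G}}\, y} \ge \lambda_k \|y\|^2$.

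The key algebraic tool is the Laplacian intertwining relation $R\, L\of{F_k\of{G}} = L\of{F_{k+1}\of{G}}\, R$ established by Dalf\'o et al.\ \cite{dalfo2021}. Writing $Q = L + 2A$ gives
\[
Q\of{F_{k+1}\of{G}}\, R - R\, Q\of{F_k\of{G}} = 2\ofb{A\of{F_{k+1}\of{G}}\, R - R\, A\of{F_k\of{G}}}.
\]
Multiplying on the left by $R^{\top}$, applying to $u$, and taking the inner product with $u$ while using $Q\of{F_k\of{G}}\, u = \lambda_k u$, the desired inequality reduces to
\[
\Delta\of{u} \defeq \ofk{u,\, R^{\top}\ofb{A\of{F_{k+1}\of{G}}\, R - R\, A\of{F_k\of{G}}}\, u} \ge 0.
\]
I would then expand $\Delta\of{u}$ either by summing over edges of $G$ and the $\of{k-1}$-configurations on them, or via the hard-core-boson picture in which $A = H_{XY}$ and $R = \sum_v b_v^\dagger$, and attempt to rewrite it as a sum of squares or an explicit non-negative combination in $u$.

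The main obstacle is that $\Delta\of{u}$ is \emph{not} non-negative in general. For $G = K_{1,3}$ --- where $\lmax\of{Q\of{F_1\of{G}}} = \lmax\of{Q\of{F_2\of{G}}} = 4$, so \cref{conj:Q_monotonic} holds with equality --- the Perron eigenvector of $Q\of{F_1\of{G}}$ is $u = \of{3, 1, 1, 1}^{\top}$, and a direct computation gives $\Delta\of{u} = -12$, so $y = Ru$ attains a Rayleigh quotient of only $3.6 < 4$. Structurally, the Perron vector of $Q\of{F_2\of{K_{1,3}}} = Q\of{C_6}$ is the uniform vector $\mathbf{1} \in \mathbb{R}^6$, which has no simple relationship to $Ru$. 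Unlike the Quantum MaxCut/Laplacian case, the EPR/signless-Laplacian operator does not commute with the total token-creation operator $R$, which is precisely the algebraic source of the obstruction.

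To circumvent this I would try a corrected test vector $y = Ru + w$ with $w \in \ker\of{R^{\top}}$ tuned to cancel the negative contribution in $\Delta\of{u}$; or, more ambitiously, construct a graph-dependent intertwiner $\tilde R$ (a polynomial in $R$, or $\tilde R = \sum_v \alpha_v b_v^\dagger$ with weights $\alpha_v$ derived from $u$) such that $Q\of{F_{k+1}\of{G}}\, \tilde R \succeq \tilde R\, Q\of{F_k\of{G}}$ in a suitable PSD sense on the span of $u$. A natural intermediate milestone is bipartite $G$: since $F_k\of{G}$ is bipartite whenever $G$ is, the bipartition sign conjugates $Q\of{F_k\of{G}}$ to $L\of{F_k\of{G}}$ and the Laplacian containment of \cite{dalfo2021} yields the conjecture immediately. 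I expect the hard part to be \textbf{identifying a correction that works uniformly across all graphs}, since the $K_{1,3}$ example shows the obstruction is non-trivial and high-symmetry cases demonstrate that the ``best'' test vector in the $\of{k+1}$-sector can be structurally very different from any linear lift of the $k$-sector Perron vector.
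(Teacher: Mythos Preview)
There is no proof in the paper to compare against: \cref{conj:Q_monotonic} is stated as an open conjecture, supported only by numerical verification on all non-isomorphic graphs up to order ten and a suite of random weighted graphs (see \cref{sec:conj/token}). The paper explicitly lists proving it among the open problems in the Discussion.

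Your proposal is therefore not a proof to be compared but a proof \emph{attempt}, and you are candid that it does not close. Your diagnosis is accurate. The raising map $R$ intertwines with the Laplacians (this is exactly \cref{lem:laplacian_spectrum_containment}, which the paper extends to weighted graphs in \cref{apx:other_app/spectrum_containment}), but not with the signless Laplacians, and your $K_{1,3}$ computation is correct: $F_2(K_{1,3}) \cong C_6$, the Perron vector of $Q(F_2)$ is uniform, and $Ru$ has Rayleigh quotient $3.6 < 4$. So the obvious test vector genuinely fails, and the bipartite case you single out as the ``easy milestone'' is indeed immediate from \cref{lem:bipartite_token_symmetry} and \cref{lem:laplacian_spectrum_containment}.

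The genuine gap is the one you name yourself: you have no candidate for the corrected test vector or modified intertwiner $\tilde R$, and no argument that such a correction exists uniformly in $G$. The suggestions (adding $w \in \ker R^{\top}$, or weighting $\tilde R = \sum_v \alpha_v b_v^\dagger$) are reasonable directions but remain speculative. In short, your write-up is a fair summary of why the conjecture resists the Laplacian-style proof, but it is not a proof, and the paper offers none either.
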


\begin{restatable}{conjecture}{ConjAMonotonic}\label{conj:A_monotonic}
    $\lmax\of{A\of{F_{k}\of{G}}} \le \lmax\of{A\of{F_{k+1}\of{G}}} $.
\end{restatable}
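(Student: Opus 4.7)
The plan is to apply Rayleigh--Ritz with an explicit test vector for $A(F_{k+1}(G))$ built from the Perron eigenvector of $A(F_k(G))$. Let $v \in \mathbb{R}_{\geq 0}^{\binom{V}{k}}$ be a nonnegative eigenvector with eigenvalue $\lambda_k = \lmax\of{A\of{F_k\of{G}}}$, which exists by Perron--Frobenius since $A(F_k(G))$ has nonnegative entries. Define the ``upward lift'' $w = Uv \in \mathbb{R}_{\geq 0}^{\binom{V}{k+1}}$ by $(Uv)(B) = \sum_{a \in B} v(B \setminus a)$. It suffices to show that the Rayleigh quotient $\langle w, A(F_{k+1}(G)) w\rangle / \langle w, w\rangle$ is at least $\lambda_k$.

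The second step is to derive the intertwining identity
\[
A\of{F_{k+1}\of{G}}\, U \;=\; U\, A\of{F_k\of{G}} + E_k,
\]
where $(E_k v)(B) = \sum_{u \in B} v(B \setminus u)\of{d_u - 2\abs{N_G(u) \cap B}}$ and $d_u = \deg_G(u)$. This follows from a direct expansion of $\of{A(F_{k+1}(G)) Uv}(B) = \sum_{B' \sim B} (Uv)(B')$, splitting the sum over hopping edges $(u,y) \in E(G)$ with $u \in B, y \notin B$ according to whether the moved token is the ``extra'' token of $B$ or not. Combined with $A(F_k(G)) v = \lambda_k v$, the identity gives $\langle Uv, A(F_{k+1}(G)) Uv\rangle = \lambda_k \|Uv\|^2 + \langle Uv, E_k v\rangle$, so the inequality reduces to proving $\langle Uv, E_k v\rangle \geq 0$. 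As a sanity check, for $G = K_n$ one computes $E_k v = (n - 2k - 1)\, Uv$, recovering the exact difference $\lmax\of{A\of{F_{k+1}\of{K_n}}} - \lmax\of{A\of{F_k\of{K_n}}} = n - 2k - 1$, nonnegative in the range $k < \floor{n/2}$.

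The principal obstacle is establishing $\langle Uv, E_k v\rangle \geq 0$ for general $G$. This inequality fails for arbitrary nonnegative $v$ (for instance, $v = e_1$ on the star $K_{1,3}$ yields a negative value), so any proof must genuinely use the eigenvalue equation $A(F_k(G)) v = \lambda_k v$. My plan is to expand $\langle v, U^T E_k v\rangle$ via the substitution $B = A \cup b$ on $\binom{V}{k+1}$, use the eigenvalue equation to cancel the hopping contributions, and reduce to a quadratic form in the ``shadow'' quantities $\sigma_v(A') = \sum_{A' \subset B,\, |B| = k} v(B)$ indexed over $A' \in \binom{V}{k-1}$, then seek a manifestly nonnegative expression via Cauchy--Schwarz or a sum-of-squares decomposition that leverages the positivity of $v$. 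A complementary route is to adapt the Laplacian intertwining of Dalfó et al., used to prove $\eigs\of{L\of{F_k\of{G}}} \subseteq \eigs\of{L\of{F_{k+1}\of{G}}}$, to the adjacency setting, where one only needs monotonicity of $\lmax$ rather than full spectral containment.
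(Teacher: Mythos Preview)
This statement is presented in the paper as an \emph{open conjecture}; the paper does not prove it, cites it as also appearing in \cite[Conjecture 3.3]{reyes2024}, and explicitly lists proving it among the open problems in the Discussion. There is therefore no ``paper's own proof'' to compare your proposal against.

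Your attack is the natural one---lift the Perron eigenvector through the same operator $U$ that intertwines the token-graph Laplacians in \cref{lem:laplacian_spectrum_containment}---but it is explicitly incomplete. You correctly derive that the intertwining acquires a correction term $E_k$ in the adjacency setting, and you correctly isolate the entire difficulty as the sign condition $\langle Uv, E_k v\rangle \ge 0$. From that point on, however, the proposal is a list of hopes (Cauchy--Schwarz, a sum-of-squares in the shadow quantities, ``adapting'' the Laplacian argument) rather than an argument. None of these is carried out, and you already observe that the inequality fails for generic nonnegative $v$, so any proof must use the eigenvalue equation in a nontrivial way that you have not yet identified. The suggestion to adapt the Laplacian proof is unlikely to work directly: the paper notes that full spectral containment $\eigs\of{A\of{F_k\of{G}}}\subseteq\eigs\of{A\of{F_{k+1}\of{G}}}$ is \emph{false} (already for $K_4$), which is precisely why the Dalfó et al.\ mechanism does not transfer. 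Until the sign of $\langle Uv, E_k v\rangle$ is established, what you have is a reduction of one open inequality to another, not a proof.
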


\cref{conj:A_monotonic} also appears in \cite[Conjecture 3.3]{reyes2024}. We disprove a corresponding conjecture for the minimum eigenvalue of the adjacency matrix by finding a counterexample where $\lmin\of{A\of{F_{k-1}\of{G}}} \ngeq \lmin\of{A\of{F_k\of{G}}}$.

All the presented conjectures were verified by checking all non-isomorphic unweighted graphs and a suite of weighted graphs, both up to order ten. Our code is available at \cite{apte2025a}.

\subsection{A quantum perspective}
A central task in quantum computation is to compute the maximum energy of local Hamiltonians. This is not an easy task: even for $2$-local Hamiltonians, deciding if the maximum energy is above some threshold is $\QMA$-hard \cite{kempe2005, piddock2015}. In this work, we study three specific families of $2$-local Hamiltonians defined on arbitrary weighted graphs. The first two were introduced as Quantum MaxCut (QMC) and the XY Hamiltonian (XY) in \cite{gharibian2019}. The minimization versions of these problems have been studied as the antiferromagnetic quantum Heisenberg $X\!X\!X_{1/2}$ and $XY_{1/2}$ model in statistical mechanics. The third problem is the EPR Hamiltonian (EPR), recently introduced in \cite{king2023}. The minimization version of EPR is the ferromagnetic quantum Heisenberg $X\!X\!Z_{1/2}$ model in statistical mechanics. While QMC and XY are $\QMA$-complete, EPR is in $\StoqMA$ \cite{piddock2015}.

For these Hamiltonians, a key measure of algorithm performance is the \emph{approximation ratio} $\alpha$, defined as the minimum ratio between the energy achieved by the algorithm and the maximum possible energy over all graphs
\begin{equation}
    \alpha = \min_G \,\frac{ALG(G)}{\lambda_{max}(H(G))}\;.
\end{equation}

Importantly, this ratio depends both on the algorithm's performance and our knowledge of the maximum energy. There are two approaches to improving approximation ratios: developing better algorithms that achieve higher energies, and establishing tighter upper bounds on the maximum energy. In this work, we focus primarily on the latter approach, showing how improved upper bounds on maximum energies directly yield better approximation ratios even without changing the underlying algorithms.

There has been a recent surge in research aimed at finding efficient approximation algorithms for QMC, XY, and EPR \cite{gharibian2019, anshu2020, anshu2021, parekh2022,lee2022,king2023,lee2024,huber2024, takahashi2023, kannan2024, marwaha2024, ju2025, gribling2025, apte2025}. Many of these works obtain upper bounds via semi-definite programming (SDP) relaxations of the maximization problem \cite{mazziotti2004, gharibian2019, parekh2021, takahashi2023, watts2024,huber2024}. These relaxations generally are given by hierarchies of SDPs. One popular choice is quantum moment-SoS hierarchy, typically based on Pauli operators. This hierarchy is an instance of the NPA hierarchy~\cite{navascues2008} and is also known as the quantum Lasserre hierarchy~\cite{parekh2021}. Tighter upper bounds may be obtained by higher levels of the hierarchy. However, higher levels correspond to larger SDPs, which become increasingly more complicated to solve and to analyze.

Our work augments SDP upper bounds by exploiting graph-theoretic results. It is known \cite{osborne2006,audenaert2007,ouyang2019} that the QMC and XY Hamiltonians can be expressed as direct sums of the Laplacian and adjacency matrices, respectively, of the token graphs $F_k(G)$ for $k\in\ofc{0,1,\ldots,n}$. We show that the EPR Hamiltonian can similarly be expressed as a direct sum of the \emph{signless Laplacian} matrices of $F_k(G)$. This representation immediately reveals that maximum energy of QMC is at most that of EPR. More importantly, the equivalence allows us to translate our above conjectures on the spectral radii of token graphs into tighter combinatorial upper bounds for the maximum energies of QMC, XY, and EPR. 

\begin{restatable}{lemma}{hBounds}\label{lem:h_bounds}
    For all graphs $G$, the following implications hold
    \begin{align}
        \cref{conj:L_max}&\implies\lmax\of{H^{QMC}(G)} \le W(G) + M(G) \, . \label{eq:qmc_bound}\\
        \cref{conj:Q_max}&\implies\lmax\of{H^{EPR}(G)} \le W(G) + M(G) \, . \label{eq:epr_bound}\\
        \cref{conj:A_min}&\implies\lmax\of{H^{XY}(G)} \le W(G)+ \frac{M(G)}{2} \, .\label{eq:xy_max_bound}\\
        \cref{conj:A_max}&\implies\lmin\of{H^{XY}(G)} \ge -\frac{M(G)}{2}\,. \label{eq:xy_min_bound}
    \end{align}
\end{restatable}

Throughout this work, $M(G)$, the maximum weight matching, plays an integral role. The relation between the maximum energies of 2-local Hamiltonians and matchings has been observed \cite{lee2024}, which observed that the per-edge energy of QMC on any state  obeys \emph{some} constraints of the linear programming relaxations for maximum matchings. Our improved bounds \cref{lem:h_bounds} rely on the conjecture that \emph{all} of the matching LP constraints must be met.

Furthermore, we show how to augment \emph{any} convex relaxation, including recently used quantum moment-SoS SDP hierarchies, to obey the bounds \cref{lem:h_bounds} as well. This may be surprising, since such arguments typically require new or stronger proofs showing that a Hamiltonian eigenvalue bound translates to a relaxation, and we only have conjectured eigenvalue bounds. We introduce \emph{matching-augmented} convex relaxations that iteratively add constraints derived from the matching polytope to the underlying convex relaxation. We only require that such constraints are valid, which follows from our conjectures. While there are an exponential number of such constraints (roughly one for each odd-order subgraph of the original graph), we appeal to a separation oracle to efficiently identify such violated constraints. Consequently, the ellipsoid method gives a polynomial-time algorithm to find augmented SDP solutions obeying the $W(H) + M(H)$ bound for any subgraph $H$ of $G$, which can be directly used for approximation algorithms. 

These upper bounds in turn imply improved approximation ratios from existing algorithms. We summarize these approximation ratios in \cref{tab:intro_approx}.
\begin{table}[H]
    \centering
    \begin{tabular}{c||c|c|c}
          & Existence $\alpha'$ & Efficient $\alpha$ & State-of-art\\
        \hline
        \hline
         QMC &  $5/8 = 0.625$ & $0.614$ & $0.611$ \cite{apte2025}  \\
        \hline
         XY &  $5/7 \approx 0.714$ & $0.674$ & $0.649$ \cite{gharibian2019} \\
         \hline
         EPR & $\frac{1+\sqrt{5}}{4} \approx 0.809$ & $0.809$ & $0.809$ \cite{ju2025}
    \end{tabular}
    \caption{Approximation ratios implied by our conjectures. Existence $\alpha'$ denotes that there exists a tensor product of $1$ and $2$-qubit states achieving approximation ratio $\alpha'$. Efficient $\alpha$ denotes that there exists an efficient algorithm achieving approximation ratio $\alpha$. State-of-art is the current best approximation ratio at the time of the publication of this work.}
    \label{tab:intro_approx}
\end{table}

All of these algorithms prepare simple tensor products of $1$ and $2$-qubit states. The efficient algorithm for QMC is based on the algorithm of \cite{apte2025}, where we augment the second order SDP algorithm with a matching-based separation oracle. The algorithm for EPR is similar to that of \cite{ju2025, apte2025}. The rest of the algorithms are simple modifications of the algorithm in \cite[Lemma 2]{anshu2020}. The approximation ratios for EPR match the state of the art approximation ratios from \cite{ju2025, apte2025}, which require more globally entangled states. 

\paragraph{Energy bounds for physical local Hamiltonians} From a condensed matter perspective, the bounds on the maximum eigenvalues of $2$-local Hamiltonians can be interpreted as lower bounds on the ground state energy of the corresponding physical systems by a sign flip and ignoring identify terms. For instance, maximizing the QMC Hamiltonian on a graph $G$ corresponds to minimizing the antiferrogmagnetic Heisenberg model (AFHM) on the same graph,
\begin{align*}
H^{AFHM} \defeq \sum_{(i,j) \in E} \frac{w_{ij}}{2}(X_iX_j+Y_iY_j+Z_iZ_j).
\end{align*}
\Cref{conj:L_max} implies the following bound through \Cref{eq:qmc_bound}:
\begin{align*}
\lmin\of{H^{AFHM}} \geq -\frac{W(G)}{2} - M(G).
\end{align*}
This bound generalizes known tight bounds (e.g., on unweighted stars) and to the best of our knowledge has not been previously proposed.

\paragraph{Entanglement bounds}
Our conjectured bounds also suggest a deeper connection between matchings and entanglement. The pairwise concurrence (e.g., Equation (1) in \cite{zhang2006}) of a state $\ket{\psi}$ on qubits $i$ and $j$ is 
\begin{align*}
c_{ij} \defeq \max\{0,-\bra{\psi}\text{SWAP}_{ij}\ket{\psi}\} = \max\{0,\bra{\psi}h^{QMC}_{ij}\ket{\psi}-1\},
\end{align*}
where 
\begin{align*}
h^{QMC}_{ij} \defeq \frac{1}{2}(I-X_iX_j-Y_iY_j-Z_iZ_j)
\end{align*}
is a term of the QMC Hamiltonian, $H^{QMC} \defeq \sum_{(i,j)\in E} h^{QMC}_{ij}$. A product state can earn energy at most $1$ on a single term $h^{QMC}_{ij}$. Consequently if $c_{ij} > 0$, there must be some entanglement on $(i,j)$, and $c_{ij} \in [0,1]$ gives a measure of this. Under \Cref{conj:L_max}, \Cref{eq:qmc_bound} implies a new connection between pairwise concurrence and matchings:
\begin{align}\label{eq:concurrence_bound}
    \sum_{(i,j) \in E} w_{ij} c_{ij} \leq M(G),
\end{align}
for all states $\ket{\psi}$ and weighted graphs $G$. The above follows from \Cref{eq:qmc_bound} by considering the subgraph $H$ of $G$ with edge set $E(H) \defeq \{(i,j) \in E \mid \bra{\psi}h^{QMC}_{ij}\ket{\psi} \geq 1\}$:
\begin{align*}
    \sum_{(i,j) \in E(G)} w_{ij} c_{ij} = \sum_{(i,j) \in E(H)} w_{ij} c_{ij} = \bra{\psi}H^{QMC}(H)\ket{\psi} - W(H) \leq M(H) \leq M(G),
\end{align*}
where the equalities follow from the definition of concurrence, the first inequality follows from \Cref{eq:qmc_bound}, and the second inequality holds because $H$ is a subgraph of $G$. In fact \Cref{eq:concurrence_bound} is equivalent to $(c_{ij})_{(i,j)\in E}$ being a convex combination of matchings. 
This connection helps explain why matchings have played a critical role in recent approximation algorithms for QMC, EPR, and XY.

\subsection{Our contributions}
We now collect and summarize our contributions.
\begin{itemize}
    \item We provide a set of new conjectures on the extremal values of the eigenspectra of unweighted token graphs and verify these conjectures for all non-isomorphic graphs up to order ten (\cref{conj:L_max,conj:Q_max,conj:A_max,conj:A_min}).
    \item We show that these conjectures imply novel combinatorial bounds on the extremal values of the eigenspectra of \emph{weighted} token graphs in terms of maximum weight matchings (\cref{lem:token_reductions}). Our proofs of this connection rely critically on polyhedral descriptions of matchings and related problems (\cref{apx:token_reductions}, \cref{apx:ham_reductions}).
    \item We show that by leveraging the equivalence between commonly-studied $2$-local Hamiltonians and eigenspectra of token graphs, the conjectures also imply combinatorial upper bounds on the maximum energies of QMC, XY, and EPR (\cref{lem:h_reductions}).
    \item We show how to efficiently augment any polynomial-time solvable convex relaxation for the $2$-local Hamiltonian problems with additional matching-based constraints (\cref{sec:imp/matching_sos}). Assuming the conjectures hold, this generically enables efficient convex relaxations to also obey upper bounds identical to our spectral bounds, which is important for designing approximation algorithms. 
    \item We show that these upper bounds would tighten the analysis of existing algorithms for QMC, XY, and EPR, leading to either improved approximation ratios and improved certificates from tensor products of $1$ and $2$-qubit states (\cref{sec:imp}).   
    \item The upper bounds for QMC, EPR, and XY have a  condensed matter interpretation as lower bounds on the ground state energy of the corresponding models in statistical mechanics. A stronger version of our conjectures would imply deeper connections between matchings and entanglement in $2$-local Hamiltonians.
\end{itemize}

We visualize our conjectures and their implications in \cref{fig:conjecture_dag}.
\begin{figure}[H]
    \centering
    \includegraphics[width=1\linewidth]{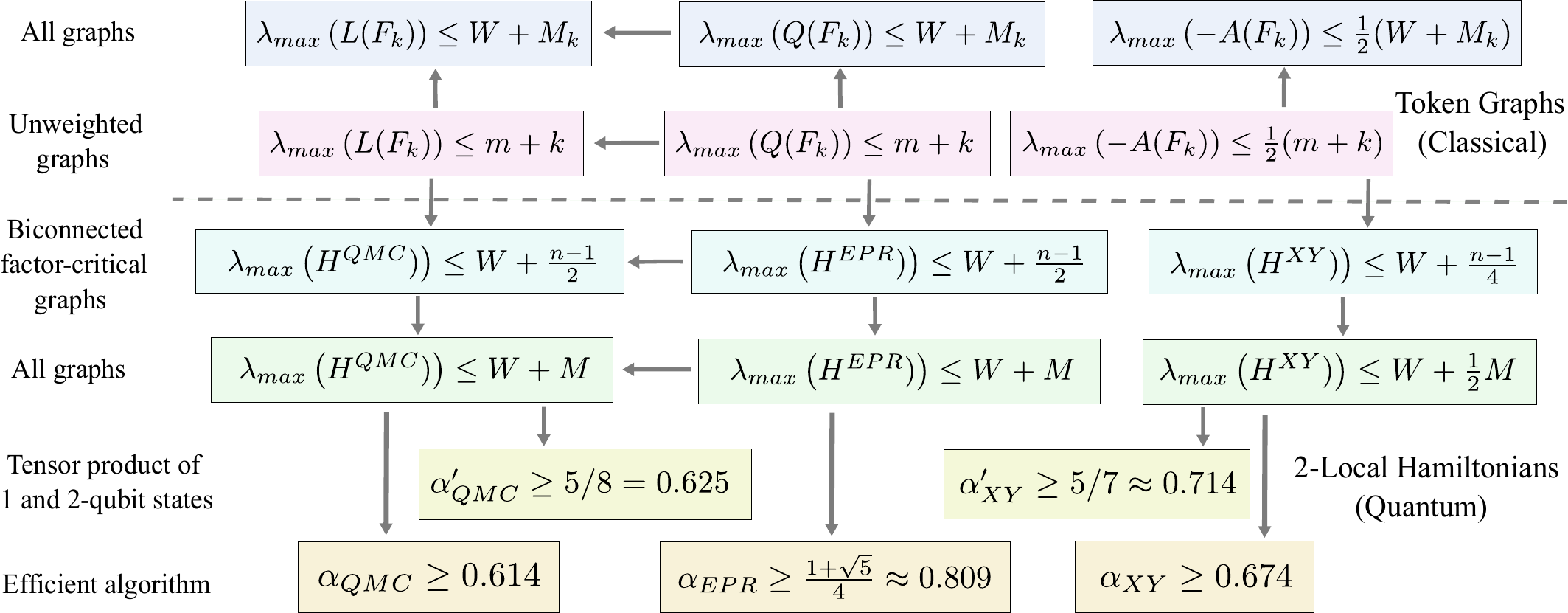}
    \caption{Conjectures in this work and their implications. Equations above the dashed line strictly concern token graphs, while equations below the dashed line strictly concern quantum Hamiltonians. We omit the argument $\of{G}$ in all cells for ease of notation. Following our convention, $\alpha'$ denotes the approximation ratio achieved by a tensor product of $1$ and $2$-qubit states, regardless of whether or not this tensor product can be found efficiently, and $\alpha$ denotes the approximation ratio achieved by an efficient algorithm.}\label{fig:conjecture_dag}
\end{figure}

\paragraph{Note added:}
\cref{conj:L_max,conj:Q_max,conj:A_max,conj:A_min} were proven by \cite{bakshi2026a} after release of this preprint. Thus, all statements in \cref{fig:conjecture_dag} are true for all graphs. Furthermore, we have corrected a small error in the approximation algorithm for QMC from the first version.

\section{Preliminaries}\label{sec:prelim}

\subsection{Graph theory}\label{sec:prelim/graph}
Let $G(V,E,w)$ denote a simple graph with vertex set $V$, edge set $E$, and positive edge weights $w: E \rightarrow \mathbb{R}_+$. In this work, we always take $V=\ofb{n} \defeq \{1,2,\dots,n\}$ and $m\defeq \abs{E}$. We let $W(G) \defeq \sum_{(i,j) \in E} w_{ij}$. When $G$ is inferred by context, we simply write $W$. Define $N(v)$ as the set of neighbors of a vertex $v \in V$. For convenience, we index edges by either $e$ or $(i,j)$, or simply $ij$ in subscripts.

Given a subset $S\subseteq V$, let $G(S)$ ($E(S)$) denote the graph (edges) \emph{induced} by the vertices in $S$. Let $\delta(S)$ be the set of edges in $G$ that cross from $S$ to $V \setminus S$, and $Cut(S)$ be the sum of the weights of the edges $\delta(S)$. We denote the maximum of $Cut(S)$ over all $S$ by $C(G)$, i.e., $C(G)$ is the solution to the Maximum Cut (MaxCut) problem on $G$. We sometimes also use $C(G)$ to denote the edges that are cut in the maximum cut. When the graph is clear from context we drop the argument $G$. Computing the maximum cut of a graph in general is $\NP$-hard \cite{khot2007}. 

An \emph{matching} of a graph $G(V,E,w)$ is a subset of edges $R \subseteq E$ such that the number of edges in $R$ incident to any vertex $i$ is at most one. The \emph{incidence vector} of a matching is a vector in $\ofc{0,1}^E$ that is $1$ if $e \in M$ and $0$ otherwise.
The \emph{weight} of a matching $R$ is defined as as the sum of the weights of edges in the matching. We define $M(G)$ to be the maximum total weight of any matching of $G$. For convenience, we sometimes let $M(G)$ also denote the set of edges in the maximum weight matching. Let $M_k(G)$ denote the maximum weight matching under the restriction that $|M|\leq k$. When the graph is clear from context we sometimes drop the argument $G$. Both $M(G)$ and $M_k(G)$ can be computed in polynomial time~\cite{edmonds1965, araoz1983}.

Let $D(G)$ and $A(G)$ denote the degree matrix and the adjacency matrix of a graph $G$. Then, let $L(G)=D(G)-A(G)$ and $Q(G)=D(G)+A(G)$ denote the Laplacian and signless Laplacian matrices of $G$. Let $\lmax(\cdot), \lmin(\cdot)$, and $\eigs(\cdot)$ describe the maximum eigenvalue, minimum eigenvalue, and the set of unique eigenvalues of a matrix. Occasionally, we refer to $\lmax\of{L\of{G}}$, $\lmax\of{Q\of{G}}$, and $\lmax\of{A\of{G}}$ as the \emph{spectral radii} of $G$.

Let $C_n$, $P_n$, and $K_n$ denote the unweighted cycle, path, and complete graphs on $n$ vertices, respectively. Let $S_m$ denote the unweighted star graph with $m$ edges (and $m+1$ nodes). We let $K_{a,b}$ denote the complete bipartite graph with partition sizes $a$, $b$. We now define token graphs using the notation introduced in \cite{fabila-monroy2012}

\begin{definition}[Token graphs]\label{def:token_graphs}
    Given a graph $G(V,E,w)$ and some integer $1\le k < n$, let the $k$\emph{-th token graph} $F_k(G)$ be a weighted simple graph defined as follows:
    \begin{itemize}[label=\raisebox{0.4ex}{\scalebox{0.75}{$\bullet$}}]
        \item Vertices: vertices are $\binom{\ofb{n}}{k}$, the set of $k$-tuples of the set $\ofb{n}$, which contains $\binom{n}{k}$ elements.
        \item Edges: vertices $A$ and $B$ are connected by an edge if and only if their symmetric difference $A\triangle B =\ofc{a,b}$, where $a\in A$, $b\in B$, and $(a,b)\in E$.
        \item Weights: the weight of edge $(A,B)$ is $w_{ab}$, where $a$ and $b$ are defined as above.
    \end{itemize}
\end{definition}

Note that $F_k(G) \cong F_{n-k}(G)$. As such the unique token graphs of some graph $G$ arise from $1\le k \le \floor{n/2}$. We show an example of the three unique token graphs for the graph $P_6$ in \cref{fig:path_token_graphs}.  Previous literature on token graphs has been limited to the unweighted case \cite{dalfo2021,reyes2023,ouyang2019,lew2024, dalfo2025}. In this work, we also introduce and study token graphs of \emph{weighted} graphs.

\begin{figure}[H]
    \centering
    \includegraphics[width=0.75\linewidth]{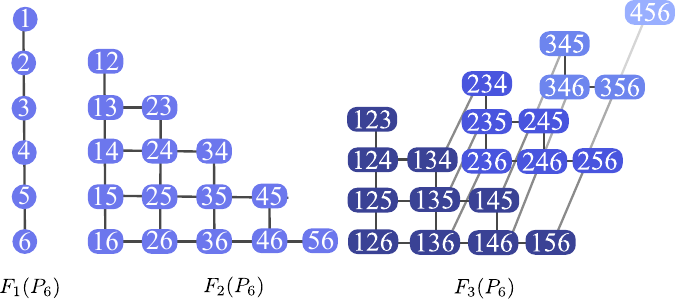}
    \caption{The three unique token graphs of the six node path $P_6$. The white numbers inside the token graph vertices denote subsets of the vertices of $P_6$, concatenated into a single string of integers for conciseness. Note that $F_1(P_6) \cong F_5(P_6)$, $F_2(P_6) \cong F_4(P_6)$, and $F_3(P_6) \cong F_4(P_6)$. The shading on $F_3(P_6)$ is to emphasize that the graph is a subgraph of a three-dimensional grid, with different vertical layers corresponding to different shades.} 
    \label{fig:path_token_graphs}
\end{figure}

\subsection{Quantum computation}\label{sec:prelim/quantum}
The \emph{Bell basis} consists of the four maximally entangled two-qubit states, given by
\begin{align*}
    \ket{\phi^+} &= \frac{1}{\sqrt{2}} \left( \ket{00} + \ket{11} \right), \quad\ket{\phi^-} = \frac{1}{\sqrt{2}} \left( \ket{00} - \ket{11} \right), \\
    \ket{\psi^+} &= \frac{1}{\sqrt{2}} \left( \ket{01} + \ket{10} \right), \quad\ket{\psi^-} = \frac{1}{\sqrt{2}} \left( \ket{01} - \ket{10} \right).
\end{align*}
We refer to $\ket{\phi^+}$ as the \emph{EPR pair} and to $\ket{\psi^-}$ as the \emph{singlet state}. We refer to the single qubit maximally mixed state as $\rho_{mix}$.

Given a graph $G(V, E, w)$ and a $2$-local Hamiltonian term $h$, define the $n$-qubit Hamiltonian 
\begin{align}\label{eq:2_local_ham_defn}
    H(G) \defeq \sum_{(i,j) \in E(G)} w_{ij} \cdot h_{ij}\,,
\end{align}
where $h_{ij}$ is the tensor product of the local term $h$ applied on qubits $(i,j)$ with the identity on all remaining qubits. We denote by $\lmax\of{H(G)}$, $\lmin\of{H(G)}$, $\eigs\of{H(G)}$ the maximum, minimum, and set of unique eigenvalues of a Hamiltonian $H(G)$.

We consider three choices of local terms, which we refer to as Quantum MaxCut (QMC), the EPR Hamiltonian (EPR), and the XY Hamiltonian (XY)
\begin{align}
    h^{QMC}_{ij} &\defeq \frac{1}{2} \left( I_i I_j - X_iX_j - Y_iY_j - Z_iZ_j \right) = 2 \,\ket{\psi_-}_{ij}\bra{\psi_-}_{ij}\,, \label{eq:qmc_defn}\\
    h^{XY}_{ij}  &\defeq \frac{1}{2} \left(I_iI_j  -X_iX_j - Y_iY_j \right) = \frac{1}{2} \, I_i I_j +  \,\ket{\psi_-}_{ij}\bra{\psi_-}_{ij}-\,\ket{\psi_+}_{ij}\bra{\psi_+}_{ij}\,,\label{eq:xy_defn} \\
    h^{EPR}_{ij} &\defeq \frac{1}{2} \left( I_i I_j + X_iX_j - Y_iY_j + Z_iZ_j\right) = 2 
    \ket{\phi_+}_{ij}\bra{\phi_+}_{ij}\,.\label{eq:epr_defn}
\end{align}
It is clear that terms of QMC and EPR are positive semidefinite (PSD) rank-$1$ projectors onto Bell basis states. In particular, QMC projects onto the \emph{singlet} state and EPR projects onto the EPR pair. In contrast, terms of XY are rank-$2$ projectors onto the Bell basis, offset by a constant, and are \emph{not} PSD.

\subsection{Approximation algorithms}\label{sec:prelim/apx}
We judge an algorithm by its \emph{approximation ratio}. We follow the definition in \cite{gharibian2019} for $2$-local Hamiltonians. Suppose we can find an efficiently computable upper bound $\lmax(H(G)) \leq u(G)$ for any graph $G$. Then, if we have some algorithm $ALG$ which obtains energy $ALG(G)$ on a graph $G$, the approximation ratio $\alpha$ is at least
\begin{align*}
    \alpha \ge \min_G \frac{ALG(G)}{\lmax(H(G))} \ge  \min_G \frac{ALG(G)}{u(G)}\,.
\end{align*}
In particular, the approximation is well-defined if $ALG(G)\geq0$, $\lmax(H(G))>0$ for all $G$. This can be easily confirmed; the maximally mixed state obtains energy $ W(G)/2 > 0$ for all Hamiltonians considered in \cref{sec:prelim/quantum}. 

\section{Hamiltonian and token graph equivalence}\label{sec:equivalence}
We now formalize the connection between the $2$-Local Hamiltonians introduced in \cref{sec:prelim/quantum} and token graphs as described \cref{sec:prelim/graph}. The main technique in showing the equivalence is to write each Hamiltonian $H(G)$ in a block-diagonal form, and to demonstrate that each block is equivalent to a Laplacian, adjacency, or signless Laplacian matrix on a token graph of $G$.

\subsection{QMC and the Laplacian matrix}\label{sec:equivalence/qmc}
We start by considering QMC, where the connection has already been observed in \cite{osborne2006, ouyang2019}. We first expand \cref{eq:qmc_defn} in the computational basis

\begin{align}\label{eq:qmc_computational_basis}
    H^{QMC}(G) = \sum_{(i,j) \in E} \of{\ket{01}\bra{01}}_{ij} +  \of{\ket{10}\bra{10}}_{ij} -  \of{\ket{01}\bra{10}}_{ij} -  \of{\ket{10}\bra{01}}_{ij}.
\end{align}

In this form, we see that $H^{QMC}(G)$ preserves Hamming weight in the computational basis. Thus, the Hamiltonian can be block-diagonalized into $n+1$ blocks, each corresponding to a Hamming weight-$k$ sector for $k=\ofc{0,1,2,\ldots,n}$. We may then analyze the action of $H^{QMC}(G)$ in a fixed Hamming weight sector. For a subset of vertices $X \subseteq V$ of size $k$, let $x\in \ofc{0,1}^n$ be the length-$n$ bitstring where $x_i = 1$ if and only if $i \in X$. Let $\ket{x}$ be the quantum state encoding bitstring $x$ in the computational basis. Let $h(x)$ denote Hamming weight of $x$. Then, the $\binom{n}{k}$ states $\ofc{\ket{x}\!:  x\in \ofc{0,1}^n, h(x)=k}$ span the Hilbert space of $n$-qubit states with fixed Hamming weight $k$. With this notation, the action \cref{eq:qmc_computational_basis} on a fixed basis state $\ket{x}$ can be expressed as
\begin{align*}
    H^{QMC}(G) \ket{x} = \sum_{(i,j)\in E} w_{ij}\,\mathbf{1}\{x_i \neq x_j\}\big(\ket{x}\bra{x} - \ket{x^{(i,j)}}\bra{x}\big)\,, 
\end{align*}
where $x_i$, $x_j$ denote the bits of $x$ at index $i$ or $j$, and $x^{(i,j)}$ denotes the bitstring $x$ with $x_i$ and $x_j$ interchanged. We may then compute the inner products
\begin{align}\label{eq:qmc_overlaps}
    M^{QMC}_{x,y}(G)\defeq\braket{y|H^{QMC}(G)|x} = \begin{cases}
        \sum_{(i,j)\in E} \, w_{ij} \,\mathbf{1}\ofc{x_i \ne x_j}, & y=x,\\
        -w_{ij}  & y = x^{(i,j)}.
    \end{cases}
\end{align}

In this form, we can see a diagonal entry $M^{QMC}_{x,x}(G)$ is the sum of the weights of edges $(i,j)$ for which $x_i \neq x_j$. The off-diagonal elements are nonzero if and only if the symmetric difference between the sets $X$ and $Y$ corresponding to $x$ and $y$ is some pair $(i,j)\in E$. If this is the case, the edge in the token graph has weight $-w_{ij}$. By referring to \cref{def:token_graphs}, we can see that this inner product matrix $M^{QMC}_{x,y}(G)$ corresponds exactly to the Laplacian matrix of the \emph{$k$-th} token graph of $G$. This result holds for any $k$, so the matrix $H^{QMC}$ in the computational basis consists of $(n+1)$ blocks indexed by Hamming weight $k$, where each block corresponds to the Laplacian $L(F_k(G))$. 

\Needspace{10\baselineskip}
\begin{factbox}
\begin{fact}\label{fact:qmc_L_equivalence}
    For any graph $G$, the QMC Hamiltonian $H^{QMC}(G)$ is equivalent to a direct sum over the Laplacian matrices $L(F_k(G))$ of the token graphs $(F_k(G))$ for $0 \leq k \leq n$. Furthermore, we have
    \begin{align*}
        \eigs(H^{QMC}(G)) = \bigcup_{0 \leq k \leq \floor{\frac{n}{2}}} \eigs(L(F_k(G)).
    \end{align*}
\end{fact}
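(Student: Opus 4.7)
The proof is essentially already set up by the preceding computation in the excerpt, so the plan is to assemble the observations already made and then take care of the symmetry that restricts the union to $k \le \lfloor n/2 \rfloor$.

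First, I would formalize the block-diagonalization. Each term $h^{QMC}_{ij}$ in the computational basis is supported on the two-dimensional subspace spanned by $\ket{01}_{ij}$ and $\ket{10}_{ij}$ and acts as identity minus swap there, so it commutes with the total Hamming weight operator $\sum_i (I-Z_i)/2$. Consequently $H^{QMC}(G)$ commutes with Hamming weight, and the computational basis splits the Hilbert space into sectors $\mathcal{H}_k$ of dimension $\binom{n}{k}$ for $k = 0,1,\ldots,n$, each invariant under $H^{QMC}(G)$.

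Next, I would identify each sector with the vertex set of $F_k(G)$ via $X \mapsto \ket{x}$ (where $X \subseteq V$ has $|X|=k$ and $x \in \{0,1\}^n$ is its indicator) and verify that the matrix of $H^{QMC}(G)$ restricted to $\mathcal{H}_k$ equals $L(F_k(G))$. This is exactly the content of the already-computed formula for $M^{QMC}_{x,y}(G)$: the diagonal entry $\sum_{(i,j)\in E} w_{ij}\mathbf{1}\{x_i \ne x_j\}$ equals the sum of weights of edges incident to $X$ in $F_k(G)$ (since a token-graph edge at vertex $X$ exists precisely for each $G$-edge with one endpoint in $X$ and one in $V\setminus X$), matching the diagonal of $D(F_k(G))$; and the off-diagonal entry is $-w_{ij}$ exactly when $X \triangle Y = \{i,j\}$ with $(i,j) \in E$, matching $-A(F_k(G))$. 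Hence the restriction coincides with $D(F_k(G)) - A(F_k(G)) = L(F_k(G))$, proving the block-diagonal form.

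Finally, for the eigenvalue statement, the block structure gives $\eigs(H^{QMC}(G)) = \bigcup_{k=0}^{n} \eigs(L(F_k(G)))$ immediately. To shrink the index range to $0 \le k \le \lfloor n/2 \rfloor$, I would invoke the isomorphism $F_k(G) \cong F_{n-k}(G)$ already noted after \cref{def:token_graphs} (concretely, the complementation map $X \mapsto V \setminus X$ is a weight-preserving graph isomorphism, since $X \triangle Y = (V\setminus X) \triangle (V \setminus Y)$), which forces $\eigs(L(F_k(G))) = \eigs(L(F_{n-k}(G)))$ and eliminates the redundant sectors.

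I do not anticipate a genuine obstacle: the only step requiring care is the bookkeeping that links the diagonal entry $\sum w_{ij}\mathbf{1}\{x_i \neq x_j\}$ to the degree of $X$ in $F_k(G)$, i.e.\ verifying that an edge of $F_k(G)$ incident to $X$ is in bijection with an edge of $G$ cut by $(X, V \setminus X)$, each carrying the same weight $w_{ij}$.
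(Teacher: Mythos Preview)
Your proposal is correct and follows essentially the same approach as the paper: block-diagonalize by Hamming weight, identify each block with $L(F_k(G))$ via the computed matrix entries $M^{QMC}_{x,y}$, and then invoke the isomorphism $F_k(G)\cong F_{n-k}(G)$ to restrict the union to $k\le\lfloor n/2\rfloor$. The only addition is that you spell out the commutation-with-Hamming-weight and the degree bookkeeping a bit more explicitly than the paper does, but there is no substantive difference.
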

\end{factbox}

\subsection{XY and the adjacency matrix}\label{sec:equivalence/xy}
We now consider the XY Hamiltonian. Expressing the Hamiltonian \cref{eq:qmc_defn} in the computational basis, we obtain 

\begin{align*}
    H^{XY}(G) = \sum_{(i,j) \in E} \frac{1}{2}I_{ij} - \,\of{\ket{01}\bra{10}}_{ij} -  \of{\ket{10}\bra{01}}_{ij}.
\end{align*}

This is exactly equivalent to the second two terms of \cref{eq:qmc_computational_basis}, shifted by the identity. Thus, we can borrow the analysis for QMC, dropping suitable terms. Following the notation for QMC, we arrive at 

\begin{align*}
    H^{XY}(G) \ket{x} = \frac{W(G)}{2}I-\sum_{(i,j)\in E} \mathbf{1}\{x_i \neq x_j\}\, \ket{x^{(i,j)}}\bra{x}\,. 
\end{align*}

We may again define the inner products

\begin{align*}
    M^{XY}_{x,y}\defeq\braket{y|\of{H^{XY}-\frac{W(G)}{2}I}|x} = \begin{cases}
        0, & y=x,\\
        -w_{ij}  & y = x^{(i,j)}.
    \end{cases}
\end{align*}

This overlap matrix is exactly equivalent to that of \cref{eq:qmc_overlaps}, albeit without the diagonal terms. By definition, the off-diagonal of a Laplacian is the negative of the adjacency matrix.

\Needspace{10\baselineskip}
\begin{factbox}
\begin{fact}\label{fact:xy_A_equivalence}
    The XY Hamiltonian $H^{XY}(G)$ is equivalent to a direct sum over scaled and shifted adjacency matrices $\frac{W}{2}I-A(F_k(G))$ of the token graphs $(F_k(G))$ for $0 \leq k \leq n$. Furthermore, we have
    \begin{align*}
        \eigs(H^{XY}(G)) = \bigcup_{0 \leq k \leq \floor{\frac{n}{2}}} \eigs\of{\frac{W}{2} I - A\of{F_k\of{G}}}.
    \end{align*}
\end{fact}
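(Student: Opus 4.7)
The plan is to mirror the QMC argument of \cref{fact:qmc_L_equivalence} almost verbatim. Since $H^{XY}(G)$ consists of the identity term $\frac{1}{2}I_{ij}$ together with the two off-diagonal projector components $\ket{01}\bra{10}_{ij}$ and $\ket{10}\bra{01}_{ij}$, which swap the bits at positions $i,j$ only when they differ, $H^{XY}(G)$ also preserves Hamming weight in the computational basis. Hence it block-diagonalizes into $n+1$ blocks indexed by $k \in \{0,1,\ldots,n\}$. The identity shift $\frac{W}{2}I$ commutes with this block structure and adds the constant $\frac{W}{2}$ to each block, so it suffices to identify the blocks of $H^{XY}(G) - \frac{W}{2}I$ with $-A(F_k(G))$.

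First I would invoke the overlap computation $M^{XY}_{x,y}$ stated immediately above the fact, which gives $\braket{y|(H^{XY}-\frac{W}{2}I)|x}=0$ on the diagonal and $-w_{ij}$ when $y = x^{(i,j)}$ for an edge $(i,j) \in E$ (and zero otherwise). Restricting to the Hamming-weight-$k$ sector, the basis $\{\ket{x}:h(x)=k\}$ is indexed by the vertex set $\binom{[n]}{k}$ of $F_k(G)$, and the condition $y=x^{(i,j)}$ with $(i,j)\in E$ is exactly the adjacency condition in \cref{def:token_graphs}, with matching weight $w_{ij}$. Therefore the block of $H^{XY}(G)-\frac{W}{2}I$ in the weight-$k$ sector is precisely $-A(F_k(G))$, and the block of $H^{XY}(G)$ itself is $\frac{W}{2}I - A(F_k(G))$, establishing the direct sum decomposition over $0 \le k \le n$.

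For the spectral statement, I would use the isomorphism $F_k(G) \cong F_{n-k}(G)$ noted just after \cref{def:token_graphs}, which induces a permutation-similarity of their adjacency matrices and hence $\eigs(A(F_k(G)))=\eigs(A(F_{n-k}(G)))$. Combined with the direct sum and the fact that shifting by $\frac{W}{2}I$ shifts all eigenvalues uniformly, the full spectrum is the union of $\eigs(\frac{W}{2}I - A(F_k(G)))$ over all $k$, and this union collapses to $0 \le k \le \floor{n/2}$.

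There is essentially no substantive obstacle here, since all the heavy lifting was already done in the QMC case and the overlap matrix $M^{XY}_{x,y}$ is computed in the preceding paragraphs. The only care needed is bookkeeping: tracking the sign (the adjacency matrix appears with a minus sign because $M^{XY}$ has off-diagonal entries $-w_{ij}$) and confirming that the edge-weight convention in \cref{def:token_graphs} matches $w_{ij}$ rather than $-w_{ij}$, so that the identification with $-A(F_k(G))$ rather than $+A(F_k(G))$ is correct.
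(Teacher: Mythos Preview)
Your proposal is correct and follows essentially the same approach as the paper: compute the overlap matrix $M^{XY}_{x,y}$ in each Hamming-weight sector, observe it is the QMC overlap matrix without diagonal terms (hence $-A(F_k(G))$), and reduce the union to $0\le k\le\floor{n/2}$ via $F_k(G)\cong F_{n-k}(G)$. If anything, you are slightly more explicit about the spectral-collapse step than the paper, which leaves it implicit from the QMC case.
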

\end{factbox}

\subsection{EPR and the signless Laplacian matrix}\label{sec:equivalence/epr}

For EPR, we observe that the unitary $U \defeq \otimes_{j\in V} U_j$, $U_j \defeq \sqrt{X_j}$ applies the following mapping
\begin{align}\label{eq:U_transform}
    U X_iX_j U^\dagger = X_iX_j, \quad
    U_j Y_iY_j U^\dagger = Z_iZ_j, \quad 
    U_j Z_jZ_j U^\dagger = Y_iY_j.
\end{align}
Thus, we have from \cref{eq:epr_defn} and linearity over edges that $U$ accomplishes the transformation
\begin{align*}
    U \of{H^{EPR}(G)} U^\dagger &= \frac{1}{2}\sum_{(i,j)\in E}\,w_{ij}\of{ I_iI_j + X_i X_j + Y_i Y_j - Z_i Z_j}, \\
        &=  \sum_{(i,j)\in E} \of{\ket{01}\bra{01}}_{ij} +  \of{\ket{10}\bra{10}}_{ij} + \of{\ket{01}\bra{10}}_{ij} + \of{\ket{10}\bra{01}}_{ij}.
\end{align*}

Once again, this Hamiltonian is identical to \cref{eq:qmc_computational_basis}, albeit with all terms positive. Borrowing the analysis for QMC, we then arrive at \cref{eq:qmc_overlaps} but with all terms positive. Taking the absolute value of all terms in the Laplacian yields the signless Laplacian. 

\Needspace{10\baselineskip}
\begin{factbox}
\begin{fact}\label{fact:epr_Q_equivalence}
    The EPR Hamiltonian $H^{EPR}(G)$ is equivalent under unitary transformation to a direct sum over the signless Laplacian matrices $Q(F_k(G))$ of the token graphs $(F_k(G))$ for $0 \leq k \leq n$. Furthermore, we have
    \begin{align*}
        \eigs(H^{EPR}(G)) = \bigcup_{0 \leq k \leq \floor{\frac{n}{2}}} \eigs(Q(F_k(G)).
    \end{align*}
\end{fact}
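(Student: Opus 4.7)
The plan is to follow the same template used for \cref{fact:qmc_L_equivalence,fact:xy_A_equivalence}, but applied to the unitarily rotated EPR Hamiltonian. The unitary $U=\bigotimes_{j\in V}\sqrt{X_j}$ defined in \cref{eq:U_transform} is local and hence preserves the spectrum, so it suffices to prove the claim for $\tilde H \defeq U H^{EPR}(G) U^\dagger$. The calculation already shown in the excerpt expresses $\tilde H$ in the computational basis as a sum over edges of the four rank-one operators $\ket{01}\bra{01}$, $\ket{10}\bra{10}$, $\ket{01}\bra{10}$, $\ket{10}\bra{01}$ on qubits $(i,j)$. Every one of these operators preserves the Hamming weight of the computational basis states it acts on, so $\tilde H$ commutes with the total Hamming weight and block-diagonalizes into $n+1$ sectors indexed by $k\in\{0,1,\ldots,n\}$.

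Next, fix a sector of Hamming weight $k$ and parametrize basis states $\ket{x}$ by subsets $X\subseteq[n]$ of size $k$, exactly as in \cref{sec:equivalence/qmc}. Acting on $\ket{x}$, the edge term contributes to the diagonal entry $\braket{x|\tilde H|x}$ whenever $x_i\ne x_j$ (since then one of $\ket{01}\bra{01}_{ij}$ or $\ket{10}\bra{10}_{ij}$ fires), and contributes to the off-diagonal entry $\braket{x^{(i,j)}|\tilde H|x}$ whenever $x_i\ne x_j$ (via $\ket{01}\bra{10}_{ij}$ or $\ket{10}\bra{01}_{ij}$). A direct tabulation therefore gives
\begin{align*}
    \braket{y|\tilde H|x} = \begin{cases}
        \sum_{(i,j)\in E} w_{ij}\,\mathbf{1}\{x_i\ne x_j\}, & y=x,\\
        +w_{ij}, & y=x^{(i,j)} \text{ with } x_i\ne x_j,\\
        0, & \text{otherwise.}
    \end{cases}
\end{align*}
This is exactly the matrix of \cref{eq:qmc_overlaps} except that the sign of every off-diagonal entry is flipped.

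To finish, I would recall that $L(F_k(G))=D(F_k(G))-A(F_k(G))$ and $Q(F_k(G))=D(F_k(G))+A(F_k(G))$, so the QMC analysis of \cref{sec:equivalence/qmc} already identified the diagonal with the degree matrix of $F_k(G)$ and the absolute values of the off-diagonal entries with the adjacency matrix of $F_k(G)$. Flipping the sign of the off-diagonals therefore turns $L(F_k(G))$ into $Q(F_k(G))$, establishing that the $k$-th block of $\tilde H$ is $Q(F_k(G))$. The direct-sum and eigenvalue statements then follow as in the two preceding facts, together with the symmetry $F_k(G)\cong F_{n-k}(G)$ which lets us restrict the union of spectra to $0\le k\le\lfloor n/2\rfloor$.

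The only mildly delicate step is verifying that no unexpected entries appear in the off-diagonal: one has to check that $\ket{01}\bra{10}_{ij}$ applied to $\ket{x}$ with $x_i\ne x_j$ really does produce the basis vector corresponding to the subset $X\triangle\{i,j\}$ and not, for instance, a vector with different support on other qubits. Since the operator acts as identity on all other qubits, this is immediate, so I expect no substantive obstacle; the entire proof is essentially bookkeeping on top of the QMC derivation.
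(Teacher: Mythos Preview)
Your proposal is correct and follows essentially the same approach as the paper: apply the local unitary $U=\bigotimes_j\sqrt{X_j}$ to map $H^{EPR}(G)$ to a Hamming-weight-preserving Hamiltonian, express it in the computational basis, and then borrow the QMC bookkeeping of \cref{sec:equivalence/qmc} verbatim except that the off-diagonal entries now appear with a $+$ sign, turning each block from $L(F_k(G))$ into $Q(F_k(G))$. The paper's own argument is in fact terser than yours---it simply says the expression is identical to \cref{eq:qmc_computational_basis} with all terms positive and that ``taking the absolute value of all terms in the Laplacian yields the signless Laplacian''---so your write-up is, if anything, a more explicit version of the same proof.
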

\end{factbox}

\subsection{Applications}\label{sec:equivalence/app}
Using the equivalence between Hamiltonians and token graphs, we may then immediately apply useful results from spectral graph theory. We highlight some results here, and provide a few more in \cref{apx:other_app}.

\subsubsection{Bipartite graphs}
For bipartite graphs, we observe the following

\begin{lemma}[Folklore, see \cite{brouwer2012}]\label{lem:bipartite_token_symmetry}
    For any bipartite graph $G$,
    \begin{align*}
        \eigs(L(G))&=\eigs(Q(G)), \\
        \eigs(A(G))&=\eigs(-A(G)).
    \end{align*}
\end{lemma}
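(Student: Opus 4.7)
The plan is to exhibit a single similarity transformation, determined by the bipartition, that simultaneously negates the adjacency matrix and swaps $L$ with $Q$. Since similar matrices have identical spectra (including multiplicities, which is stronger than what the lemma asks), both identities will follow at once.

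Concretely, fix a bipartition $V = V_1 \sqcup V_2$ of $G$ and define the diagonal \emph{signing matrix} $S \in \mathbb{R}^{n \times n}$ by $S_{ii} = +1$ if $i \in V_1$ and $S_{ii} = -1$ if $i \in V_2$. Then $S = S^{-1}$. First I would verify that $S A(G) S^{-1} = -A(G)$: every edge $(i,j) \in E$ has endpoints in opposite parts, so $S_{ii} S_{jj} = -1$, meaning each off-diagonal entry $A_{ij}$ picks up a factor of $-1$ under conjugation by $S$ (and diagonal entries of $A$ are zero anyway). This gives the second identity $\eigs(A(G)) = \eigs(-A(G))$ immediately.

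Next, since $D(G)$ is diagonal it commutes with $S$, so $S D(G) S^{-1} = D(G)$. Combining with the previous step,
\begin{equation*}
    S L(G) S^{-1} = S\bigl(D(G)-A(G)\bigr)S^{-1} = D(G) - \bigl(-A(G)\bigr) = D(G)+A(G) = Q(G),
\end{equation*}
so $L(G)$ and $Q(G)$ are similar and hence cospectral, yielding the first identity.

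There is essentially no obstacle: the only point worth being careful about is that the signing $S$ is well-defined precisely because $G$ is bipartite (so every edge crosses the partition), which is exactly where the hypothesis enters. If one wished to avoid choosing a bipartition, one could equivalently observe that $A(G)$ anticommutes with $S$ on the bipartite structure, but the diagonal-signing argument above is the cleanest route.
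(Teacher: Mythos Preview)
Your argument is correct and is the standard proof of this folklore fact. The paper does not supply its own proof of this lemma; it simply records it as folklore with a reference to \cite{brouwer2012}, so there is nothing to compare against beyond noting that your diagonal-signing argument is exactly the classical one.
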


\begin{lemma}[Proposition 12 \cite{fabila-monroy2012}]\label{lem:bipartite_graph_has_bipartite_token}
    If $G$ is bipartite, $F_k(G)$ is bipartite for any $1\le k \le n$.
\end{lemma}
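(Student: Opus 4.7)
The plan is to exhibit an explicit bipartition of the vertex set of $F_k(G)$. Since $G$ is bipartite, write $V = V_1 \sqcup V_2$ so that every edge in $E(G)$ has one endpoint in $V_1$ and one in $V_2$. For any vertex $A$ of $F_k(G)$ (i.e., any $k$-subset $A \subseteq V$), define its color to be
\[
c(A) \defeq |A \cap V_1| \bmod 2 \in \{0,1\}.
\]
This partitions the vertex set of $F_k(G)$ into two classes.

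Next I would verify that this coloring is proper, i.e., that every edge of $F_k(G)$ joins vertices of different color. Suppose $A$ and $B$ are adjacent in $F_k(G)$. By \cref{def:token_graphs}, $A \triangle B = \{a,b\}$ with $a \in A$, $b \in B$, and $(a,b) \in E(G)$. Since $G$ is bipartite, exactly one of $a,b$ lies in $V_1$; without loss of generality $a \in V_1$ and $b \in V_2$. Then $B = (A \setminus \{a\}) \cup \{b\}$, so
\[
|B \cap V_1| = |A \cap V_1| - 1,
\]
and therefore $c(B) \ne c(A)$. The case $a \in V_2$, $b \in V_1$ is identical with the roles swapped, yielding $|B \cap V_1| = |A \cap V_1| + 1$ and again $c(B) \ne c(A)$.

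Thus $c$ is a proper $2$-coloring of $F_k(G)$, so $F_k(G)$ is bipartite for every $1 \le k \le n$. The argument is a direct combinatorial parity check, so I do not anticipate any real obstacle; the only thing to be careful about is ensuring the single-token move always crosses the bipartition of $G$, which is immediate from bipartiteness of $G$.
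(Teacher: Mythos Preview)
Your argument is correct. The parity coloring $c(A) = |A \cap V_1| \bmod 2$ is exactly the right invariant: moving a single token along an edge of $G$ always crosses the bipartition $V_1 \sqcup V_2$, so $|A \cap V_1|$ changes by $\pm 1$ and the color flips.

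As for comparison: the paper does not give its own proof of this lemma --- it simply quotes it as Proposition~12 of \cite{fabila-monroy2012}. The argument there is essentially the same parity construction you wrote down, so your proposal aligns with the standard proof.
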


\cref{lem:bipartite_token_symmetry} and \cref{lem:bipartite_graph_has_bipartite_token} can be readily combined with \cref{fact:qmc_L_equivalence}, \cref{fact:epr_Q_equivalence} and \cref{fact:xy_A_equivalence} to obtain

\begin{lemma}\label{lem:bipartite_equivalence}
    For any bipartite graph $G$ 
    \begin{align*}
        \eigs\of{H^{QMC}(G)} &= \eigs\of{H^{EPR}(G)}, \\
        \eigs\of{H^{XY}(G)-\frac{W}{2}I} &= \eigs\of{-H^{XY}(G)+\frac{W}{2}I}.
    \end{align*}
\end{lemma}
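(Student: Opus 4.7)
The plan is to derive both equalities by simply composing the block-diagonalizations from \cref{fact:qmc_L_equivalence,fact:xy_A_equivalence,fact:epr_Q_equivalence} with the bipartite-graph spectral symmetries supplied by \cref{lem:bipartite_token_symmetry,lem:bipartite_graph_has_bipartite_token}. There is no real obstacle; the only thing one has to be careful about is matching the identity shifts in the XY case and making sure the bipartite symmetry is applied to the token graphs rather than to $G$ itself.

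First I would handle the QMC/EPR identity. By \cref{lem:bipartite_graph_has_bipartite_token}, since $G$ is bipartite each token graph $F_k(G)$ is also bipartite, and so \cref{lem:bipartite_token_symmetry} yields $\eigs(L(F_k(G))) = \eigs(Q(F_k(G)))$ for every $0 \le k \le \lfloor n/2\rfloor$. Taking the union over $k$ and invoking \cref{fact:qmc_L_equivalence} on the left and \cref{fact:epr_Q_equivalence} on the right gives
\begin{equation*}
    \eigs\of{H^{QMC}(G)} = \bigcup_{k} \eigs\of{L(F_k(G))} = \bigcup_{k} \eigs\of{Q(F_k(G))} = \eigs\of{H^{EPR}(G)}.
\end{equation*}

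For the XY identity, I would first shift away the $\tfrac{W}{2}I$ term. \cref{fact:xy_A_equivalence} states that $H^{XY}(G)-\tfrac{W}{2}I$ is (up to direct-summing over $k$) the matrix $-A(F_k(G))$, so $\eigs\of{H^{XY}(G)-\tfrac{W}{2}I} = \bigcup_{k}\eigs\of{-A(F_k(G)))}$. Since each $F_k(G)$ is bipartite, the second half of \cref{lem:bipartite_token_symmetry} gives $\eigs\of{-A(F_k(G))} = \eigs\of{A(F_k(G))}$, so negating the matrix $H^{XY}(G)-\tfrac{W}{2}I$ preserves the set of eigenvalues. Therefore $\eigs\of{H^{XY}(G)-\tfrac{W}{2}I} = \eigs\of{-H^{XY}(G)+\tfrac{W}{2}I}$, as claimed.

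The only subtlety worth flagging is that \cref{lem:bipartite_token_symmetry} as cited here must be applied to the token graphs rather than to $G$, which is precisely what \cref{lem:bipartite_graph_has_bipartite_token} enables. Once that observation is made, the proof is just a two-line chase through the three equivalences stated in \cref{sec:equivalence}.
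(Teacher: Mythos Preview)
Your proposal is correct and follows exactly the approach the paper indicates: combine \cref{lem:bipartite_graph_has_bipartite_token} with \cref{lem:bipartite_token_symmetry} applied to each $F_k(G)$, then invoke \cref{fact:qmc_L_equivalence,fact:epr_Q_equivalence,fact:xy_A_equivalence}. The paper itself does not spell out the details beyond naming these ingredients, so your write-up is simply a faithful expansion of that one-line argument.
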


The equivalence between EPR and QMC for bipartite graphs was first observed by \cite{king2023}. 

\subsubsection{Relation of QMC and EPR}
We may also relate QMC and EPR beyond bipartite graphs. We first show

\begin{lemma}\label{lem:L_less_than_Q}
    For any graph $G$
    \begin{align*}
        \frac{\lmax\of{Q(G)}}{2} \le \lmax\of{L(G)} \le \lmax\of{Q(G)}.
    \end{align*}
\end{lemma}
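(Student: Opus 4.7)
My plan is to prove the two inequalities separately, using the Rayleigh quotient characterizations
\[
x^{T} L(G)\, x = \sum_{(i,j)\in E} (x_i - x_j)^2, \qquad y^{T} Q(G)\, y = \sum_{(i,j)\in E} (y_i + y_j)^2,
\]
and the fact that both $L(G)$ and $Q(G)$ are PSD.

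\textbf{Upper inequality ($\lmax(L) \le \lmax(Q)$).} Let $x^{*}$ be a unit eigenvector of $L(G)$ for $\lmax(L)$, and define $y$ entrywise by $y_i \defeq |x^{*}_i|$, so that $\|y\|=\|x^{*}\|=1$. For each edge $(i,j)\in E$,
\[
(y_i + y_j)^2 = x_i^{*2} + 2|x_i^{*}||x_j^{*}| + x_j^{*2} \ \ge\ x_i^{*2} - 2x_i^{*}x_j^{*} + x_j^{*2} = (x_i^{*} - x_j^{*})^2.
\]
Summing over $E$ gives $y^{T} Q y \ge x^{*T} L x^{*} = \lmax(L)$, and by the Rayleigh bound $\lmax(Q) \ge y^{T} Q y$.

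\textbf{Lower inequality ($\lmax(Q)/2 \le \lmax(L)$).} I would use a probabilistic rounding. Let $y^{*}$ be a unit eigenvector realizing $\lmax(Q)$, and draw independent uniform signs $\varepsilon_i \in \{\pm 1\}$. Set $x_i \defeq \varepsilon_i\, y^{*}_i$, so $\|x\|=1$ deterministically. On each edge, if $\varepsilon_i = \varepsilon_j$ then $(x_i - x_j)^2 = (y^{*}_i - y^{*}_j)^2$, while if $\varepsilon_i \ne \varepsilon_j$ then $(x_i - x_j)^2 = (y^{*}_i + y^{*}_j)^2$. Since each case occurs with probability $1/2$,
\[
\mathbb{E}\!\left[(x_i - x_j)^2\right] = \tfrac{1}{2}\!\left[(y^{*}_i - y^{*}_j)^2 + (y^{*}_i + y^{*}_j)^2\right] = (y^{*}_i)^2 + (y^{*}_j)^2.
\]
Summing over $(i,j)\in E$ yields $\mathbb{E}[x^{T} L x] = y^{*T} D(G)\, y^{*}$. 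Now from $D = \tfrac{1}{2}(Q + L)$ and $y^{*T} L y^{*} \ge 0$,
\[
y^{*T} D\, y^{*} = \tfrac{1}{2}\!\left(y^{*T} Q y^{*} + y^{*T} L y^{*}\right) \ge \tfrac{1}{2}\, \lmax(Q).
\]
Since the expectation is at most the maximum over sign patterns, some choice of $\varepsilon$ yields a unit vector $x$ with $x^{T} L x \ge \lmax(Q)/2$, and hence $\lmax(L) \ge \lmax(Q)/2$.

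Neither step looks genuinely hard: the upper bound is essentially Perron--Frobenius applied entrywise, and the lower bound is a one-line randomized rounding once one spots that $\mathbb{E}[x^{T} L x] = y^{*T} D y^{*}$. The only place to be slightly careful is verifying that $\|x\|=1$ in the rounding (which holds pointwise because $\varepsilon_i^2 = 1$), so the Rayleigh comparison is legitimate without any normalization step.
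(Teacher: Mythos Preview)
Your proof is correct. For the upper inequality the paper simply cites \cite{merris1998}, and your entrywise absolute-value trick is essentially the standard argument behind that citation. For the lower inequality, however, your approach differs from the paper's: the paper takes the indicator vector $v$ of the maximum-weighted-degree vertex $i^*$, so that $v^T L v = v^T Q v = d_{i^*}$ and hence $\lambda_{max}(L) \ge d_{i^*}$, and then applies Gershgorin's circle theorem to $Q$ to obtain $\lambda_{max}(Q) \le 2d_{i^*}$. Your randomized sign-flipping is a genuinely different route---it never singles out a distinguished vertex and instead exploits $D = \tfrac{1}{2}(L+Q)$ together with $L \succeq 0$ directly. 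The paper's argument is more elementary (no probabilistic step, just an explicit test vector plus Gershgorin), while yours is arguably cleaner and yields the intermediate bound $\lambda_{max}(L) \ge y^{*T} D\, y^*$, which need not coincide with $d_{i^*}$. One cosmetic point: the paper's graphs carry positive edge weights, so your quadratic forms should include factors of $w_{ij}$; since all weights are positive, every inequality you wrote goes through verbatim once these are inserted.
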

\begin{proof}
    The right inequality is given in \cite{merris1998} (see \cite{shu2002}). We now show the left inequality. Given a graph $G$, let $i^* \in \ofb{n}$ be the vertex maximizing $d_i \defeq \sum_{j \in N(i)}w_{ij}$. Let $v$ be a vector that is $1$ exactly at the index corresponding to $i^*$ and $0$ everywhere else. Then $v^\dagger (D \pm A) v = d_{i^*}$. Thus, both $\lmax\of{L(G)}\ge d_{i^*}$ and $\lmax\of{Q(G)}\ge d_{i^*}$. By Gershgorin's circle theorem, we have $Q(G) \le 2 d_{i^*}$, so $\frac{\lmax\of{Q(G)}}{2} \le \lmax\of{L(G)}$. 
\end{proof}

We then can state
\begin{lemma}\label{lem:L_less_than_Q_token}
    For any graph $G$ and any $1\le k \le \floor{n/2}$
    \begin{align*}
        \frac{\lmax\of{F_k\of{Q(G)}}}{2} \le \lmax\of{F_k\of{L(G)}} \le \lmax\of{F_k\of{Q(G)}}.
    \end{align*}
    
    Furthermore, the right inequality is tight for bipartite graphs and the left inequality is asymptotically tight for complete graphs.
\end{lemma}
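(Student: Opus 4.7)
The two displayed inequalities follow immediately by applying \cref{lem:L_less_than_Q} to the weighted token graph $F_k(G)$ in place of $G$: since \cref{lem:L_less_than_Q} is stated for arbitrary weighted graphs and $F_k(G)$ is another such graph (with weights inherited from $G$), no new argument is required. So the work lies entirely in the two tightness claims.

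For the right inequality on bipartite $G$, I would chain \cref{lem:bipartite_graph_has_bipartite_token} and \cref{lem:bipartite_token_symmetry}: the former gives that $F_k(G)$ is bipartite whenever $G$ is, and the latter then yields $\eigs(L(F_k(G))) = \eigs(Q(F_k(G)))$. In particular $\lmax(L(F_k(G))) = \lmax(Q(F_k(G)))$, so the right inequality is an equality. This is a one-line deduction from previously recorded facts.

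For asymptotic tightness of the left inequality at $G=K_n$, the key observation is that $F_k(K_n)$ is the Johnson graph $J(n,k)$, whose spectrum is classical. Specifically, $J(n,k)$ is $k(n-k)$-regular with adjacency eigenvalues $\lambda_j = (k-j)(n-k-j) - j$ for $j=0,1,\ldots,k$. Regularity gives $\lmax(A(F_k(K_n))) = k(n-k)$ (achieved at $j=0$) and $\lmin(A(F_k(K_n))) = -k$ (achieved at $j=k$). Hence $\lmax(Q(F_k(K_n))) = 2k(n-k)$ and $\lmax(L(F_k(K_n))) = k(n-k) - (-k) = k(n-k+1)$, giving the ratio
\[\frac{\lmax(L(F_k(K_n)))}{\lmax(Q(F_k(K_n)))} \;=\; \frac{n-k+1}{2(n-k)} \;\xrightarrow[n\to\infty]{}\; \tfrac{1}{2},\]
for fixed $k$ (or more generally as $n-k\to\infty$), establishing asymptotic tightness.

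The only nontrivial ingredient is the third step, which relies on the known spectrum of the Johnson graphs; once one invokes it, the ratio calculation is immediate. Everything else is a direct citation of the lemmas already proved in this section.
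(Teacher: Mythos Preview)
Your proposal is correct and essentially identical to the paper's proof: the paper also obtains the inequalities by applying \cref{lem:L_less_than_Q} to $F_k(G)$, gets right-tightness via the bipartite facts (packaged there as \cref{lem:bipartite_equivalence}, which itself is derived from \cref{lem:bipartite_graph_has_bipartite_token} and \cref{lem:bipartite_token_symmetry}), and gets left-tightness from the Johnson-graph spectrum of $F_k(K_n)$ (packaged as \cref{lem:H_eigenvalues_complete}), computing the same ratio $\frac{k(n-k+1)}{2k(n-k)}\to \tfrac12$. Your version is just slightly more explicit in unpacking the Johnson eigenvalues inline rather than citing the appendix lemma.
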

\begin{proof}
    The inequalities follow immediately from \cref{lem:L_less_than_Q}. Tightness of the right inequality follows from \cref{lem:bipartite_equivalence}. 
    To see tightness on the left inequality, consider complete graphs $K_n$ as $n \rightarrow \infty$. Using \cref{lem:H_eigenvalues_complete} with $i=k$ for QMC and $i=0$ for EPR we have
    \begin{align*}
        \lim_{n \rightarrow \infty}\frac{\lmax\of{Q\of{K_n}}}{\lmax\of{L\of{K_n}}} = \lim_{n \rightarrow \infty}\frac{2k\of{n-k}}{k\of{n-k+1}}=2,
    \end{align*}
    for any $1\le k \le \floor{n/2}$.
\end{proof}

\cref{lem:L_less_than_Q_token} can be readily combined with \cref{fact:qmc_L_equivalence} and \cref{fact:epr_Q_equivalence} to obtain

\begin{corollary}\label{cor:qmc_less_than_epr}
    For any graph $G$
    \begin{align*}
         \frac{\lmax\of{H^{EPR}(G)}}{2} \le \lmax\of{H^{QMC}(G)} \le \lmax\of{H^{EPR}(G)}.
    \end{align*}
    Furthermore, the right inequality is tight for all bipartite graphs and the left inequality is asymptotically tight for complete graphs.
\end{corollary}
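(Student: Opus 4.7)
The plan is to derive the corollary almost immediately from Lemma~\ref{lem:L_less_than_Q_token} by lifting the per-block inequalities to the full Hamiltonian spectra through the block decomposition of Facts~\ref{fact:qmc_L_equivalence} and \ref{fact:epr_Q_equivalence}. Specifically, these facts give
\begin{align*}
\lmax\of{H^{QMC}(G)} &= \max_{0 \le k \le \floor{n/2}} \lmax\of{L\of{F_k\of{G}}}, \\
\lmax\of{H^{EPR}(G)} &= \max_{0 \le k \le \floor{n/2}} \lmax\of{Q\of{F_k\of{G}}},
\end{align*}
so both quantities of interest are maxima over $k$ of the corresponding token-graph spectral radii.

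For the right inequality, I would fix a block $k^{*}$ attaining the maximum in the QMC expression, apply $\lmax\of{L\of{F_{k^{*}}\of{G}}} \le \lmax\of{Q\of{F_{k^{*}}\of{G}}}$ from Lemma~\ref{lem:L_less_than_Q_token}, and then enlarge the right-hand side to the maximum over $k$ to reach $\lmax\of{H^{EPR}(G)}$. For the left inequality I would symmetrically fix $k^{*}$ attaining the maximum in the EPR expression and apply the blockwise bound $\lmax\of{Q\of{F_{k^{*}}\of{G}}}/2 \le \lmax\of{L\of{F_{k^{*}}\of{G}}}$, then enlarge to the maximum over $k$ on the QMC side. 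Each direction is a single line once the decomposition into blocks is in place.

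For the tightness claims, the bipartite case follows directly from Lemma~\ref{lem:bipartite_equivalence}, which states $\eigs\of{H^{QMC}(G)} = \eigs\of{H^{EPR}(G)}$ for bipartite $G$ and hence forces equality of the maxima. For the asymptotic tightness on complete graphs, the computation inside the proof of Lemma~\ref{lem:L_less_than_Q_token} already shows that for every fixed $k$ with $1 \le k \le \floor{n/2}$ the ratio $\lmax\of{Q\of{F_k\of{K_n}}}/\lmax\of{L\of{F_k\of{K_n}}}$ tends to $2$ as $n \to \infty$. Translating via the block decomposition gives the same limit for $\lmax\of{H^{EPR}(K_n)}/\lmax\of{H^{QMC}(K_n)}$.

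There is no substantive obstacle here, as the entire content of the inequality is already captured blockwise by Lemma~\ref{lem:L_less_than_Q_token}; the only bookkeeping is that the maxima over $k$ cooperate with a pointwise inequality, which they trivially do. Consequently the proof reduces to invoking the two Facts, the previous lemma, and (for tightness) Lemma~\ref{lem:bipartite_equivalence} together with the explicit complete-graph eigenvalues used earlier.
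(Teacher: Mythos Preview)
Your proposal is correct and matches the paper's approach exactly: the paper simply states that the corollary follows by combining \cref{lem:L_less_than_Q_token} with \cref{fact:qmc_L_equivalence} and \cref{fact:epr_Q_equivalence}, which is precisely the blockwise-to-global lift you describe. The only place to tighten slightly is the asymptotic tightness step: knowing the per-$k$ ratio tends to $2$ does not by itself force the ratio of the maxima over $k$ to do so, but this is immediate once you either observe both maxima occur at $k=\floor{n/2}$ or invoke the explicit formulas in \cref{lem:H_eigenvalues_complete}.
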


\subsubsection{Location of maximum eigenstates}
The relationship between the eigenspectra of the token graphs of $G$ for different $k$ yields useful insights into the maximum-energy eigenstates of the corresponding Hamiltonians. First consider the following lemma

\begin{restatable}{lemma}{LemLaplacianSpectrumContainment}\label{lem:laplacian_spectrum_containment}
    For any graph $G(V,E,w)$ with $n$ nodes and any $1\leq k \leq \floor{n/2}-1$
    \begin{align}
        \eigs\of{L(F_k(G))} \subseteq \eigs\of{L(F_{k+1}(G))}\,.
    \end{align}
\end{restatable}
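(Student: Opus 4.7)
The plan is to use \cref{fact:qmc_L_equivalence} to rephrase the statement in terms of the QMC Hamiltonian, and then exhibit an injective, commuting intertwiner between consecutive Hamming-weight sectors built from collective spin operators.

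By \cref{fact:qmc_L_equivalence}, $\eigs\of{L\of{F_k\of{G}}}$ equals the set of eigenvalues of $H^{QMC}(G)$ restricted to the Hamming-weight-$k$ sector $V_k \subset \of{\mathbb{C}^2}^{\otimes n}$, so it suffices to show every eigenvalue of $H^{QMC}(G)|_{V_k}$ is an eigenvalue of $H^{QMC}(G)|_{V_{k+1}}$. I would introduce the collective Hamming-weight-raising operator $R \defeq \sum_{i \in V}\of{\ket{1}\bra{0}}_i$, which by construction maps $V_k$ into $V_{k+1}$, along with its adjoint $R^\dagger$.

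The argument then hinges on two properties of $R$. First, I would verify $[R, H^{QMC}(G)] = 0$: each summand of $H^{QMC}(G)$ is proportional to the singlet projector $\ket{\psi^-}_e\bra{\psi^-}_e$, and since $\ket{\psi^-}$ has total spin zero, $\of{\sigma^+_i + \sigma^+_j}\ket{\psi^-}_{ij} = 0$ (with $\sigma^+ = \ket{1}\bra{0}$) and, by adjointing $\of{\sigma^-_i + \sigma^-_j}\ket{\psi^-}_{ij} = 0$, also $\bra{\psi^-}_{ij}\of{\sigma^+_i + \sigma^+_j} = 0$; hence $[R, \ket{\psi^-}_e\bra{\psi^-}_e] = 0$ on every edge, and summing in $e$ yields the claim. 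Second, I would show $R$ is injective on $V_k$ for $k \le \floor{n/2} - 1$ via the commutator identity $[R, R^\dagger] = -\sum_i Z_i$, which acts on $V_k$ as the scalar $2k - n$. Rearranging to $R^\dagger R = R R^\dagger + \of{n-2k}I$ on $V_k$ and taking an inner product with $v \in V_k$ yields
$$\|Rv\|^2 = \|R^\dagger v\|^2 + \of{n-2k}\|v\|^2 \ge \of{n-2k}\|v\|^2,$$
which is strictly positive in the stated range.

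Combining the two properties, any eigenvector $v \in V_k$ of $H^{QMC}(G)$ with eigenvalue $\lambda$ yields the nonzero eigenvector $Rv \in V_{k+1}$ with the same eigenvalue, giving the claimed containment. The key technical point is the injectivity bound; strict positivity of $n - 2k$ in the stated range is precisely what converts the commutator identity into a useful inequality. A purely combinatorial alternative, as used in \cite{dalfo2021}, would bypass the Hamiltonian framing by defining the lift $\of{Uv}_B \defeq \sum_{A \subset B,\,\abs{A}=k} v_A$, verifying the intertwining $U \, L\of{F_k\of{G}} = L\of{F_{k+1}\of{G}}\, U$ by case analysis on the edges of $F_k$ and $F_{k+1}$, and establishing injectivity of $U$ for $k < n/2$ directly.
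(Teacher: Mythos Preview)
Your proof is correct and takes a genuinely different route from the paper's. The paper follows the combinatorial approach of \cite{dalfo2021}: it defines the lift $v_{k+1}(S) = \sum_{X\subset S,\,|X|=k} v_k(X)$ directly and verifies the intertwining relation $L\of{F_{k+1}\of{G}}\,v_{k+1} = \lambda\,v_{k+1}$ by an edge-by-edge case analysis on the token graph, exactly the alternative you sketch at the end. Your raising operator $R$ is in fact the \emph{same} linear map in disguise (applying $\sum_i \sigma_i^+$ to $\sum_X v_k(X)\ket{X}$ produces precisely $\sum_S v_{k+1}(S)\ket{S}$), but you establish the intertwining via the $SU(2)$ symmetry of the singlet rather than by combinatorics, which is cleaner once \cref{fact:qmc_L_equivalence} is in hand. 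The main payoff of your route is the injectivity argument: the commutator identity $R^\dagger R = RR^\dagger + (n-2k)I$ on $V_k$ gives $\|Rv\|^2 \ge (n-2k)\|v\|^2 > 0$ immediately, whereas the paper's proof as written does not explicitly rule out $v_{k+1}=0$ (it relies on \cite{dalfo2021} for that). Conversely, the paper's argument is self-contained at the level of graph Laplacians and does not need the Hamiltonian equivalence as an input.
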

This lemma was proven in \cite[Theorem 4.1]{dalfo2021} for unweighted token graphs. We prove it for the weighted case in \cref{apx:other_app/spectrum_containment}. Combining \cref{lem:laplacian_spectrum_containment} with \cref{fact:qmc_L_equivalence} yields

\begin{lemma}\label{lem:qmc_ham_weight_n_over_2_optimal}
     For any graph $G$ with $n$ nodes, there always an exists a maximum eigenvalue for $H^{QMC}(G)$ with an associated eigenstate supported only on bitstrings with Hamming weight $\floor{n/2}$.
\end{lemma}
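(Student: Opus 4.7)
The plan is to exploit the block-diagonal decomposition from \cref{fact:qmc_L_equivalence} to reduce the claim to a purely spectral statement about the token-graph Laplacians, and then climb the chain of spectrum containments given by \cref{lem:laplacian_spectrum_containment} up to the middle sector $k = \floor{n/2}$.

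First, I would observe that by \cref{fact:qmc_L_equivalence}, $H^{QMC}(G)$ is block-diagonal in the computational basis, where the block indexed by $k \in \ofc{0,1,\ldots,n}$ acts on the span of Hamming weight-$k$ bitstrings and is unitarily equivalent to $L(F_k(G))$. Consequently
\[
\lmax\of{H^{QMC}(G)} = \max_{0 \le k \le n} \lmax\of{L(F_k(G))},
\]
and any eigenvector of a single block lifts to an eigenvector of $H^{QMC}(G)$ supported entirely on bitstrings of the corresponding Hamming weight. Using the isomorphism $F_k(G) \cong F_{n-k}(G)$, the maximum can be restricted to $0 \le k \le \floor{n/2}$.

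Next, I would apply \cref{lem:laplacian_spectrum_containment} iteratively: for every $1 \le k \le \floor{n/2}-1$, $\eigs(L(F_k(G))) \subseteq \eigs(L(F_{k+1}(G)))$, and so by induction
\[
\eigs\of{L(F_k(G))} \subseteq \eigs\of{L(F_{\floor{n/2}}(G))}
\]
for every $1 \le k \le \floor{n/2}$. In particular $\lmax(L(F_k(G))) \le \lmax(L(F_{\floor{n/2}}(G)))$ for all such $k$. The $k=0$ case needs a separate (trivial) argument: $F_0(G)$ has a single vertex, so $L(F_0(G)) = 0$, while $L(F_{\floor{n/2}}(G))$ is positive semidefinite with $\lmax \ge 0$. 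Therefore the overall maximum of $\lmax(L(F_k(G)))$ over all $k$ is attained at $k = \floor{n/2}$, and the associated eigenvector lives in the Hamming weight $\floor{n/2}$ block, giving the claim.

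There is no serious obstacle here; the proof is essentially an assembly of already-stated facts. The only subtlety worth flagging is the exclusion of $k=0$ from the range of \cref{lem:laplacian_spectrum_containment} (which requires $k \ge 1$), handled by noting that the $k=0$ sector contributes only the eigenvalue $0$ and is dominated by any PSD Laplacian block at $k = \floor{n/2}$.
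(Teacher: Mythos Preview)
Your proposal is correct and takes essentially the same approach as the paper: combine the block-diagonal decomposition of \cref{fact:qmc_L_equivalence} with the spectrum containment of \cref{lem:laplacian_spectrum_containment} to see that the maximum Laplacian eigenvalue is attained at $k=\floor{n/2}$. Your explicit handling of the $k=0$ sector is a nice touch that the paper leaves implicit.
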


This observation is implicit in \cite[Appendix D]{yordanov2020}. In particular, it means that any algorithms to find high energy states for QMC can focus on states supported only on Hamming weight $\floor{n/2}$ bitstrings. This effectively reduces the size of the Hilbert space from $2^n$ to $\binom{n}{n/2}$, which provides a $1/\sqrt{n}$ reduction in the large size limit. This result does not rule out the possibility of maximum energy eigenstates in smaller Hamming weight subspaces. Indeed, the unweighted star graph always has a maximum energy eigenstate with Hamming weight $1$ \cite{lieb1962, anshu2020}.

\subsubsection{Monogamy of entanglement for XY on a star}
An important contribution in determining upper bounds for the maximum eigenvalue of QMC and EPR is \emph{monogamy of entanglement} on a star graph: 

\begin{lemma}\label{lem:l_star_bound}
    For a star graph $S_m$ with $m\ge1$ edges, $n=m+1$ nodes and any $1 \le k \le \floor{n/2}$
    \begin{align*}
        \lambda_{max}\of{L\of{F_k\of{S_m}}} = \lambda_{max}\of{Q\of{F_k\of{S_m}}} = m+1.
    \end{align*}
\end{lemma}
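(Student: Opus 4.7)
The plan is to reduce the lemma to a clean spectral statement about bipartite biregular graphs. First, since $S_m$ is a tree (hence bipartite), \cref{lem:bipartite_graph_has_bipartite_token} implies $F_k\of{S_m}$ is bipartite, and \cref{lem:bipartite_token_symmetry} then gives $\eigs\of{L\of{F_k\of{S_m}}}=\eigs\of{Q\of{F_k\of{S_m}}}$. It therefore suffices to show $\lmax\of{L\of{F_k\of{S_m}}}=m+1$. I would then expose the structure of $F_k\of{S_m}$ by partitioning its vertex set according to whether the center $c$ of $S_m$ belongs to the $k$-subset: call these $V_1$ (with $c$) and $V_2$ (without $c$). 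Since every edge of $S_m$ is incident to $c$, any valid token move must either add or remove the token on $c$, so every edge of $F_k\of{S_m}$ crosses the bipartition. Counting moves, each vertex of $V_1$ has degree $m-k+1$ (choose a leaf not already in the $k$-subset to move $c$ to) while each vertex of $V_2$ has degree $k$ (choose a leaf in the $k$-subset to swap with $c$), so $F_k\of{S_m}$ is bipartite biregular with degrees $d_1=m-k+1$ and $d_2=k$.

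The remaining step is the identity that a bipartite biregular graph with degrees $d_1,d_2$ has $\lmax\of{L}=d_1+d_2$. For the lower bound I would exhibit the explicit eigenvector $v$ equal to $d_1$ on $V_1$ and $-d_2$ on $V_2$: a direct computation using the biregularity gives $Lv=\of{d_1+d_2}v$. For the upper bound, I would write
\begin{equation*}
L\of{F_k\of{S_m}} = \begin{pmatrix} d_1 I & -B \\ -B^{\top} & d_2 I \end{pmatrix},
\end{equation*}
where $B$ is the biadjacency matrix (row sums $d_1$, column sums $d_2$). Any eigenpair of $L$ reduces via elimination to the equation $BB^{\top}u=\of{d_1-\lambda}\of{d_2-\lambda}u$, so it suffices to bound $\lmax\of{BB^{\top}}$. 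Since $BB^{\top}$ is entrywise nonnegative with every row sum equal to $d_1 d_2$, Perron--Frobenius gives $\lmax\of{BB^{\top}}=d_1 d_2$; solving the resulting quadratic in $\lambda$ then yields $\lambda \le d_1+d_2$. Substituting $d_1+d_2=\of{m-k+1}+k=m+1$ closes the lemma.

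The main obstacle, I expect, is simply to argue the biregular degree count rigorously and uniformly for all $1\le k\le\floor{n/2}$ (including the balanced case $|V_1|=|V_2|$ when $m$ is odd). Once the structure is identified, the spectral identity for biregular bipartite Laplacians is a short Perron--Frobenius calculation. A lighter alternative to the biadjacency argument would be to invoke the Anderson--Morley bound $\lmax\of{L}\le\max_{uv\in E}\of{d_u+d_v}$, which immediately gives $m+1$ since every edge of $F_k\of{S_m}$ has endpoints of degrees $d_1$ and $d_2$; this is the cleanest upper bound if one is happy to cite it.
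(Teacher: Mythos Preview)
Your proof is correct and takes a genuinely different route from the paper. The paper's argument is essentially three lines: it cites \cite{anshu2020} for $\lmax\of{H^{QMC}\of{S_m}}=m+1$, uses \cref{fact:qmc_L_equivalence} to conclude $\max_k\lmax\of{L\of{F_k\of{S_m}}}=m+1$ (hence each is at most $m+1$), and then invokes \cref{lem:bipartite_equivalence} to transfer the statement to $Q$. You instead work purely combinatorially: you identify $F_k\of{S_m}$ as a bipartite biregular graph with side degrees $m-k+1$ and $k$, and obtain $\lmax\of{L}=d_1+d_2=m+1$ via an explicit eigenvector for the lower bound and a Perron--Frobenius (or Anderson--Morley) argument for the upper bound. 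Your approach is self-contained---it needs neither the Hamiltonian framework nor the external reference---and establishes equality for \emph{every} $k$ directly, whereas the paper's proof as written only delivers the upper bound for general $k$ (equality for each $k$ would still need, say, $\lmax\of{L\of{S_m}}=m+1$ together with \cref{lem:laplacian_spectrum_containment}). Your structural observation is also the same one the paper exploits later in the proof of \cref{lem:star_A_max}, where $F_k\of{S_m}$ is identified with the doubled Johnson graph $J(m,k-1,k)$. The paper's version buys brevity by leaning on the surrounding machinery; yours buys transparency and independence from it.
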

\begin{proof}
    It is shown in \cite{anshu2020} that $\lambda_{max}\of{H^{QMC}\of{S_m}} = m+1$. By \cref{fact:qmc_L_equivalence}, this means that the bound holds for all $k$, as $\lambda_{max}\of{H^{QMC}\of{S_m}}$ corresponds to the maximum of $\lambda_{max}\of{L\of{F_k\of{S_m}}} \le m+1$ over all $k$. By \cref{lem:bipartite_equivalence}, these statement hold when replacing $L$ with $Q$ and $H^{QMC}(G)$ with $H^{EPR}(G)$.
\end{proof}

We also may derive similar monogamy of entanglement results for XY on a star. The full spectrum of the token graphs of stars is characterized in \cite{dalfo2021} via methods in \cite{dalfo2020} (though the results for adjacency matrix are not stated explicitly). We only require bounds on extremal spectra, which we state below

\begin{restatable}{lemma}{StarAMax}\label{lem:star_A_max}
    For a star graph $S_m$ and any $1 \le k \le \floor{n/2}$
    \begin{align*}
        \lmax\of{A\of{F_k\of{S_m}}} = \sqrt{k(m+1-k)}.
    \end{align*}
\end{restatable}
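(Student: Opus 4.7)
The plan is to exploit the fact that $F_k(S_m)$ is a connected biregular bipartite graph, for which the spectral radius of the adjacency matrix has a clean closed form.

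First I would exhibit the bipartition. Let $c$ denote the center of $S_m$, with leaves $\ofc{1,\ldots,m}$, and partition the vertex set of $F_k(S_m)$ into $V_c$, the $k$-subsets containing $c$, and $V_{\bar c}$, the $k$-subsets not containing $c$. Since every edge of $S_m$ is incident to $c$, the symmetric difference of any two adjacent vertices of $F_k(S_m)$ must be a pair $\ofc{c,i}$ with $i\in \ofc{1,\ldots,m}$, so every edge of $F_k(S_m)$ crosses from $V_c$ to $V_{\bar c}$, confirming bipartiteness.

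Next I would compute the two degrees. A vertex $\ofc{c}\cup S \in V_c$ with $|S|=k-1$ is adjacent precisely to $S\cup \ofc{i}$ for each $i \in \ofc{1,\ldots,m}\setminus S$, giving $m-k+1$ neighbors. A vertex $T \in V_{\bar c}$ with $|T|=k$ is adjacent precisely to $(T\setminus \ofc{i})\cup \ofc{c}$ for each $i\in T$, giving $k$ neighbors. Hence $F_k(S_m)$ is biregular with part-degrees $(m-k+1,\,k)$.

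Finally, I would invoke the standard fact that the spectral radius of a connected biregular bipartite graph with part-degrees $d_1,d_2$ equals $\sqrt{d_1 d_2}$. To prove this directly if needed, write the adjacency matrix in block form with off-diagonal blocks $B,B^T$; then $\lmax(A)^2 = \lmax(BB^T)$, and the biregularity identities $B\mathbf{1} = d_1\mathbf{1}$, $B^T\mathbf{1}=d_2\mathbf{1}$ give $BB^T\mathbf{1}=d_1 d_2 \mathbf{1}$. Since $F_k(S_m)$ is connected (from any $k$-subset one may move a token onto or off of $c$, and shuttling through $c$ permits arbitrary permutations of the remaining tokens), $BB^T$ is irreducible and Perron--Frobenius forces this positive eigenvector to attain the spectral radius. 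Substituting $d_1=m-k+1$, $d_2=k$ yields the claimed $\sqrt{k(m+1-k)}$. I do not expect significant obstacles: bipartiteness is immediate from the structure of $S_m$, the degree counts are elementary, and the biregular spectral-radius formula is classical; the only mild subtlety is the connectivity check needed to apply Perron--Frobenius.
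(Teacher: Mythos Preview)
Your argument is correct and rests on the same structural observation as the paper---that $F_k(S_m)$ is a biregular bipartite graph with part-degrees $m+1-k$ and $k$---but the packaging differs. The paper first quotes the isomorphism $F_k(S_m)\cong J(m,k-1,k)$ from \cite{dalfo2021}, writes $A$ in block form, applies the similarity $A'=D^{1/2}AD^{-1/2}$, and then uses Gershgorin: every row of $A'$ sums to exactly $\sqrt{k(m+1-k)}$, so the bound is both an upper bound and achieved by the all-ones vector. You instead establish the bipartition and the two degrees directly from the star structure and then invoke (and sketch) the standard biregular-bipartite spectral-radius formula via $BB^{T}\mathbf{1}=d_1d_2\mathbf{1}$ and Perron--Frobenius. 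Your route is a bit more self-contained (no external isomorphism needed), while the paper's Gershgorin argument avoids the connectivity check altogether, since equal row sums of a nonnegative matrix already pin down the spectral radius without irreducibility. Either way, the core content is identical.
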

The proof is given in \cref{apx:token_reductions}. By \cref{lem:bipartite_token_symmetry}, this also yields the bound 
\begin{corollary}\label{cor:star_A_min}
    For a star graph $S_m$ and any $1 \le k \le \floor{n/2}$
    \begin{align*}
        \lmin\of{A\of{F_k\of{S_m}}} \ge -\sqrt{k(m+1-k)}.
    \end{align*}
\end{corollary}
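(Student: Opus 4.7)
The plan is to derive this corollary as an immediate consequence of \cref{lem:star_A_max} together with the bipartite symmetry of adjacency spectra from \cref{lem:bipartite_token_symmetry}. Since the statement even allows an inequality (rather than equality), no delicate analysis is required — I expect the whole argument to fit in a couple of lines, and the main work was already done in proving \cref{lem:star_A_max}.

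First I would observe that the star $S_m$ is bipartite, with the obvious $2$-coloring placing the central vertex on one side and the $m$ leaves on the other. Then I would invoke \cref{lem:bipartite_graph_has_bipartite_token} to conclude that $F_k(S_m)$ is bipartite for every $1 \le k \le \floor{n/2}$. Applying \cref{lem:bipartite_token_symmetry} to the bipartite graph $F_k(S_m)$ gives
\begin{equation*}
    \eigs\bigl(A\bigl(F_k(S_m)\bigr)\bigr) = \eigs\bigl(-A\bigl(F_k(S_m)\bigr)\bigr),
\end{equation*}
which means the adjacency spectrum is symmetric about the origin, and in particular
\begin{equation*}
    \lmin\bigl(A(F_k(S_m))\bigr) = -\lmax\bigl(A(F_k(S_m))\bigr).
\end{equation*}
Plugging in the evaluation $\lmax(A(F_k(S_m))) = \sqrt{k(m+1-k)}$ from \cref{lem:star_A_max} yields the claimed bound (in fact with equality).

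The only potential subtlety is making sure we are in a regime where \cref{lem:star_A_max} applies, namely $1 \le k \le \floor{n/2}$, which is exactly the range stated in the corollary, so there is nothing further to verify. I do not foresee any genuine obstacle: the corollary is really just a record of the symmetry consequence of the preceding lemma for bipartite token graphs, and restating it as an inequality rather than an equality only makes it weaker.
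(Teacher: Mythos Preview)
Your proposal is correct and matches the paper's approach: the paper simply states that the corollary follows from \cref{lem:bipartite_token_symmetry} applied to the result of \cref{lem:star_A_max}. Your version is slightly more explicit in invoking \cref{lem:bipartite_graph_has_bipartite_token} to justify that $F_k(S_m)$ is bipartite before applying the spectral symmetry, which is a detail the paper leaves implicit.
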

\cref{cor:star_A_min}, and \cref{lem:bipartite_token_symmetry} then yield
\begin{corollary}\label{cor:star_xy_monogamy}
    For a star graph $S_m$
    \begin{align*}
         \lmax\of{H^{XY}\of{S_m}} =  \begin{cases}
            \frac{m}{2}+\sqrt{\frac{m^2+2m}{4}}, & m \text{ even} \\ 
            m + \frac{1}{2}, & m \text{ odd} 
        \end{cases}
    \end{align*}
\end{corollary}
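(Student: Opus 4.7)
The plan is to reduce the statement to a one-variable optimization over the token index $k$ via the tools already assembled. First, I would invoke \cref{fact:xy_A_equivalence} on $G = S_m$: since $S_m$ is unweighted with $m$ edges, $W=m$, and so
\[
\lmax\of{H^{XY}(S_m)} \;=\; \frac{m}{2} \;+\; \max_{0 \le k \le \floor{n/2}} \lmax\of{-A\of{F_k(S_m)}}
\;=\; \frac{m}{2} \;-\; \min_{0 \le k \le \floor{n/2}} \lmin\of{A\of{F_k(S_m)}},
\]
where $n = m+1$. So everything reduces to understanding the minimum adjacency eigenvalue of the token graphs of $S_m$.

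Next, I would use bipartiteness to swap the sign. The star $S_m$ is bipartite, so by \cref{lem:bipartite_graph_has_bipartite_token} every $F_k(S_m)$ is bipartite, and then \cref{lem:bipartite_token_symmetry} gives $\eigs(A(F_k(S_m))) = \eigs(-A(F_k(S_m)))$, hence $-\lmin(A(F_k(S_m))) = \lmax(A(F_k(S_m)))$. Plugging \cref{lem:star_A_max} in, I obtain
\[
\lmax\of{H^{XY}(S_m)} \;=\; \frac{m}{2} \;+\; \max_{0 \le k \le \floor{(m+1)/2}} \sqrt{k(m+1-k)}.
\]

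The remaining task is a routine integer optimization of $f(k) = k(m+1-k)$. The continuous maximizer is $k^\star = (m+1)/2$. If $m$ is odd, then $k^\star$ is an integer lying in the feasible range, giving $\sqrt{f(k^\star)} = (m+1)/2$ and therefore $\lmax(H^{XY}(S_m)) = m/2 + (m+1)/2 = m + 1/2$. If $m$ is even, the feasible range is $0 \le k \le m/2$; the integer maximizer is $k = m/2$ (and the symmetric point $k = m/2 + 1$ is excluded, but corresponds to an isomorphic token graph anyway by $F_k \cong F_{n-k}$). Then $f(m/2) = (m/2)(m/2+1) = (m^2 + 2m)/4$, giving $\lmax(H^{XY}(S_m)) = m/2 + \sqrt{(m^2+2m)/4}$.

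There is no genuine obstacle here: the work is packaged into \cref{fact:xy_A_equivalence}, \cref{lem:bipartite_token_symmetry}, \cref{lem:bipartite_graph_has_bipartite_token}, and \cref{lem:star_A_max}. The only mild care needed is in the parity case split and the feasibility check $k \le \floor{n/2}$, which is why the odd and even cases yield different closed forms.
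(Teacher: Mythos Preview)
Your proposal is correct and follows essentially the same route as the paper: invoke \cref{fact:xy_A_equivalence} with $W=m$, use bipartiteness of $S_m$ (via \cref{lem:bipartite_token_symmetry}) to replace $-\lmin(A(F_k(S_m)))$ by $\lmax(A(F_k(S_m)))=\sqrt{k(m+1-k)}$ from \cref{lem:star_A_max}, and then optimize over $k$. Your identification of the optimizing $k$ in each parity case is in fact the correct one; the paper's brief proof appears to have the even and odd assignments swapped.
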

\begin{proof}
    The proof of the second equality comes from maximizing \cref{cor:star_A_min} over $k$. This yields $k=(m+1)/2$ for even $m$ and $k=m/2$ for odd $m$. These values are then plugged into \cref{fact:xy_A_equivalence}, along with $W_{S_m}=m$.
\end{proof}

\section{Conjectures}\label{sec:conj}
We now present our conjectures on the extremal spectra of token graphs and their respective Hamiltonians under the equivalence in \cref{sec:equivalence}.

\subsection{Token graphs}\label{sec:conj/token}
Our conjectures for token graphs do not require any prerequisite knowledge of quantum computation, and we believe them to be of independent interest to the spectral graph community. We restate them for convenience convenience: for all  unweighted graphs $G$ and for all $1\le k\le \floor{n/2}$, we conjecture

\ConjLMax*

\ConjQMax*

\ConjAMax*

\ConjAMin*

These bounds are in many cases tighter than those currently found in the literature. For example, a different upper bound for $\lmax\of{L\of{F_k\of{G}}}$ is presented in \cite[Theorem 1.2]{lew2024}
\begin{align}\label{eq:lew_upper_bound}
    \lmax\of{L\of{F_k\of{G}}} &\le k \, \lmax \of{L(G)}, \quad\quad \forall \,\,0\le k \le \left\lfloor \frac{n}{2} \right\rfloor.
\end{align}
This upper bound is \emph{multiplicative} with respect to the maximum eigenvalue of the Laplacian of $G$, while ours is \emph{additive} with respect to the number of edges in $G$. As an example, consider the star graph $S_m$. Our bound states that $\lmax\of{F_k(S_m))} \le m+k$, while \cref{eq:lew_upper_bound} yields $k(m+1)=mk+k$.

For the maximum Laplacian eigenvalue, we show it suffices to prove the conjecture for \emph{triangle free graphs}
\begin{lemma}\label{lem:token_remove_triangles}
    Suppose there exists some unweighted graph $G$ where $\lmax\of{L\of{F_k\of{G}}} > m+k$. Then there exists a triangle-free graph $G'$ with $m'$ edges such that $\lmax\of{L\of{F_k\of{G'}}} > m'+k$.
\end{lemma}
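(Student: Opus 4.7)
I would prove the lemma by induction on the number of triangles in $G$, showing that any single triangle can be destroyed by deleting one of its edges while preserving the strict violation of the conjecture. If $G$ is already triangle-free we take $G'=G$; otherwise we fix any triangle $\{a,b,c\}$ and argue that removing some edge of that triangle yields a graph $G''$ with $m''=m-1$ and $\lmax\of{L\of{F_k\of{G''}}}>m''+k$. Since deleting an edge in a chosen triangle destroys that triangle and never creates a new one, the triangle count strictly decreases and the induction terminates at a triangle-free $G'$.

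\textbf{Key tool.} The argument rests on an edge-wise decomposition of the token Laplacian. For each $e=(i,j)\in E$, let $\tilde L_e$ be the Laplacian of the subgraph of $F_k(G)$ consisting of the $\binom{n-2}{k-1}$ edges $\{A\cup\{i\},\,A\cup\{j\}\}$ with $A\subseteq V\setminus\{i,j\}$, $|A|=k-1$. Every edge of $F_k(G)$ arises in this way for a unique $e\in E$, so
\[
L\of{F_k\of{G}}=\sum_{e\in E}\tilde L_e, \qquad v^{\!\top}\tilde L_{ij}v=\sum_A\of{v_{A\cup\{i\}}-v_{A\cup\{j\}}}^2.
\]
Let $v$ be a unit max eigenvector of $L\of{F_k\of{G}}$, and write $x_A=v_{A\cup\{a\}}$, $y_A=v_{A\cup\{b\}}$, $z_A=v_{A\cup\{c\}}$. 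The elementary identity $(x-y)^2+(y-z)^2+(z-x)^2=3(x^2+y^2+z^2)-(x+y+z)^2$ gives, after summing over $A$,
\[
v^{\!\top}\of{\tilde L_{ab}+\tilde L_{bc}+\tilde L_{ac}}v\;\le\;3\sum_{A}\of{x_A^2+y_A^2+z_A^2}\;=\;3\!\!\sum_{\substack{S\in\binom{V}{k}\\|S\cap\{a,b,c\}|=1}}\!\!v_S^2\;\le\;3\|v\|^2=3.
\]

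\textbf{Finishing the step.} By averaging, some edge $e^\star\in\{(a,b),(b,c),(a,c)\}$ satisfies $v^{\!\top}\tilde L_{e^\star}v\le 1$. Let $G''=G-e^\star$, so $m''=m-1$. By the Rayleigh principle,
\[
\lmax\of{L\of{F_k\of{G''}}}\;\ge\;v^{\!\top}L\of{F_k\of{G''}}v\;=\;\lmax\of{L\of{F_k\of{G}}}-v^{\!\top}\tilde L_{e^\star}v\;>\;(m+k)-1\;=\;m''+k,
\]
using $\lmax\of{L\of{F_k\of{G}}}>m+k$ strictly. Since $e^\star$ lies in the chosen triangle, $G''$ has strictly fewer triangles than $G$ and the same vertex set (so $k\le\lfloor n/2\rfloor$ is maintained), and we iterate.

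\textbf{Main obstacle.} The only nontrivial ingredient is the tight budget: one must show that a triangle's total contribution to $v^{\!\top}L\of{F_k\of{G}}v$ is at most $3$, so that \emph{by averaging} some edge contributes at most $1$, exactly matching the loss of $1$ in the right-hand side $m+k$ after deletion. The combinatorial identity above is what makes this work — a naive bound using $\lmax(\tilde L_e)=2$ per edge would only give a budget of $6$, which is too weak. Everything else (the edge decomposition, Rayleigh, and the inductive termination) is routine.
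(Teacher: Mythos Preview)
Your overall strategy is sound and uses the same key fact as the paper: a triangle's contribution to $L(F_k(G))$ has operator norm at most $3$. The paper's proof is slightly more direct --- it removes all three edges of the triangle at once and applies Weyl's inequality, obtaining $\lmax(L(F_k(H)))\ge\lmax(L(F_k(G)))-\lmax(L(F_k(\Delta)))>(m+k)-3=m_H+k$, with no need for the averaging step.

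There is, however, a genuine gap in your key bound. Your variables $x_A,y_A,z_A$ are defined only for $A\subseteq V\setminus\{a,b,c\}$ with $|A|=k-1$, so summing the identity over such $A$ captures only the token-graph triangles on $\{A\cup\{a\},A\cup\{b\},A\cup\{c\}\}$, i.e.\ $k$-sets $S$ with $|S\cap\{a,b,c\}|=1$. But for $k\ge 2$ the quadratic form $v^\top\tilde L_{ab}v$ sums over \emph{all} $A'\subseteq V\setminus\{a,b\}$ of size $k-1$, including those containing $c$. These give a second family of token triangles: for each $B\subseteq V\setminus\{a,b,c\}$ with $|B|=k-2$, the vertices $B\cup\{a,b\},\,B\cup\{a,c\},\,B\cup\{b,c\}$ form a triangle in $F_k(\Delta)$ (e.g.\ $B\cup\{a,b\}$ and $B\cup\{a,c\}$ are joined via the edge $(b,c)$). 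Your displayed inequality $v^\top(\tilde L_{ab}+\tilde L_{bc}+\tilde L_{ac})v\le 3\sum_{|S\cap\{a,b,c\}|=1}v_S^2$ is therefore false as written, since the left side includes the second family while the right side does not.

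The fix is immediate: apply the same identity to the second family. Since the two families involve disjoint vertex sets of $F_k(G)$ (one has $|S\cap\{a,b,c\}|=1$, the other $|S\cap\{a,b,c\}|=2$), you get
\[
v^\top\!\bigl(\tilde L_{ab}+\tilde L_{bc}+\tilde L_{ac}\bigr)v\;\le\;3\!\!\sum_{\substack{S\in\binom{V}{k}\\1\le|S\cap\{a,b,c\}|\le 2}}\!\!v_S^2\;\le\;3\|v\|^2=3,
\]
and the rest of your argument goes through unchanged.
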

\begin{proof}
    If $G$ is triangle-free, we are done. Otherwise, decompose $G$ into a graph $H$ with $m_H$ edges and the triangle $\Delta$. Then by the triangle inequality
    \begin{align*}
        \lmax\of{L\of{F_k\of{H}}} &\ge \lmax\of{L\of{F_k\of{G}}} - \lmax\of{L\of{F_k\of{\Delta}}}\\
        &> m + k - \lmax\of{L\of{F_k\of{\Delta}}}.
    \end{align*}

We can check $\lmax\of{L\of{F_k\of{\Delta}}} \le 3$. Thus, $\lmax\of{L\of{F_k\of{H}}} > m_H+k$, where $m_H=m-3$ is the number of edges in $H$. We may repeat this argument now starting with $H$ until we have removed all triangles. 
\end{proof} 

We additionally investigate the relationship of the maximum eigenvalues of the signless Laplacian and adjacency matrices for token graphs of differing $k$. While it is known that the spectrum of the Laplacian of the \emph{$k$-th} token graphs are contained in the  \emph{$(k+1)$-th} token graph (\cref{lem:laplacian_spectrum_containment}), similar results are not true for the adjacency matrix or signless Laplacians. The graph $K_4$ provides a simple counterexample: $\eigs(F_1(A(K_4)))=\ofc{-1,3}$ while $\eigs(F_2(A(K_4)))=\ofc{-2,0,4}$, and $\eigs(F_1(Q(K_4)))=\ofc{2,6}$ while $\eigs(F_2(Q(K_4)))=\ofc{2,4,8}$. Counterexamples for the adjacency matrix were previously observed (see below Equation~2 of \cite{rudolph2002}).

A weaker implication of \cref{lem:laplacian_spectrum_containment} is that $\lmax\of{L\of{F_{k}\of{G}}} \le \lmax\of{L\of{F_{k+1}\of{G}}}$. While the eigenspectra of the adjacency and signless Laplacian matrices do not follow ordered containment, we analogously conjecture that their \emph{maximum} eigenvalues may be ordered

\ConjQMonotonic*

\ConjAMonotonic*

We find, however, that $\lmin\of{A\of{F_{k-1}\of{G}}} \ge \lmin\of{A\of{F_k\of{G}}}$ is not true. A counterexample is given in \cref{apx:numerics/adj_containment}.

Our conjectures are all supported by numerical verification on the following set of graphs
\begin{itemize}
    \item All non-isomorphic unweighted graphs up to order ten (totaling over $12$ million graphs).
    \item A suite of weighted graphs consisting of
    \begin{itemize}
        \item $10,000$ Erdős--Rényi random graphs per order ($n=3,\ldots,10$) with edge probability $p \sim \text{Uniform}(0,1)$ and edge weights $w \sim \text{Uniform}(0,1)$
        \item $5,000$ complete graphs per order with edge weights drawn from three different distributions: uniform$(0,1)$, exponential with $\lambda=1$, and exponential with $\lambda=10$.
    \end{itemize}
\end{itemize}
We implemented the computations in \texttt{Python}, using the \texttt{nauty} and \texttt{Traces} packages \cite{mckay2013} for efficient graph generation,  \texttt{NumPy} and \texttt{SciPy} \cite{virtanen2020, harris2020} for linear algebra operations and eigenvalue computations, and \texttt{NetworkX} \cite{hagberg2008} for graph manipulation and structural analysis. The codebase and generated dataset used to test the conjectures is available at \cite{apte2025a}.

\subsubsection{Extension to weighted graphs}
We show that the set of conjectures in \cref{sec:conj/token} actually yield stronger combinatorial conjectures on arbitrary (possibly weighted) graphs. We restate the lemma for convenience and defer the proof to \cref{apx:token_reductions}

\LemTokenReductions*

\subsection{Local Hamiltonians}\label{sec:conj/hams}
Using the connection between token graphs and $2$-local Hamiltonians established in \cref{sec:equivalence}, we can translate our conjectured bounds on the eigenspectra of token graphs in \cref{sec:conj/token} to bounds on the maximum energy of QMC, XY, and EPR, as introduced in \cref{sec:prelim/quantum}

\hBounds*

These corollaries can be easily confirmed by the equivalence between the QMC, EPR, and XY Hamiltonians with the union of the matrices $L(F_k(G))$, $Q(F_k(G))$, and $-A(F_k(G))$ as described in \cref{sec:equivalence}. The bounds \cref{conj:L_max} and \cref{conj:Q_max} have been previously hypothesized, first in \cite{lee2024} and later in \cite{jorquera2024, gribling2025, apte2025}. However, we formally propose the conjectures here.

\begin{remark}
    The conclusion in \cref{eq:qmc_bound} for QMC is shown to be tight for complete bipartite graphs in \cite{lieb1962,rademaker2019,takahashi2023}. The conclusion in \cref{eq:epr_bound} for EPR is therefore tight for complete bipartite graphs via \cref{lem:bipartite_equivalence}. It can be easily confirmed that the bound is tight for EPR on complete graphs from \cref{apx:other_app/complete}.
\end{remark}

\subsubsection{Reduction to biconnected factor-critical graphs}\label{sec:conj/ham/reduction}
We now show that in order to prove the conclusions in \cref{lem:h_bounds}, it suffices to prove simpler statements on a subset of unweighted graphs. We first introduce some helpful definitions.

\begin{definition}[Factor-critical graph]\label{def:factor_critical_graph}
    A graph $G(V,E)$ is \emph{factor-critical} if it is an unweighted graph such that the subgraph induced by $V \setminus \ofc{i}$ has a perfect matching for any $i \in V$.
\end{definition}

\begin{definition}[Biconnected graph]\label{def:biconnected_graph}
    A graph $G(V,E)$ is \emph{biconnected} if the subgraph induced by $V \setminus \ofc{i}$ is connected for any $i \in V$.
\end{definition}

Let $\mathcal{F}$ denote the set of biconnected factor-critical graphs. Note that due to \cref{def:factor_critical_graph}, graphs $F \in \mathcal{F}$ must be odd order. We then state the following lemma and defer the proof to \cref{apx:ham_reductions}.

\begin{restatable}{lemma}{LemHReductions}\label{lem:h_reductions}
\begin{align}
    \lmax\of{H^{QMC}(G)} \le m + \frac{n-1}{2}  &\implies \lmax\of{H^{QMC}(G)} \le W(G)+M(G)\,, \label{eq:qmc_reduction}\\
    \lmax\of{H^{EPR}(G)} \le m + \frac{n-1}{2}  &\implies \lmax\of{H^{EPR}(G)} \le  W(G)+M(G)\,, \label{eq:epr_reduction}\\
    \lmax\of{H^{XY}(G)} \le m + \frac{n-1}{4}  &\implies \lmax\of{H^{XY}(G)} \le  W(G)+\frac{M(G)}{2}\,, \label{eq:xy_max_reduction}\\
    \lmin\of{H^{XY}(G)} \ge -\frac{n-1}{4}  &\implies \lmin\of{H^{XY}(G)} \ge -\frac{M(G)}{2} \,\label{eq:xy_min_reduction},
\end{align}
    where the hypotheses must hold for all $G \in \mathcal{F}$ , and the conclusions are for any (possibly weighted) graph $G$.
\end{restatable}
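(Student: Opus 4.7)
The plan is to reduce an arbitrary weighted graph $G$ to the biconnected factor-critical family $\mathcal{F}$ via a block decomposition followed by a Gallai--Edmonds structural decomposition. I describe the plan for the QMC implication (\cref{eq:qmc_reduction}) in detail; the EPR case (\cref{eq:epr_reduction}) follows the same template using the positivity $h^{EPR}_e \succeq 0$ together with \cref{lem:bipartite_equivalence}, while the XY implications (\cref{eq:xy_max_reduction,eq:xy_min_reduction}) are proved analogously but for the shifted operator $H^{XY}(G) - (W(G)/2) I$, which by \cref{fact:xy_A_equivalence} equals a direct sum of adjacency matrices of token graphs (up to sign), together with the monogamy-type spectral bounds \cref{lem:star_A_max} and \cref{cor:star_A_min}.

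By continuity of both sides in edge weights, I first reduce to the case of positive rational (and then, by scaling, integer) weights. Next comes the block decomposition: if $G$ has a cut vertex $v$ decomposing $G = G_1 \cup G_2$ with $V(G_1) \cap V(G_2) = \{v\}$, then $H^{QMC}(G) = H^{QMC}(G_1) + H^{QMC}(G_2)$, so $\lmax(H^{QMC}(G)) \le \lmax(H^{QMC}(G_1)) + \lmax(H^{QMC}(G_2))$. Combined with the induction hypothesis, this yields $\lmax(H^{QMC}(G)) \le W(G) + M(G_1) + M(G_2)$. Since $M(G_1) + M(G_2)$ exceeds $M(G)$ by at most $1$, and does so exactly when $v$ is essential to both maximum matchings, in that case I must strengthen subadditivity by $1$. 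This is done via a monogamy-of-entanglement argument at $v$: every maximum-energy state of $H^{QMC}(G_i)$ places $v$ in a singlet with some neighbor in $G_i$, so gluing the two states on the shared qubit $v$ forces one of the singlets to break (by monogamy), costing exactly one unit of energy and canceling the $+1$ in the matching count. This reduces the problem to biconnected graphs.

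For biconnected $G \notin \mathcal{F}$, I apply the Gallai--Edmonds decomposition $V(G) = D \cup A \cup C$, where components of $G(D)$ are factor-critical, $G(C)$ has a perfect matching, and every maximum matching of $G$ matches $A$ into distinct $D$-components. I bound each biconnected factor-critical component using the hypothesis (further block-decomposing as needed), and handle the $A$-$D$ and $G(C)$ contributions via the matching structure combined with star-monogamy bounds on any residual vertex contributions. The main obstacle will be formalizing the ``$-1$'' monogamy saving at cut vertices as a rigorous operator/spectral inequality, which I expect to require exploiting the $SU(2)$-invariance of $H^{QMC}$ (and its analogs) to constrain the reduced state of the shared qubit $v$ in any maximum-energy state. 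A secondary difficulty is handling biconnected $G$ that admit a perfect matching (where Gallai--Edmonds is degenerate with $D = \emptyset$ and no nontrivial factor-critical piece appears); this likely requires a separate vertex-removal reduction to the odd-order factor-critical setting, or invocation of bipartite-type bounds where applicable.
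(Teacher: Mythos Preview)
Your approach is fundamentally different from the paper's, and the gaps you yourself flag are real. The paper does not induct on any structural graph decomposition; it argues in one shot via the matching polytope. The ingredient you are missing is Pulleyblank's characterization (\cref{lem:match_lp}): over any weighted $G$, the LP $\max \sum_e w_e z_e$ subject to $z \ge 0$, the degree constraints $\sum_{j\in N(i)} z_{ij}\le 1$, and the constraints $\sum_{e\in E(S)} z_e \le (|S|-1)/2$ for every $S$ with $G(S)\in\mathcal{F}$, has optimum exactly $M(G)$. The proof then takes a minimal counterexample $G$; minimality forces every optimal edge energy $x^*_e := \tr[h^{QMC}_e\rho^*] \ge 1$ (otherwise delete that edge and get a smaller counterexample), so $y^*_e := x^*_e - 1 \ge 0$. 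The star monogamy bound of \cite{anshu2020} gives the degree constraints on $y^*$, and the hypothesis on $\mathcal{F}$ gives the biconnected factor-critical constraints. Thus $y^*$ is LP-feasible, whence $\sum_e w_e y^*_e \le M(G)$, contradicting $\lmax > W(G)+M(G)$. The EPR and XY cases follow identically using \cref{lem:l_star_bound} and \cref{cor:star_xy_monogamy} for the degree constraints. No rational/integer reduction, no block decomposition, and no Gallai--Edmonds are needed.

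Your ``$-1$ monogamy saving'' at a cut vertex is not an argument for an upper bound. What you actually need is the operator inequality $\lmax(H^{QMC}(G)) \le \lmax(H^{QMC}(G_1)) + \lmax(H^{QMC}(G_2)) - 1$ whenever $v$ is essential in both matchings, and this is a statement about \emph{every} state on $G$, not about a particular glued tensor product; the maximum-energy state of $G$ need not restrict to maximum-energy states of $G_1,G_2$, nor need any such state place $v$ in a singlet. Even if you patch this by partitioning edges into $E(G_1{-}v)$, $E(G_2{-}v)$, and the star at $v$ and summing three induction/monogamy bounds, you are effectively using exactly the LP degree constraint at $v$. The same pattern recurs in your biconnected case: handling a perfect-matching graph or a nontrivial Gallai--Edmonds decomposition amounts to re-proving, by hand, that the odd-set inequalities over $\mathcal{F}$ together with the degree inequalities cut out the matching polytope. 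The paper short-circuits all of this by invoking that polytope description directly.
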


In words, \cref{lem:h_reductions} states that if we can prove statements on the spectral radii of biconnected factor-critical graphs, we can prove statements on the spectral radii of \emph{all graphs}. We also may immediately identify useful corollary of the above reductions

\begin{corollary}
    For bipartite graphs, the conclusions in \cref{lem:h_bounds} hold.
\end{corollary}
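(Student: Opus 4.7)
The plan is to apply \cref{lem:h_reductions} and argue that, when the starting graph $G$ is bipartite, the hypotheses of the four implications become trivial. The key structural observation is that $\mathcal{F}$ contains no bipartite graph on more than one vertex. A factor-critical graph must have odd order, since $G - v$ must carry a perfect matching on $|V(G)| - 1$ vertices for every $v$. A bipartite graph of odd order has parts $A$ and $B$ with $|A| \neq |B|$; removing any vertex from the smaller part leaves a bipartite graph whose parts differ in size by at least two, so no perfect matching exists. Hence every bipartite biconnected factor-critical graph is a single vertex.

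The next step is to check that the reduction underlying \cref{lem:h_reductions} (deferred to \cref{apx:ham_reductions}) preserves bipartiteness, so that applied to bipartite $G$ the only members of $\mathcal{F}$ it produces are trivial single vertices. Standard reductions based on block decompositions, removal of bridges, or contraction along edges of a fixed maximum matching take subgraphs or quotients of $G$ that inherit bipartiteness; assuming the appendix proof follows this template, the factor-critical pieces are exactly single vertices. On a single vertex we have $m = 0$, $n = 1$, and every Hamiltonian is the zero operator on the empty edge set, so each of the hypotheses
\begin{align*}
    \lmax(H^{QMC}) \le 0, \quad \lmax(H^{EPR}) \le 0, \quad \lmax(H^{XY}) \le 0, \quad \lmin(H^{XY}) \ge 0
\end{align*}
holds. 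Applying \cref{lem:h_reductions} then yields all four conclusions of \cref{lem:h_bounds} for every bipartite $G$.

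The main obstacle is verifying that the reduction actually preserves bipartiteness, which requires reading the proof in \cref{apx:ham_reductions}. If some step introduces non-bipartite pieces (for instance via a contraction that creates an odd cycle), I would fall back on \cref{lem:bipartite_equivalence}, which collapses the QMC and EPR bounds into a single statement and, via \cref{lem:bipartite_token_symmetry} together with \cref{fact:xy_A_equivalence}, makes the two XY bounds equivalent. It would then suffice to prove only two of the four bounds directly for bipartite $G$, most naturally by working with $L(F_k(G))$ (which equals $Q(F_k(G))$ in spectrum on bipartite token graphs by \cref{lem:bipartite_graph_has_bipartite_token} and \cref{lem:bipartite_token_symmetry}) and combining a matching-based upper bound for $\lmax\of{L\of{F_k\of{G}}}$ with \cref{fact:qmc_L_equivalence}.
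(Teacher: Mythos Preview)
Your approach is essentially the paper's: both invoke \cref{lem:h_reductions} (really its proof) and argue that a bipartite graph cannot contain a nontrivial biconnected factor-critical induced subgraph. You show this by a direct parity argument on the bipartition; the paper instead appeals to the odd-ear characterization (\cref{def:odd_ear_decomp}, \cref{lem:fc_ear}) to get an odd cycle in any factor-critical graph. Either argument works.

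Your one flagged uncertainty---whether the reduction in \cref{apx:ham_reductions} preserves bipartiteness---is easily resolved and is exactly what the paper glosses over. The appendix argument passes from a counterexample $G$ to a minimal counterexample by \emph{deleting edges}, and then the violated LP constraint \eqref{eq:lp_2vcfc_constraint} singles out an \emph{induced subgraph} $G(S)\in\mathcal{F}$. Both operations preserve bipartiteness, so no odd cycle is ever created; your worry about contractions is unfounded since none occur. With that gap closed your proof is complete, and the fallback via \cref{lem:bipartite_equivalence} is unnecessary.
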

\begin{proof}
    \cref{lem:h_reductions} asserts that if any of \cref{lem:h_bounds} are violated for some graph $G$, their corresponding hypotheses must also be violated by some biconnected factor-critical subgraph of $G$. As we later show in \cref{def:odd_ear_decomp}, a factor-critical subgraph must contain an odd-length cycle. However, a bipartite graph has no odd length cycles, so this is a contradiction.
\end{proof}

Like the case of Laplacians in \cref{lem:token_remove_triangles}, it suffices to consider triangle-free biconnected factor-critical graphs for QMC 
\begin{lemma}\label{lem:ham_remove_triangles}
    Suppose there exists some unweighted graph $G$ where $\lmax\of{H^{QMC}\of{G}} > m+M(G)$. Then there exists a triangle-free graph $G'$ with $m'$ edges such that $\lmax\of{H^{QMC}\of{G'}} > m'+M(G')$.
\end{lemma}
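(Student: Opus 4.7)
The plan is to mirror the strategy of \cref{lem:token_remove_triangles}: remove triangles one at a time while preserving strict violation of the conjectured inequality. If $G$ is already triangle-free, take $G'=G$. Otherwise, pick any triangle $\Delta$ in $G$ and let $H = G \setminus E(\Delta)$ denote the graph on the same vertex set with the three edges of $\Delta$ deleted.

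By definition of $H^{QMC}$ as a sum over edges, we have the operator identity $H^{QMC}(G) = H^{QMC}(H) + H^{QMC}(\Delta)$, all acting on the same $n$-qubit Hilbert space. Weyl's inequality then yields $\lmax\of{H^{QMC}(H)} \ge \lmax\of{H^{QMC}(G)} - \lmax\of{H^{QMC}(\Delta)}$. The main technical step, and the only real obstacle, is to bound $\lmax\of{H^{QMC}(\Delta)} \le 3$. This can be done either by direct diagonalization on three qubits (the two total-spin sectors give energies $0$ and $3$) or, more in the spirit of this paper, by \cref{fact:qmc_L_equivalence} together with $F_k(K_3) \cong K_3$ for $k\in\ofc{1,2}$ and $\lmax\of{L(K_3)} = 3$.

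Combining these with $m_H = m-3$ and the easy observation that $M(H) \le M(G)$ (any matching in $H$ is also a matching in $G$), the hypothesis $\lmax\of{H^{QMC}(G)} > m + M(G)$ yields
\begin{align*}
\lmax\of{H^{QMC}(H)} > m + M(G) - 3 = m_H + M(G) \ge m_H + M(H),
\end{align*}
so $H$ again violates the target inequality. Iterating on $H$ strictly decreases the edge count at each step and terminates in a triangle-free graph $G'$ that still violates the bound, as required. Note the argument is in fact strictly simpler than \cref{lem:token_remove_triangles}, since the only quantity on the right-hand side besides $m$ is the matching number, which is monotone under edge deletion; no care is needed in tracking how $M$ evolves beyond this monotonicity.
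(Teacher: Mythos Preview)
Your proof is correct and follows essentially the same approach as the paper: remove a triangle, apply Weyl's inequality with $\lmax\of{H^{QMC}(\Delta)}\le 3$, and use $M(H)\le M(G)$ together with $m_H=m-3$ to preserve strict violation, iterating to reach a triangle-free graph. Your writeup in fact supplies a bit more detail (explicitly citing Weyl's inequality and offering two ways to verify the triangle bound), but the argument is the same as the paper's.
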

\begin{proof}
    If $G$ is triangle-free, we are done. Otherwise, decompose $G$ into a graph $H$ and the triangle $\Delta$. Then by the triangle inequality
    \begin{align*}
        \lmax\of{H^{QMC}\of{H}} &\ge \lmax\of{H^{QMC}\of{G}} - \lmax\of{H^{QMC}\of{\Delta}}\\
        &> m + M(G) - \lmax\of{H^{QMC}\of{\Delta}}.
    \end{align*}

We can check $\lmax\of{H^{QMC}\of{\Delta}} \le 3$, and we know $M(G) \ge M(H)$. Thus, $\lmax\of{H^{QMC}\of{H}} > m_H+M(H)$. We may repeat this argument now starting with $H$ until we have removed all triangles. A similar argument also works for $\lmax\of{L}$ to remove triangles. 
\end{proof} 

We present additional numerics, which in  particular test spectral bounds with respect to the maximum cut in \cref{apx:numerics/cut_bounds}.

\section{Implications of conjectures on approximation ratios}\label{sec:imp}

We now discuss how our conjectures in \cref{sec:conj} imply improved approximation ratios for QMC, XY, and EPR. We first present a general way to efficiently augment convex relaxations of the local Hamiltonian problems (such as \cite{parekh2021a, king2023, huber2024}) with constraints from maximum matchings, such that the relaxations obey \cref{lem:h_bounds}. We then present approximation algorithms for EPR, XY, and QMC that prepare simple tensor products of $1$ and $2$-qubit states. The algorithms and analysis are similar to those presented in \cite{anshu2020}.  Finally, we show how augmenting the state-of-the-art algorithm of \cite{apte2025} with matching constraints leads to a new state-of-the-art approximation ratio.

\subsection{Matching-augmented convex relaxations}\label{sec:imp/matching_sos}
We show how a resolution of \cref{conj:L_max,conj:Q_max,conj:A_max,conj:A_min} can be used to augment \emph{any} polynomial-time-solvable convex relaxation for the corresponding problem so that the solutions of the augmented convex relaxations also obey the bounds of \cref{lem:h_bounds}. All we require is that the constraints we add are indeed valid for any quantum state, which follows from our conjectures. A generic approach is possible in our setting because the bounds of \cref{lem:h_bounds} are closely related to constraints defining the matching polytope of a graph. Although there are an exponential number of such constraints, we appeal to the ellipsoid method in conjunction with an existing separation oracle for the matching polytope to obtain a polynomial-time algorithm.  

Our discussion will apply directly for the QMC or EPR problem, and the case of the XY Hamiltonian follows similarly upon rescaling and shifting the problem. We assume an instance $G=(V,E,w)$ and a convex programming relaxation $R$ that can be solved in polynomial time. The latter means that for any $\varepsilon > 0$, in time polynomial in the size of the instance and $O(\log(1/\varepsilon))$, one can obtain a point $x$ such that Euclidean distance of $x$ to $R$ (i.e., the feasible region of $R$) is at most $\varepsilon$ and the objective value of $x$ differs from the optimal by at most $\varepsilon$. 

\paragraph{Separation oracles} The most natural case of $R$ being solvable in polynomial time is when $R$ is a polynomial sized convex program; however, some convex programs with a superpolynomial number of constraints can also be solved efficiently. One such case uses a polynomial-time (weak) separation oracle that enables the ellipsoid method to solve $R$ in polynomial time. A separation oracle is a polynomial-time algorithm that when asked if a candidate point $\tilde{x}$ is in $R$, either certifies that $\tilde{x} \in R$ or produces a constraint $f(x) \leq 0$ that is valid for all $x \in R$ but is violated by $\tilde{x}$ (i.e., $f(\tilde{x}) > 0$). If $R$ has a polynomial number of constraints, a separation oracle can just simply check each constraint. A nontrivial example is the semidefinite constraint $X \succeq 0$ on $X \in \mathbb{R}^{d \times d}$, which is equivalent to the infinite family of linear constraints
\begin{align*}
\{a^TXa \geq 0 \mid a \in \mathbb{R}^d\}.
\end{align*}
A separation oracle is given by computing the smallest eigenvalue and corresponding eigenvector $u$ of $X$: this either certifies that $X \succeq 0$ or returns the violated constraint $u^TXu < 0$. 

Exact separation algorithms for general convex programs are not always possible using finite precision, and the ellipsoid method is also able to use \emph{weak} separation oracles that run in polynomial time and allow some error to solve convex programs in polynomial time. Since our matching-augmented approach is a standard application of the ellipsoid method with a separation oracle, we omit the detailed analysis of the running time and weak oracle precision; see Chapter 4 of \cite{groetschel1993} for more details. 

We further assume that the relaxation $R$ includes variables $\of{g_{ij} \in \ofb{0,2}}_{ij\in E}$ corresponding to the energy earned on each edge. Although the objective only depends on $g$, the relaxation $R$ may include polynomially many additional variables (such as the moment variables of fixed levels of the NPA/moment-SoS hierarchy~\cite{navascues2008}), which generally strengthens the relaxation. We give a polynomial-time separation oracle for $R$ augmented with the constraints 
\begin{align*}
\mathcal{C} \defeq \ofc{\sum_{(i,j) \in E} a_{ij} g_{ij} \leq M_a(G) + \sum_{(i,j) \in E} a_{ij} \;\middle|\; a \in \mathbb{R}_+^{|E|} },
\end{align*}
where $M_a(G)$ is the maximum weight of a matching in $G$ with weights $a$. Under \cref{conj:L_max,conj:Q_max,conj:A_max,conj:A_min}, all constraints in $\mathcal{C}$ are satisfied by
\begin{align*}
    g_{ij} \defeq \bra{\psi} h_{ij} \ket{\psi}
\end{align*}
arising from a quantum state $\ket{\psi}$, since by changing the weights so that $G \defeq (V,E,a)$,
\begin{align*}
    \sum_{(i,j) \in E} a_{ij} g_{ij} \leq \lmax(H(G)) \leq M_a(G) + \sum_{(i,j) \in E} a_{ij}
\end{align*}
by \cref{lem:h_bounds}. This means that augmenting $R$ with $\mathcal{C}$ still results in a relaxation of $\lmax(H(G))$.

\begin{algorithm}
\caption{Separation oracle for a matching-augmented convex relaxation}
\label{alg:sep_oracle}
\begin{algorithmic}[1]
\Require convex relaxation $R$ with objective variables $(g_{ij})_{ij \in E}$ and weak separation oracle $O_R$
\Input candidate point $(\tilde{g},\tilde{f})$ with $\tilde{g} = (\tilde{g}_{ij} )_{ij \in E}$ and $\tilde{f}$ corresponding to auxiliary variables of $R$
\State Query $O_R$ on $(\tilde{g},\tilde{f})$ \label{line:query_O_R}
\If{$O_R$ returns a violated constraint $c$}
    \State \Return $c$
\EndIf \label{line:return_violated_c}
\State Set $F \defeq \{(i,j) \in E : \tilde{g}_{ij} > 1\}$
\State Set $z_{ij} \defeq \tilde{g}_{ij}-1$ for all $(i,j) \in F$ \label{line:def_z}
\State Query the Padberg-Rao~\cite{padberg1982, groetschel1993} separation oracle, $O_{PR}$, for the matching polytope of the graph $H \defeq (V,F)$ on the candidate point $z$
\If{$O_{PR}$ returns a violated constraint $\sum_{(i,j) \in F(S)} z_{ij} > (|S|-1)/2$, for some odd $S \subseteq V$} \label{line:odd_set}
    \State Find $M$, the maximum size of a matching in the induced subgraph $H(S) \defeq (S, F(S))$ \label{line:matching}
    \State \Return the constraint $\sum_{(i,j) \in F(S)} g_{ij} \leq M + |F(S)|$
\ElsIf{$O_{PR}$ returns a violated constraint $\sum_{j \in N_H(i)} z_{ij} > 1$, for some $i \in V$} \label{line:star}
    \State \Return the constraint $\sum_{j \in N_H(i)} g_{ij} \leq 1 + |N_H(i)|$
\Else
    \State \Return ``$(\tilde{g}, \tilde{f})$ is feasible''
\EndIf
\end{algorithmic}
\end{algorithm}

\begin{remark} Using the Padberg-Rao separation oracle in \cref{alg:sep_oracle} allows us to enforce all the constraints of $\mathcal{C}$; however, recent SDP-based approximation algorithms~\cite{lee2024,apte2025} only need the single constraint $\sum_{(i,j) \in E} w_{ij} g_{ij} \leq M(G)$, which is implied by $\sum_{(i,j) \in F} w_{ij} z_{ij} \leq M(H)$. This constraint can be checked by find a maximum weight matching in $H$ and does not require Padberg-Rao, leading to a faster separation oracle.
\end{remark}

\begin{theorem}\cref{alg:sep_oracle} is a polynomial-time weak separation oracle for the convex relaxation $R$ augmented with the constraints $\mathcal{C}$.
\end{theorem}
\begin{proof}
We will let $O$ be the separation oracle described by \cref{alg:sep_oracle}. We first check that $O$ returns precisely violated constraints of $R$ or $\mathcal{C}$. \crefrange{line:query_O_R}{line:return_violated_c} ensure any violated constraint of $R$ is found. 

The Padberg-Rao separation oracle\footnote{The oracle is technically for the $b$-matching problem, and setting $b=1$ recovers the matching case.} checks for violations among the following full description of the matching polytope~\cite{padberg1982,edmonds1965} on $H$.
\begin{align}
      &\sum_{j\in N_H(i)} z_{ij} \le 1 &&\forall\,i\in V, \label{eq:lp_PR_star_constraint} \\
      &\!\!\! \sum_{(i,j)\in F(S)} z_{ij} \le \frac{|S|-1}{2} &&\forall\,S\subseteq V\!: |S| \; \text{odd}, \label{eq:lp_PR_odd_constraint} \\
      & z_{ij} \ge 0 &&\forall\,(i,j)\in F. \label{eq:lp_PR_non-negative_constraint}
\end{align}
For the constraints in $\mathcal{C}$, we first assume that the constraints of $R$ imply $g_{ij} \in [0,2]$ for all $(i,j) \in E$; if not, these can be explicitly added. In this case we have $z_{ij} \in [0,1]$ on \cref{line:def_z} so that \cref{eq:lp_PR_non-negative_constraint} is satisfied. If $O_{PR}$ returns a constraint of \cref{eq:lp_PR_odd_constraint} on \cref{line:odd_set}, then note that the constraint returned by $O$ in this case is in $\mathcal{C}$ by setting
\begin{align*}
    a_{ij} \defeq \begin{cases}
    1 & \text{if } (i,j) \in F(S), \\
    0 & \text{if } (i,j) \in E \setminus F(S),
\end{cases}
\end{align*}
with $M_a(G) = M$. Furthermore this constraint is violated by $\tilde{g}$ since
\begin{align*}
    \of{\sum_{(i,j) \in F(S)} \tilde{g}_{ij}}-|F(S)| = \sum_{(i,j) \in F(S)} z_{ij} > \frac{|S|-1}{2} \geq M,
\end{align*}
where the last inequality follows since $|S|$ is odd, so the maximum size matching in $F(S)$ cannot be a perfect matching and must leave at least one vertex uncovered.

If $O_{PR}$ returns a constraint of \cref{eq:lp_PR_star_constraint} on \cref{line:star}, we can similarly define
\begin{align*}
    a_{ij} \defeq \begin{cases}
    1 & \text{if } (i,j) \in N_H(i), \\
    0 & \text{if } (i,j) \in E \setminus N_H(i),
\end{cases}
\end{align*}
to see that the constraint returned by $O$ is in $\mathcal{C}$. This corresponds to the monogamy of entanglement inequality known as the star bound and is violated since
\begin{align*}
    \of{\sum_{j \in N_H(i)} \tilde{g}_{ij}}-|N_H(i)| = \sum_{j \in N_H(i)} z_{ij} > 1.
\end{align*}

Finally, if $O$ declares that $(\tilde{g},\tilde{f})$ is feasible, then all constraints of $R$ must be satisfied by \crefrange{line:query_O_R}{line:return_violated_c}. Since $O_{PR}$ ensures no matching polytope constraints can be violated, we have that $z$ is a convex combination of matchings $\{M^l\}_l$ in $H$:
\begin{align*}
z &= \sum_l \mu_l M^l \quad\text{with}\quad \sum_l \mu_l = 1 \quad\text{and}\quad \mu_l\geq 0,\,\forall l.
\end{align*}
This implies that for any $(a_{ij} \geq 0)_{ij \in E}$,
\begin{align*}
    \of{\sum_{(i,j) \in E} a_{ij}\tilde{g}_{ij}} - \sum_{(i,j) \in E} a_{ij} &= \sum_{(i,j) \in E} a_{ij} (\tilde{g}_{ij}-1)\\
    &\leq \sum_{(i,j) \in F} a_{ij} z_{ij}\\
    &= \sum_{l} \mu_l \sum_{(i,j) \in F} a_{ij} M^l_{ij}\\
    &\leq \max_l \sum_{(i,j) \in F} a_{ij} M^l_{ij}\\
    &\leq M_a(H)\\ 
    &\leq M_a(G),
\end{align*}
where the first inequality follows since $a_{ij} \geq 0$ and $\tilde{g}_{ij}-1 < 0$ for all $(i,j) \in E \setminus F$, and the last inequality follows since $H$ is a subgraph of $G$. Thus all constraints in $\mathcal{C}$ must be satisfied by $\tilde{g}$.

Our oracle $O$ runs in polynomial time since $O_{R}$ and $O_{PR}$ do. The only other substantial computation performed is computing a maximum size matching on \cref{line:matching}, which runs in polynomial time. As previously mentioned, weak separation oracles have precision requirements in addition to other mild requirements necessary for the ellipsoid method. We assume that $O_R$ satisfies these requirements since $R$ is solvable in polynomial time. The oracle $O_{PR}$ is an exact separation oracle and introduces no extra requirements.
\end{proof}

\subsection{EPR}\label{sec:imp/epr}
Consider the following algorithm, based on \cite[Section 5.1]{king2023} and \cite[Algorithm 1]{apte2025}

\begin{algorithm}
    \caption{EPR state preparation}
    \label{alg:epr_alg}
    Let $M$ be the set of edges in the maximum weight matching of $G$. Then, output the state
    \vspace{-.5em}
    \begin{align}\label{eq:epr_chi}
        \ket{\chi} = \prod_{(i,j) \in M} e^{i \theta_{ij} P_i P_j} \ket{0}^{\otimes n},
    \end{align}
    where
    \begin{align}\label{eq:epr_int_match_theta_assignment}
        P_i = \frac{X_i - Y_i}{\sqrt{2}}, \quad\quad \theta_{ij} = \frac{1}{2}\, \sin^{-1}\of{\frac{\sqrt{5}-1}{2}}.
    \end{align}
\end{algorithm}

We then show 

\begin{restatable}{corollary}{EPRApprox}\label{cor:epr_approx}
    Under \cref{lem:h_bounds}, \cref{alg:epr_alg} achieves an $\frac{1+\sqrt{5}}{4}\approx 0.809$-approximation ratio for EPR. This algorithm prepares a tensor product of $1$ and $2$-qubit states and is efficient.
\end{restatable}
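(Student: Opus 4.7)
The plan is to combine the upper bound $\lmax\of{H^{EPR}(G)} \le W(G)+M(G)$ from \cref{eq:epr_bound} of \cref{lem:h_bounds} with an edge-wise lower bound $\bra{\chi}H^{EPR}(G)\ket{\chi} \ge \frac{1+\sqrt{5}}{4}\of{W(G)+M(G)}$, from which the ratio $\frac{1+\sqrt{5}}{4}$ follows by division. First I would observe that since $M$ is a matching, the exponentials $e^{i\theta_{ij}P_iP_j}$ act on disjoint pairs of qubits and hence commute, so $\ket{\chi}$ factors as a tensor product of two-qubit states on matching edges together with $\ket{0}$ on every unmatched vertex. A short calculation using $P_i^2 = I$ and $P_iP_j\ket{00} = -i\ket{11}$ yields the closed form $e^{i\theta P_iP_j}\ket{00} = \cos\theta\ket{00} + \sin\theta\ket{11}$ for each matching factor.

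Next I would evaluate $\bra{\chi}h^{EPR}_{ij}\ket{\chi}$ in two cases. For $(i,j)\in M$, using $h^{EPR}_{ij} = 2\ket{\phi^+}\bra{\phi^+}_{ij}$, the contribution is $w_{ij}\of{1+\sin 2\theta} = \frac{1+\sqrt{5}}{2}\,w_{ij}$ for the prescribed $\theta$. For $(i,j)\notin M$, the reduced state on $\ofc{i,j}$ is a product across the two qubits where each endpoint contributes either $\cos^2\theta\ket{0}\bra{0}+\sin^2\theta\ket{1}\bra{1}$ (if matched) or $\ket{0}\bra{0}$ (if unmatched). Since both factors are diagonal in the computational basis, only the $I$ and $ZZ$ terms in $h^{EPR}_{ij} = \frac{1}{2}\of{I+XX-YY+ZZ}$ survive, giving expectation $\frac{1}{2}\of{1+\cos^2 2\theta} = \frac{1+\sqrt{5}}{4}$ if both endpoints are matched and $\frac{1}{2}\of{1+\cos 2\theta} = \cos^2\theta \ge \frac{1+\sqrt{5}}{4}$ if only one is. By maximality of a positive-weight maximum matching, every non-matching edge has at least one matched endpoint, so no third ``both unmatched'' case occurs.

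Summing the edge contributions gives $\bra{\chi}H^{EPR}(G)\ket{\chi} \ge \frac{1+\sqrt{5}}{2}M(G) + \frac{1+\sqrt{5}}{4}\of{W(G)-M(G)} = \frac{1+\sqrt{5}}{4}\of{W(G)+M(G)}$, and dividing by the upper bound from \cref{lem:h_bounds} yields the claimed approximation ratio. Efficiency is immediate: a maximum-weight matching is polynomial-time computable by Edmonds' algorithm, and $\ket{\chi}$ is an explicit tensor product of one- and two-qubit states prepared by a constant-depth circuit.

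The main obstacle is really just algebraic bookkeeping: the angle $\theta = \frac{1}{2}\sin^{-1}\of{\frac{\sqrt{5}-1}{2}}$ is engineered so that $\sin 2\theta = \cos^2 2\theta = \frac{\sqrt{5}-1}{2}$, and it is precisely this identity that makes the matching-edge contribution per unit weight exactly twice the both-endpoints-matched non-matching contribution, so that both align with the common ratio $\frac{1+\sqrt{5}}{4}$. Verifying this identity and tracking the reduced states carefully is the only nontrivial step; everything else is a clean summation that leverages \cref{eq:epr_bound} in a black-box way.
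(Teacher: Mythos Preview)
Your proposal is correct and follows essentially the same route as the paper's proof: compute the edge energies of $\ket{\chi}$ in the three cases (edge in $M$; edge not in $M$ with both endpoints matched; edge not in $M$ with one endpoint matched), then combine with the upper bound $W(G)+M(G)$ from \cref{lem:h_bounds}. The only stylistic difference is that the paper keeps $\gamma=\sin 2\theta$ as a free parameter and solves $c_W(\gamma)=c_M(\gamma)$ to recover the prescribed angle, whereas you plug in the stated $\theta$ directly and verify the golden-ratio identity $\sin 2\theta=\cos^2 2\theta$; you also make explicit (via maximality of a positive-weight matching) why the ``both endpoints unmatched'' case cannot occur, a point the paper leaves implicit.
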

The proof is deferred to \cref{apx:approx/epr}.

\subsection{XY}\label{sec:imp/xy}

Consider the following algorithm 

\begin{algorithm}
    \caption{XY state preparation}
    \label{alg:xy_alg}
    For any graph $G$, let $z_C$ be the bitstring corresponding to the maximum cut of $G$. Let $M$ be the set of edges in the maximum weight matching of $G$. Then, prepare the states
    \begin{enumerate}[label=\alph*)]
        \item $\rho_C$: the state corresponding to $z_C$ in the Pauli-$X$ basis.
        \item $\rho_M$: the state given by
        \begin{align}\label{eq:qmc_rho}
        \rho_M = \bigotimes_{(i,j) \in M} \ket{\psi^-}_{ij}\bra{\psi^-}_{ij} \bigotimes_{i \notin M} \rho_{mix},
    \end{align}
    where $\rho_{mix}$ is the maximally mixed state. Then, output the state corresponding to the larger of $\tr\ofb{H^{XY}(G) \rho_C}$ and $\tr\ofb{H^{XY}(G) \rho_M}$.
    \end{enumerate}
\end{algorithm}

It is clear that this algorithm is not efficient, as finding a maximum cut is $\NP$-hard. However, it does prepare a tensor product of $1$ and $2$-qubit states with the following approximation guarantee.

\begin{restatable}{corollary}{XYApprox}\label{cor:xy_approx}
    Under \cref{lem:h_bounds}, \cref{alg:xy_alg} obtains a $5/7 \approx 0.714$-approximation for XY. This algorithm prepares a tensor product of $1$ and $2$-qubit states, but is not efficient.
\end{restatable}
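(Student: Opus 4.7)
The approach is to combine the algorithm's two candidate states with two upper bounds on $\lmax\of{H^{XY}\of{G}}$ --- the conjectured bound from \cref{lem:h_bounds} and an auxiliary max-cut bound --- and close via a case analysis on the normalized quantities $\gamma = C\of{G}/W\of{G}$ and $\mu = M\of{G}/W\of{G}$.

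First I would compute the two candidate energies. The state $\rho_C$ is a product of $\ket{\pm}$'s in the Pauli-$X$ basis encoding $z_C$, so $\braket{X_iX_j} = (-1)^{(z_C)_i + (z_C)_j}$ and $\braket{Y_iY_j} = 0$; each term $h^{XY}_{ij} = \tfrac{1}{2}(I - X_iX_j - Y_iY_j)$ contributes $w_{ij}$ on cut edges and $0$ otherwise, giving $\tr\ofb{H^{XY}\of{G}\rho_C} = C\of{G}$. For $\rho_M$, matching edges carry singlets with $\braket{X_iX_j} = \braket{Y_iY_j} = -1$ (contributing $\tfrac{3w_{ij}}{2}$), while every non-matching edge has at least one maximally-mixed reduced state so its two correlators vanish and it contributes $\tfrac{w_{ij}}{2}$; summing gives $\tr\ofb{H^{XY}\of{G}\rho_M} = W/2 + M$.

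Next I would derive the auxiliary bound $\lmax\of{H^{XY}\of{G}} \le 2C - W/2$. Writing $H^{XY}\of{G} = \tfrac{W}{2}I - \tfrac{1}{2}\Sigma_{XX} - \tfrac{1}{2}\Sigma_{YY}$ with $\Sigma_{XX} \defeq \sum_{(i,j)\in E} w_{ij}X_iX_j$, each of $\Sigma_{XX}$ and $\Sigma_{YY}$ is diagonal in its corresponding Pauli product basis with entries $\sum_{(i,j)} w_{ij}(-1)^{x_i+x_j}$, minimized to $W - 2C$ by the max cut bitstring. Weyl's inequality then gives $\lmin\of{\Sigma_{XX} + \Sigma_{YY}} \ge 2(W - 2C)$ and thus the claim. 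Combined with \cref{lem:h_bounds}, we obtain $\lmax\of{H^{XY}\of{G}} \le \min\of{W + M/2,\; 2C - W/2}$.

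Finally I would verify $\max\of{C,\; W/2 + M} \ge \tfrac{5}{7}\min\of{W + M/2,\; 2C - W/2}$ by case analysis. When $\gamma \ge 3/4 + \mu/4$ (conjectured bound active), the $W/2 + M$ branch handles $\mu \ge 1/3$ via $(1 + 2\mu)/(2 + \mu) \ge 5/7$; otherwise the hypothesis itself forces $\gamma \ge 3/4 + \mu/4 \ge 5/7 + 5\mu/14$, so the $C$ branch suffices. When $\gamma < 3/4 + \mu/4$ (max-cut bound active), the $C$ branch gives $\gamma \ge \tfrac{5}{7}(2\gamma - 1/2)$ for $\gamma \le 5/6$; the remaining regime $\gamma > 5/6$ forces $\mu > 4\gamma - 3 \ge (10\gamma - 6)/7$, exactly what is needed for $1/2 + \mu \ge \tfrac{5}{7}(2\gamma - 1/2)$. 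The hard part is recognizing that the conjectured bound $W + M/2$ alone is insufficient --- it is loose by a factor approaching $2$ on dense graphs like $K_n$ --- so the auxiliary max-cut bound $2C - W/2$ must be brought in; once both are on the table, the four sub-cases are elementary but must dovetail exactly at the corner $(\gamma,\mu) = (5/6, 1/3)$ where the $5/7$ ratio is attained.
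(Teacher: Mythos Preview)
Your proposal is correct and takes essentially the same approach as the paper: the same two candidate energies $C$ and $W/2+M$, the same pair of upper bounds (the conjectured $W+M/2$ from \cref{lem:h_bounds} together with the max-cut bound $2C-W/2$, which is the paper's \cref{lem:xy_cut_bound}), and the same two-parameter minimization. The only cosmetic difference is that you carry out the four-way case split explicitly, whereas the paper simply states the minimizing point $(c,m)=(5/6,1/3)$; your explicit analysis in fact confirms that the worst case occurs exactly there and that the displayed denominator in the paper's \cref{eq:xy_approx_minimization} should read $1+m/2$ rather than $(1+m)/2$.
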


The proof is deferred to \cref{apx:approx/xy}. For an efficient algorithm, we have from \cite[Section 6]{briet2010} that there is an efficient algorithm that returns a randomized product state $\rho_{C'}$ with average energy at least $0.9349$ of the optimal product state (corresponding to $\nu(2)$ in their work). As $\rho_C$ is a product state, this algorithm obtains energy at least $0.9349 \,C$. Thus, we may replace $\rho_C$ with $\rho_{C'}$ to find

\begin{restatable}{corollary}{XYApproxEfficient}\label{cor:xy_approx_efficient}
    Under \cref{lem:h_bounds}, \cref{alg:xy_alg} with $\rho_C$ replaced by $\rho_C'$ obtains a $0.674$-approximation in expectation for XY. This algorithm prepares a tensor product of $1$ and $2$-qubit states and is efficient.
\end{restatable}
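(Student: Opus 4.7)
The plan is to follow the proof of \cref{cor:xy_approx} essentially verbatim, since \cref{alg:xy_alg} with $\rho_C$ replaced by $\rho_{C'}$ differs only in the cut-state component. I would first invoke the guarantee from \cite[Section 6]{briet2010}: the randomized product state $\rho_{C'}$ satisfies $\mathbb{E}[\tr(H^{XY}(G)\rho_{C'})] \geq 0.9349 \cdot P^{*}(G)$, where $P^{*}(G)$ denotes the maximum energy attained by any product state for $H^{XY}(G)$. Because $\rho_C$ is itself a product state with energy $C(G)$, one has $P^{*}(G) \geq C(G)$, so
\begin{align*}
\mathbb{E}\bigl[\tr(H^{XY}(G)\rho_{C'})\bigr] \;\geq\; 0.9349\, C(G).
\end{align*}

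Next, I would reuse the energy computation $\tr(H^{XY}(G)\rho_M) = W(G)/2 + M(G)$ from the proof of \cref{cor:xy_approx}, which is independent of whether the cut component is $\rho_C$ or $\rho_{C'}$. Since the algorithm returns whichever state attains higher energy, pushing the expectation inside a max gives
\begin{align*}
\mathbb{E}[\mathrm{ALG}(G)] \;\geq\; \max\!\bigl(0.9349\, C(G),\; W(G)/2 + M(G)\bigr).
\end{align*}
Dividing by the upper bound $\lmax(H^{XY}(G)) \leq W(G) + M(G)/2$ from \cref{eq:xy_max_bound} then yields a lower bound on the expected approximation ratio. From this point I would mirror the convex combination analysis used for \cref{cor:xy_approx}: write $\mathrm{ALG}(G) \geq \beta \cdot 0.9349\, C(G) + (1-\beta)\bigl(W(G)/2 + M(G)\bigr)$ for a parameter $\beta \in [0,1]$, substitute the structural inequality $C(G) \geq W(G)/2 + M(G)/2$ (which follows from first cutting every matching edge and then randomly cutting the remainder), and choose $\beta$ to optimize the worst case of the resulting ratio as a function of $\mu = M(G)/W(G)$ and $c = C(G)/W(G)$.

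The principal obstacle is purely numerical: confirming that propagating the factor $0.9349$ through the convex combination yields exactly $0.674$, rather than a slightly smaller or larger constant. This amounts to redoing the single-variable optimization underlying the $5/7$ analysis of \cref{cor:xy_approx} with the cut contribution scaled by $\nu(2) = 0.9349$ and identifying the binding graph parameters. No new structural or spectral input about token graphs is required; the corollary is a direct efficient-algorithm corollary of \cref{cor:xy_approx} once the \cite{briet2010} rounding guarantee is substituted.
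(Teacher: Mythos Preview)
There is a genuine gap: you use only the matching-based upper bound $\lmax(H^{XY}(G)) \le W(G)+M(G)/2$ from \cref{eq:xy_max_bound}, but the proof of \cref{cor:xy_approx} that you intend to mirror also relies on the cut-based upper bound $\lmax(H^{XY}(G)) \le 2C(G)-W(G)/2$ of \cref{lem:xy_cut_bound}. The paper's argument places $\min\bigl(2C(G)-W(G)/2,\; W(G)+M(G)/2\bigr)$ in the denominator and then minimizes over the normalized parameters $c=C/W$, $m=M/W$; replacing $c$ by $0.9349\,c$ in the numerator of that two-bound ratio is what yields $0.674$ (at $c\approx 0.816$, $m\approx 0.263$).

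With only the matching bound in the denominator, your convex-combination route collapses: after substituting $C\ge W/2+M/2$ you obtain (normalizing $W=1$)
\[
\frac{\beta\cdot 0.9349\,\tfrac{1+m}{2}+(1-\beta)\bigl(\tfrac12+m\bigr)}{1+m/2},
\]
and at $m=0$ this equals $\tfrac12-0.03255\,\beta\le \tfrac12$ for every $\beta\in[0,1]$. So the analysis as written cannot beat $1/2$, let alone reach $0.674$. Also note that the paper does not use a convex-combination parameter $\beta$ at all; it takes the max in the numerator and the min (over two upper bounds) in the denominator and optimizes over $(c,m)$ directly. Reinstating \cref{lem:xy_cut_bound} in your plan and following that max/min minimization is what is needed.
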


The proof is deferred to \cref{apx:approx/xy}. 

\subsection{QMC}\label{sec:imp/qmc}
We finally present implied approximation ratios for QMC.

\begin{restatable}{corollary}{QMCMATCHApprox}\label{cor:qmc_match_approx}
    Under \cref{lem:h_bounds}, \cref{alg:xy_alg} obtains a $5/8=0.625$-approximation for QMC. This algorithm prepares a tensor product of $1$ and $2$-qubit states but is not efficient.
\end{restatable}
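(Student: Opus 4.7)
The plan is to compute $\tr(H^{QMC}\rho_C)$ and $\tr(H^{QMC}\rho_M)$ in closed form, combine with the conjectured bound $\lambda_{\max}(H^{QMC}(G)) \le W + M$ from \cref{lem:h_bounds}, and argue the $5/8$-ratio via a case split on the matching density $\mu := M(G)/W(G)$.

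First I would compute the two energies. Since $\rho_C$ is a computational-basis product state rotated into the $X$-basis, $\langle X_iX_j\rangle_{\rho_C} = (-1)^{z_{C,i}+z_{C,j}}$ while $\langle Y_iY_j\rangle_{\rho_C} = \langle Z_iZ_j\rangle_{\rho_C} = 0$; expanding $h^{QMC}_{ij}$ gives contribution $1$ for each cut edge and $0$ otherwise, so $\tr(H^{QMC}\rho_C) = C(G)$. For $\rho_M$, each matching edge is a singlet with $\tr(h^{QMC}_{ij}|\psi^-\rangle\langle\psi^-|) = 2$, while every non-matching edge has two-qubit reduced state $\rho_{\mathrm{mix}}^{\otimes 2}$ contributing $\tr(h^{QMC}_{ij}\rho_{\mathrm{mix}}^{\otimes 2}) = 1/2$ (the singlet marginals are maximally mixed, as are the unmatched qubits). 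Summing yields $\tr(H^{QMC}\rho_M) = 2M + (W-M)/2 = W/2 + 3M/2$. Taking whichever state is larger for $H^{QMC}$, the algorithm achieves $E_{\mathrm{alg}} \ge \max\bigl(C,\, W/2 + 3M/2\bigr)$.

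Using the bound from \cref{lem:h_bounds}, write $\alpha \ge E_{\mathrm{alg}}/(W+M)$ and normalize by $W$ with $c := C/W$, reducing the task to showing $\max(c,\, 1/2 + 3\mu/2)/(1+\mu) \ge 5/8$. A short algebraic check gives $(1/2 + 3\mu/2)/(1+\mu) \ge 5/8$ if and only if $7\mu \ge 1$, so Case A ($\mu \ge 1/7$) is handled by $\rho_M$ alone. For Case B ($\mu < 1/7$) on bipartite graphs, where the $W + M$ bound is known to be tight \cite{lieb1962,rademaker2019,takahashi2023} (consistent with \cref{lem:bipartite_equivalence}), we have $c = 1$, so $\rho_C$ gives ratio $1/(1+\mu) > 7/8 \ge 5/8$.

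The main obstacle is Case B for \emph{non-bipartite} graphs, since the randomized-cut bound $c \ge 1/2 + \mu/2$ can be tight (for example on odd complete graphs $K_{2k+1}$ one has $c = 1/2 + \mu/2$ exactly with $\mu = O(1/k)$), pushing $E_{\mathrm{alg}}/(W+M)$ below $5/8$ under the conjecture's loose estimate even though $E_{\mathrm{alg}} = \lambda_{\max}$ holds exactly on those graphs. To close Case B I would invoke \cref{lem:h_reductions} to reduce to biconnected factor-critical graphs, and combine the monogamy-of-entanglement bound \cref{lem:l_star_bound} at each vertex with the matching-based bound on $\lambda_{\max}(L(F_k(G)))$, aiming to establish the sharper inequality $\lambda_{\max}(H^{QMC}(G)) \le \tfrac{8}{5}\max\bigl(C(G),\, W(G)/2 + 3M(G)/2\bigr)$; checking this directly on the reduced class of odd-order factor-critical graphs then delivers the $5/8$-approximation for all graphs. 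Quantifying the gap $W + M - \lambda_{\max}$ on non-bipartite graphs is the technically delicate step, likely requiring a vertex-cover-style amortization akin to the argument in \cite[Lemma 2]{anshu2020}.
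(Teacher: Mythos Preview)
Your computation of the two energies is correct, and you have correctly identified the obstruction: with only the upper bound $\lambda_{\max}\le W+M$ from \cref{lem:h_bounds}, the ratio $\max(c,\,1/2+3\mu/2)/(1+\mu)$ genuinely drops below $5/8$ when $\mu<1/7$ and $c$ is close to $1/2$ (e.g.\ large odd complete graphs). However, your proposed fix for Case~B is not a proof---it is a wish list (``aiming to establish the sharper inequality\dots'') whose key step you yourself flag as ``technically delicate'' and leave undone. Reducing to biconnected factor-critical graphs does not help here, since those are exactly the graphs (odd cliques, odd cycles glued together) on which $c$ is small and the difficulty lives.

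The missing ingredient is a \emph{second}, completely elementary upper bound that the paper pairs with $W+M$: decomposing $H^{QMC}$ into three MaxCut Hamiltonians in the $X$, $Y$, $Z$ bases and applying the triangle inequality gives $\lambda_{\max}(H^{QMC}(G))\le 3C(G)-W(G)$ (this is \cref{lem:qmc_cut_bound}, due to Anshu et al.). Using $\min(3C-W,\,W+M)$ in the denominator, the ratio becomes
\[
\alpha \;\ge\; \min_{\substack{0\le m\le 1\\ (m+1)/2\le c\le 1}}\ \frac{\max\!\big(c,\ (3m+1)/2\big)}{\min\!\big(3c-1,\ 1+m\big)},
\]
and a direct two-variable optimization (the minimum occurs where both numerator terms and both denominator terms coincide, at $c=5/7$, $m=1/7$) yields exactly $5/8$. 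The cut bound $3C-W$ is what rescues Case~B: when $\mu$ is small and $c$ is near $1/2$, the denominator $3c-1$ is also near $1/2$, so $\rho_C$ alone gives ratio $c/(3c-1)\ge 1$ in that regime. No appeal to factor-critical reductions or monogamy amortization is needed.
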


\begin{restatable}{corollary}{QMCMATCHApproxEfficient}\label{cor:qmc_match_approx_efficient}
    Under \cref{lem:h_bounds}, \cref{alg:xy_alg} with $\rho_C$ replaced by $\rho_C'$  obtains a $0.604$-approximation for QMC. This algorithm prepares a tensor product of $1$ and $2$-qubit states and is efficient.
\end{restatable}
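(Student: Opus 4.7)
The plan is to follow the proof of Corollary~\ref{cor:qmc_match_approx}, substituting the exact MaxCut product state $\rho_C$ by the efficient randomized product state $\rho_C'$ from \cite[Section~6]{briet2010}. Since $\rho_C'$ achieves expected QMC energy at least $\nu(2) \approx 0.9349$ times that of the optimal product state, and $\rho_C$ itself is a product state with QMC energy $C$, one obtains $\mathbb{E}[\tr(H^{\mathrm{QMC}}\rho_C')] \ge 0.9349\,C$. Crucially, only the cut term of the original analysis is affected by this substitution.

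I next observe that the matching state's energy is unchanged: as in the proof of Corollary~\ref{cor:qmc_match_approx}, each matching edge contributes $2$ from the singlet projector and each non-matching edge contributes $1/2$ from maximally mixed reduced states, so $\tr(H^{\mathrm{QMC}}\rho_M) = W/2 + 3M/2$. Hence the algorithm's expected energy satisfies $\mathbb{E}[\mathrm{ALG}] \ge \max\bigl(\tfrac{W}{2} + \tfrac{3M}{2},\; 0.9349\,C\bigr)$. Combining this with the conjectural upper bound $\lmax(H^{\mathrm{QMC}}) \le W + M$ from Lemma~\ref{lem:h_bounds} and the classical inequality $C \ge (W+M)/2$ (obtained by placing the matching on a cut and randomly splitting the remaining edges) gives
\begin{equation*}
\alpha \;\ge\; \min_G \frac{\max\!\bigl(\tfrac{W}{2}+\tfrac{3M}{2},\; 0.9349\,C\bigr)}{W+M}.
\end{equation*}

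The remaining step is to solve this optimization using the same two-regime case analysis as in Corollary~\ref{cor:qmc_match_approx}: identify the crossover between the matching-dominated and the cut-dominated regimes, and compute the worst-case ratio at that boundary. The only change from the original argument is the $0.9349$ scaling on the cut term, which shifts the crossover point and degrades the constant from $5/8$ to $0.604$; the intermediate inequalities and the parameterization over $M/W$ and $C/W$ are otherwise identical.

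The main obstacle will be carrying out this case analysis cleanly, since the $0.9349$ factor changes the balance point between the two candidate states in a nonlinear way; pinning down the exact constant $0.604$ requires careful re-optimization over the feasible $(M/W,\,C/W)$ region rather than a direct substitution into the original bound.
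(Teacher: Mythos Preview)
Your proposal has two genuine gaps.

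First, you use the constant $\nu(2)\approx 0.9349$ from \cite{briet2010}. That is the XY constant; for QMC the relevant product-state rounding factor is $\nu(3)\approx 0.9563$, since the QMC local term involves all three Pauli directions. The paper's proof uses $0.9563$, and with $0.9349$ you would land near $0.594$ rather than $0.604$ even if the rest of the argument were correct.

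Second, and more seriously, you drop the cut-based upper bound $\lmax(H^{QMC}(G))\le 3C(G)-W(G)$ from \cref{lem:qmc_cut_bound}. In the paper's proof of \cref{cor:qmc_match_approx} the denominator is $\min(3C-W,\,W+M)$, not $W+M$ alone; this is what makes the two-regime analysis work. With only $W+M$ in the denominator and the constraint $C\ge (W+M)/2$, your displayed optimization reduces (after minimizing over $c$ at $c=(1+m)/2$) to
\[
\min_{m\ge 0}\ \max\!\left(\frac{0.9349}{2},\ \frac{1+3m}{2(1+m)}\right),
\]
and since $\frac{1+3m}{2(1+m)}\ge \tfrac12$ for all $m\ge 0$, the minimum is exactly $\tfrac12$ at $m=0$. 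So the argument as written only yields a $0.5$-approximation, not $0.604$. The $3C-W$ bound is not optional: it is what penalizes the adversary for making $C$ small, and without it the cut state contributes nothing beyond the trivial $\tfrac12$ already guaranteed by $\rho_M$.
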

Both proofs are deferred to \cref{apx:approx/qmc}. Our conjectured bounds also would improve the state-of-the-art approximation algorithm for QMC, given in \cite{apte2025}. 

\begin{restatable}{corollary}{QMCPMATCHApprox}\label{cor:qmc_pmatch_approx}
    Under \cref{lem:h_bounds}, \cite[Algorithm 3]{apte2025} augmented with the matching-based separation oracle as
   in \cref{alg:sep_oracle} obtains a $0.614$-approximation for QMC. This algorithm prepares a tensor product of $1$ and $2$-qubit states and is efficient.
\end{restatable}

The proof is deferred to \cref{apx:approx/qmc}

\subsection{Additional Numerics}\label{sec:imp/additional_numerics}
Finally, besides our conjectured upper bounds, we also numerically evaluate the approximation ratios achieved by \cref{alg:xy_alg} with $\rho_C$ or $\rho'_C$ as defined in \cref{sec:imp/xy} and \cref{sec:imp/qmc}. We evaluate this ratio by finding the minimum approximation over the graphs in our dataset. We present these results in \cref{tab:numeric_approx}.

\begin{table}[H]
    \centering
    \begin{tabular}{c||c|c}
          & \cref{alg:xy_alg} with $\rho_C$ & \cref{alg:xy_alg} with $\rho'_C$\\
        \hline
        \hline
         QMC &  $0.747$ & $0.734$  \\
        \hline
         XY &  $0.811$ & $0.766$ \\
    \end{tabular}
    \caption{Approximation ratios achieved by \cref{alg:xy_alg} with $\rho_C$ or $\rho'_C$ as defined in \cref{sec:imp/xy} and \cref{sec:imp/qmc} for XY and QMC, respectively. Note that the algorithms with $\rho_C$ are not efficient, but rather demonstrate the approximations achievable by a tensor product of $1$ and $2$-qubit states.}
    \label{tab:numeric_approx}
\end{table}

We do not conjecture particular improved approximation ratios based on results in \cref{tab:numeric_approx}, but rather provide this as evidence that the algorithms may perform better than the analyzed worst-case guarantees. We do not present this data for EPR as it was already shown in \cite{apte2025} that a $\frac{1+\sqrt{5}}{4}$-approximation is tight.

\section{Discussion}
We show that quantum MaxCut and related Hamiltonians defined on graphs can be studied purely in terms of spectral properties of \emph{token graphs}. This layer of abstraction allows for the study of the Hamiltonians without any prior exposure to physics or quantum computing. Based on numerical evidence, we pose conjectures bounding the spectral radii of token graphs. These bounds are stronger than previous results, and as such we believe our conjectures to be of independent interest to the graph theory community. Thus, we invite any, regardless of background, to prove or refute the conjectures. We show that \cref{conj:L_max,conj:Q_max,conj:A_max,conj:A_min}, which hold for unweighted graphs, imply novel combinatorial bounds on the spectral of \emph{weighted} token graphs. We are not aware of previous studies on weighted token graphs, and these bounds are tighter than known bounds even for unweighted graphs.

Interestingly, it is known that QMC is $\QMA$-complete \cite{piddock2015, cubitt2016}. As such, the presentation of QMC in terms of token graphs allows for a fully classical (albeit exponentially sized) characterization of a $\QMA$-complete problem solely in terms of graph-theoretic objects. Similar results hold for XY ($\QMA$-complete) and EPR ($\StoqMA$) \cite{piddock2015}. 

For a reader familiar with statistical mechanics and/or quantum computation, we show that analyzing specific $2$-local Hamiltonians from a graph-theoretic perspective may yield improved upper bounds and approximation ratios. In particular our conjectures imply a tighter \emph{analysis} of existing algorithms, leading to higher approximation ratios. Furthermore, we find numerical evidence that these algorithms may obtain approximations even better than those implied by our conjectures. As such, we emphasize the importance of finding tighter upper bounds for these Hamiltonians. We collect and summarize open problems below
\begin{itemize}
    \item Prove \cref{conj:L_max,conj:Q_max,conj:A_max,conj:A_min,conj:A_monotonic,conj:Q_monotonic}.
    \item Tighten the analysis of existing algorithms for QMC and XY to achieve better approximation ratios, potentially closer to the values presented in \cref{tab:numeric_approx}.
    \item Determine if there exist any classical graph invariants that yield the minimum value $k$ for which $\lmax\of{L\of{F_k\of{G}}}=\lmax\of{H^{QMC}\of{G}}$? For example, we check if $k=M(G)$ or the number of vertices in a maximum bipartite subgraph of $G$, but neither of these are true. Similar questions can be asked for EPR and $Q(F_k(G))$ or XY and $-A(F_k(G))$.
    \item Design new algorithms for QMC, XY, and EPR based on properties of eigenvectors of token graphs, such as the algorithm for EPR presented in \cref{apx:other_app/algo}.
\end{itemize}

\section*{Acknowledgements}
The authors thank Eunou Lee, Kunal Marwaha, and Willers Yang for helpful discussions.
This work is supported by a collaboration between the US DOE and other Agencies.
The work of A.A is supported by the Data Science Institute at the University of Chicago.  
O.P. acknowledges that this material is based upon work supported by the U.S. Department of Energy, Office of Science, Accelerated Research in Quantum Computing, Fundamental Algorithmic Research toward Quantum Utility (FAR-Qu). 
J.S. acknowledges that this material is based upon work supported by the National Science Foundation Graduate Research Fellowship under Grant No.\ 2140001. J.S acknowledges that this work is funded in part by the STAQ project under award NSF Phy-232580; in part by the US Department of Energy Office of Advanced Scientific Computing Research, Accelerated Research for Quantum Computing Program.

This article has been authored by an employee of National Technology \& Engineering Solutions of Sandia, LLC under Contract No.\ DE-NA0003525 with the U.S. Department of Energy (DOE). The employee owns all right, title and interest in and to the article and is solely responsible for its contents. The United States Government retains and the publisher, by accepting the article for publication, acknowledges that the United States Government retains a non-exclusive, paid-up, irrevocable, world-wide license to publish or reproduce the published form of this article or allow others to do so, for United States Government purposes. The DOE will provide public access to these results of federally sponsored research in accordance with the DOE Public Access Plan \url{https://www.energy.gov/downloads/doe-public-access-plan}.

%\printbibliography
\bibliography{refs}

\clearpage
\newpage

\appendix

\section{Other applications of token graph equivalence}\label{apx:other_app}
We now list  applications of the connection between Hamiltonians and token graphs that were not included in \cref{sec:equivalence/app} and provide omitted proofs

\subsection{Spectrum containment}\label{apx:other_app/spectrum_containment}
\LemLaplacianSpectrumContainment*
\begin{proof}
    We extend the local approach of \cite[Theorem 4.1]{dalfo2021} to weighted graphs. Let $F_k(G)$ and $F_{k+1}(G)$ be shorthand for $F_{k}(G)$ and $F_{k+1}(G)$. Let $L_{k+1}$ and $L_k$ be shorthand for $L(F_{k+1})$ and $L(F_k)$. Let $V_k$, $V_{k+1}$ denote $\binom{\ofb{n}}{k}$ and $\binom{\ofb{n}}{k+1}$, the vertex sets of $F_k$ and $F_{k+1}$. We will index vertices in $V_k$ with $X$ and $Y$ and vertices in $V_{k+1}$ with $S$ and $T$. Let $v_k$ be an eigenvector of $F_k(G)$ with corresponding eigenvalue $\lambda$. Let $J_S \defeq \ofc{X \in V_k: X \subset S}$. Then define
    \begin{align*}
        v_{k+1}(S) \defeq \sum_{X \in J_S} v_k(X), \quad\quad \forall S \in F_{k+1}. 
    \end{align*}
    We now show that $v_{k+1}$ is an eigenvector $F_{k+1}(G)$ with eigenvalue $\lambda$ as well. Given a vertex $S$ of $F_{k+1}$, let $L_{k+1}(S)$ denote the row of $L_{k+1}$ corresponding to $S$. Then,
    \begin{align*}
        L_{k+1}(S)v_{k+1} &= \sum_{T \in N(S)} w_{ST}\of{v_{k+1}(S)-v_{k+1}(T)},\\
        &= \sum_{T \in N(S)} w_{ST}\of{\sum_{X\in J_S}v_{k+1}(X)-\sum_{Y\in J_T}v_{k+1}(Y)},\\
        &= \sum_{X \in J_A} \sum_{Y \in N(X)}w_{XY}\of{v_k(X)-v_k(Y)} ,\\
        &= \sum_{X \in J_S} \lambda\, v_k(X) = \lambda\, v_{k+1}(S).
    \end{align*}
    
    The reason of the third line is the following. First, notice that for each $T \in N(S)$, we have that $S = Z \cup \ofc{s}$ and $T = Z \cup \ofc{t}$, where $Z=S\cap T$ and $(s,t)\in G$. Now, if either $X,Y \subset S$ or $X,Y \subset T$, both terms $v_k(X)$ and $v_k(Y)$ appear in both second and third sums of the second line, canceling out. Otherwise, for each $X \in J_S$ such that $x \in X$, there is one $Y \in J_T$ such that $y \in Y$ we have $X \triangle Y = S \triangle T$ and $Y \in N(X)$, with weight $W_{XY}$. 
\end{proof}

\subsection{Monogamy of Entanglement for XY on a star}
\StarAMax*
\begin{proof}
    It is shown in \cite{dalfo2021} that $F_k(S_m) \cong J(m,k-1,k)$, where $J(m,k-1,k)$ denotes the \emph{doubled} Johnson graph, which is a bipartite graph consisting over vertices $U \cup V$, where $P = \binom{\ofb{m}}{k-1}$, $Q= \binom{\ofb{m}}{k}$ and two vertices $p$ and $q$ are adjacent if and only if $p \subset q$ or $q \subset p$. We may write the adjacency matrix of $J(m,k-1,k)$ as follows 
    \begin{align}\label{eq:A_star_block}
    A\of{F_k\of{S_{m}}}=\begin{bmatrix}
    \mathbf{0_{\binom{m}{k-1}}} & B \\
    \vspace{-8px}\\
    B^T & \mathbf{0}_{\binom{m}{k}}
    \end{bmatrix},
    \end{align}
    where vertices are first ordered by $P$ and then by $Q$, $B$ is a binary $\binom{m+1}{k} \times \binom{m+1}{k-1}$ matrix with $(m+1-k)$ nonzero entries per row, and $\mathbf{0}_d$ is the square all-zero matrix with dimension $d$. We may bound the eigenvalues of $A$ using a similarity transform
    \begin{align*}
        &\lmax\of{A\of{F_k\of{S_{m}}}} = \lmax\of{A'\of{F_k\of{S_{m}}}}\\
        &A' \defeq D\of{F_k\of{S_{m}}}^{1/2} A\of{F_k\of{S_{m}}}\, D\of{F_k\of{S_{m}}}^{-1/2}.
    \end{align*}
    We then invoke Gershegorin's circle theorem on the right hand side. The similarity transform multiplies nonzero entries in the upper right block of \cref{eq:A_star_block} by $\sqrt{\frac{k}{m+1-k}}$ and the lower right block by $\sqrt{\frac{m+1-k}{k}}$. Thus, taking the sums of absolute values of rows yields
    \begin{align*}
        \sum_{i} \of{A'_{ij}} =
        \begin{cases}
            \of{m+1-k}\sqrt{\frac{k}{m+1-k}} = \sqrt{k(m+1-k)}, & i \in P \\
            k\sqrt{k(m+1-k)} = \sqrt{k(m+1-k)}, & i \in Q \, .
        \end{cases}
    \end{align*}
    Maximizing over $i$ then yields $\lmax\of{A'\of{F_k\of{S_{m}}}}\le\sqrt{k(m+1-k)}$. As all the rows are equal, the bound is tight.
\end{proof}

\subsection{Complete graphs}\label{apx:other_app/complete}
Using the equivalence to token graphs, we now derive the eigenspectrum of QMC, XY, and EPR on unweighted complete graphs. This analysis has already been done (e.g. \cite{osborne2006}), but we repeat in our notation for completeness. The token graphs $F_k(G)$ of the complete graph $K_n$ are known to be isomorphic to the Johnson graph $J(n,k)$ (see, for example \cite{fabila-monroy2012}). The Johnson graph is $k(n-k)$ regular and the adjacency matrix spectrum is fully characterized 

\begin{lemma}\label{lem:eigenvalues_johnson}
    For any $n \ge 2$, $1 \le k \le \floor{n/2}$
\begin{align*}
    \eigs\of{A\of{F_k\of{K_n}}} &= \ofc{(k-i)(n-k-i)-i}_{\,0 \le i \le k}\,, \\
    \eigs\of{L\of{F_k\of{K_n}}} &=  k(n-k) - \eigs\of{A\of{F_k\of{K_n}}}, \\
    \eigs\of{Q\of{F_k\of{K_n}}} &= k(n-k) + \eigs\of{A\of{F_k\of{K_n}}}.
\end{align*}
\end{lemma}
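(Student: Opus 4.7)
The plan is to exploit two structural facts about the Johnson graph $J(n,k) \cong F_k(K_n)$, namely its regularity and its status as a distance-regular graph whose adjacency spectrum is classically known. The last two identities in the lemma will follow from the first by a one-line regularity argument, so the real content lies in verifying the adjacency eigenvalue formula.

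First I would verify the isomorphism $F_k(K_n) \cong J(n,k)$ directly from \cref{def:token_graphs}: two $k$-subsets $A$ and $B$ are adjacent in $F_k(K_n)$ iff $|A \triangle B| = 2$ (since every pair of distinct vertices is an edge of $K_n$), which is precisely the adjacency in the Johnson graph. Next I would compute that $J(n,k)$ is regular of degree $d = k(n-k)$, since from any $k$-subset $A$ one can reach a neighbor by removing any of the $k$ elements of $A$ and adding any of the $n-k$ elements of $V \setminus A$. Regularity of degree $d$ gives $D(F_k(K_n)) = d \cdot I$, hence
\begin{align*}
L(F_k(K_n)) = d\, I - A(F_k(K_n)), \qquad Q(F_k(K_n)) = d\, I + A(F_k(K_n)),
\end{align*}
which immediately yields the stated expressions for $\eigs(L(F_k(K_n)))$ and $\eigs(Q(F_k(K_n)))$ once the adjacency spectrum is known.

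For the adjacency spectrum itself, the cleanest route is to invoke the classical theorem that $J(n,k)$ belongs to the Johnson association scheme and is distance-regular, with eigenvalues expressible through the Eberlein polynomials; a standard computation (see e.g.\ Brouwer--Haemers \cite{brouwer2012}) gives precisely $\lambda_i = (k-i)(n-k-i) - i$ for $i = 0, 1, \ldots, k$, with multiplicity $\binom{n}{i} - \binom{n}{i-1}$. A self-contained alternative is to decompose $\mathbb{C}^{\binom{[n]}{k}}$ into isotypic components under the natural action of the symmetric group $S_n$ (indexed by Young diagrams of shape $(n-i,i)$), observe that $A(J(n,k))$ commutes with this action and therefore acts as a scalar on each component by Schur's lemma, and then evaluate that scalar by testing on an explicit highest-weight vector.

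The main obstacle is simply producing the eigenvalue formula rigorously; everything else is bookkeeping. Since this lemma is cited as already known (and the surrounding text notes ``this analysis has already been done''), my recommendation is to prove the regularity step in a few lines and then cite the Johnson-scheme computation rather than rederive it.
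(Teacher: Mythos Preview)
Your proposal is correct and matches the paper's treatment: the paper does not give a proof of this lemma but simply notes that $F_k(K_n)\cong J(n,k)$ (citing \cite{fabila-monroy2012}), that the Johnson graph is $k(n-k)$-regular, and that its adjacency spectrum is ``fully characterized,'' which is exactly the route you outline. Your additional remarks on the Eberlein polynomials and the $S_n$-isotypic decomposition go beyond what the paper records, but they are standard and correct elaborations of the cited fact.
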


In particular we find that $\lmax\of{A\of{F_k\of{K_n}}}=k(n-k)$, where we $i=0$ and $\lmax\of{-A\of{F_k\of{K_n}}}=k$ where $i=k$. Combining \cref{lem:eigenvalues_johnson} with \cref{fact:xy_A_equivalence}, \cref{fact:qmc_L_equivalence}, and \cref{fact:epr_Q_equivalence} we may optimize over $k$ to obtain

\begin{lemma}\label{lem:H_eigenvalues_complete}
    For any $n \ge 2$,
    \begin{align*}
    \lmax\of{H^{XY}(K_n)} &= 
    \begin{cases}
        \frac{n^2+n}{4}, & n \,\text{even}, \\
        \frac{n^2+n-2}{4}, & n \,\text{odd}.      
    \end{cases}, \\
    \lmax\of{H^{QMC}(K_n)} &=  
    \begin{cases}
        \frac{n^2+2n}{4}, & n \,\text{even}, \\
        \frac{n^2+2n-3}{4}, & n \,\text{odd}.
    \end{cases}, \\
    \lmax\of{H^{EPR}(K_n)} &= 
    \begin{cases}
        \frac{n^2}{2}, & n \,\text{even}, \\
        \frac{n^2-1}{2}, & n \,\text{odd}.
    \end{cases}.
\end{align*}
\end{lemma}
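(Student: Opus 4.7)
\textbf{Proof proposal for \cref{lem:H_eigenvalues_complete}.} The plan is to combine the spectrum of the Johnson graph stated in \cref{lem:eigenvalues_johnson} with the Hamiltonian/token-graph equivalences from \cref{fact:qmc_L_equivalence,fact:xy_A_equivalence,fact:epr_Q_equivalence}, and then to optimize over the Hamming-weight sector $k$. Since $K_n$ has $W=\binom{n}{2}=n(n-1)/2$, each Hamiltonian's spectrum is obtained by a uniform shift/sign-flip of the per-sector token-graph spectra, so extremal eigenvalues are straightforward rational functions of $(n,k)$ that can be maximized by a single-variable optimization on $0\le k\le \lfloor n/2\rfloor$.

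First, from \cref{lem:eigenvalues_johnson} I would read off the extremal adjacency eigenvalues in the $k$-sector: $\lmax(A(F_k(K_n)))=k(n-k)$ (achieved at $i=0$) and $\lmin(A(F_k(K_n)))=-k$ (achieved at $i=k$). Next I would translate to the three Hamiltonians sector-by-sector. For QMC, \cref{fact:qmc_L_equivalence} and the relation $L=k(n-k)I-A$ give $\lmax(L(F_k(K_n)))=k(n-k)+k=k(n-k+1)$. For EPR, \cref{fact:epr_Q_equivalence} together with $Q=k(n-k)I+A$ gives $\lmax(Q(F_k(K_n)))=2k(n-k)$. For XY, \cref{fact:xy_A_equivalence} gives $\lmax(\tfrac{W}{2}I-A(F_k(K_n)))=\tfrac{n(n-1)}{4}+k$.

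Finally I would maximize each of these expressions over $k\in\{0,1,\ldots,\lfloor n/2\rfloor\}$. The QMC quantity $k(n-k+1)$ is a downward parabola in $k$ with continuous optimum at $k=(n+1)/2$, so its integer maximum on the allowed range is $k=n/2$ (even $n$) or $k=(n-1)/2$ (odd $n$); plugging in yields $(n^2+2n)/4$ and $(n^2+2n-3)/4$ respectively. The EPR quantity $2k(n-k)$ is a downward parabola maximized at $k=n/2$, giving $n^2/2$ for even $n$ and $(n-1)(n+1)/2=(n^2-1)/2$ for odd $n$. The XY quantity is monotonically increasing in $k$ and is thus maximized at $k=\lfloor n/2\rfloor$, producing $(n^2+n)/4$ and $(n^2+n-2)/4$ in the even and odd cases.

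There is no real obstacle here beyond bookkeeping: the content is entirely encapsulated in \cref{lem:eigenvalues_johnson} and the three equivalences, and the only tasks are (i) to pick the correct extremal adjacency eigenvalue (the minimum for QMC and XY after the sign flip, the maximum for EPR) and (ii) to handle the parity of $n$ when rounding the continuous optimum to the nearest admissible integer $k\le\lfloor n/2\rfloor$. The symmetry $F_k(G)\cong F_{n-k}(G)$ lets me ignore the range $k>\lfloor n/2\rfloor$ without loss.
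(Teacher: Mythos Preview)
Your proposal is correct and follows essentially the same approach as the paper: the paper likewise combines \cref{lem:eigenvalues_johnson} with \cref{fact:qmc_L_equivalence,fact:xy_A_equivalence,fact:epr_Q_equivalence}, identifies the extremal adjacency eigenvalues $k(n-k)$ at $i=0$ and $-k$ at $i=k$, and then optimizes the resulting expressions over $k\le\lfloor n/2\rfloor$. Your bookkeeping (including the parity cases) is accurate.
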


\subsection{Maximum energy states for EPR have positive amplitudes}\label{apx:other_app/epr_positive}

The connection to token graphs yields the following observation for EPR, which was observed for general stoquastic Hamiltonians in \cite[Lemma 3]{albash2018}
\begin{lemma}\label{lem:epr_non_negative}
    For any connected graph $G$, there exists a quantum state $\ket{\chi}$ satisfying 
    \begin{align*}
        &\ofk{\chi\middle|H^{EPR}(G)\middle|\chi} = \lmax\of{H^{EPR}\of{G}}, \\
        &\ofk{z \middle| \chi} \in \mathbb{R}_+, \quad \forall \ket{z} \in \{ \ket{+i}, \ket{-i} \}^{\otimes n},
    \end{align*}
    where $\ket{+i}$, $\ket{-i}$ are the eigenstates of the Pauli $Y$ operator. In words, $H^{EPR}(G)$ admits an optimal state with real, non-negative coefficients in the Pauli-$Y$ basis
\end{lemma}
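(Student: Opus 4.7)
The plan is to leverage the signless Laplacian picture from Fact~\ref{fact:epr_Q_equivalence} together with the Perron--Frobenius theorem. Recall that under $U = \bigotimes_j \sqrt{X_j}$, the Hamiltonian $U H^{EPR}(G) U^\dagger$ is block diagonal in the computational basis with the Hamming-weight-$k$ block equal to $Q(F_k(G))$. Each such block has non-negative entries, and it is a classical fact about token graphs (see \cite{fabila-monroy2012}) that connectedness of $G$ implies connectedness of $F_k(G)$ for every $1 \le k \le n - 1$. Hence each $Q(F_k(G))$ is an irreducible non-negative matrix, and the Perron--Frobenius theorem supplies an eigenvector $v_k$ with strictly positive entries whose eigenvalue is $\lmax(Q(F_k(G)))$. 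Choosing $k^*$ that attains $\lmax(H^{EPR}(G)) = \max_k \lmax(Q(F_k(G)))$ and extending $v_{k^*}$ by zeros on the other Hamming weight sectors yields a computational-basis vector $\ket{\chi'}$ which is a top eigenstate of $U H^{EPR}(G) U^\dagger$ with entrywise non-negative amplitudes.

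The second step is to pull $\ket{\chi'}$ back through $U^\dagger$ and verify that non-negativity survives in the $Y$-basis. A direct single-qubit computation gives $\sqrt{X}^\dagger \ket{0} = e^{-i\pi/4}\ket{+i}$ and $\sqrt{X}^\dagger \ket{1} = e^{i\pi/4}\ket{-i}$, and therefore
\[
U^\dagger \ket{z} \;=\; e^{-i\pi(n - 2h(z))/4}\,\ket{z}_Y,
\]
where $\ket{z}_Y$ denotes the tensor-product $Y$-eigenstate whose $j$-th factor is $\ket{+i}$ when $z_j = 0$ and $\ket{-i}$ when $z_j = 1$, and $h(z)$ is the Hamming weight of $z$. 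The key observation is that this phase depends on $z$ only through $h(z)$. Since $\ket{\chi'}$ is supported entirely on Hamming weight $k^*$, every contributing computational basis ket receives a common phase, so I would define $\ket{\chi} = e^{i\pi(n - 2k^*)/4}\,U^\dagger \ket{\chi'}$. Its amplitudes in the $Y$-basis are then exactly the non-negative entries of $v_{k^*}$, and by construction $\ket{\chi}$ is a maximum-eigenvalue eigenstate of $H^{EPR}(G)$.

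The main point requiring care is the invocation of Perron--Frobenius, which rests on irreducibility and hence on the connectedness of $F_k(G)$; this is standard but must be cited. Beyond that, the only subtlety is the phase convention of $\sqrt{X}$, and because $U H^{EPR}(G) U^\dagger$ preserves Hamming weight the potentially troublesome phases collapse to a single global phase that can be absorbed harmlessly. I expect the hardest step to be this bookkeeping: making sure the single-qubit phases from $\sqrt{X}^\dagger$ accumulate to a single global factor on the support of $\ket{\chi'}$ rather than introducing relative signs between different basis kets.
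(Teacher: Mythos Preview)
Your proposal is correct and follows essentially the same route as the paper: invoke \cref{fact:epr_Q_equivalence} to reduce to the signless Laplacian picture, use connectedness of $F_k(G)$ to get irreducibility, apply Perron--Frobenius, and select the maximizing $k^*$. Your treatment of the phases arising from $\sqrt{X}^\dagger$ is actually more careful than the paper's, which simply asserts the $Y$-basis equivalence and does not work out the global-phase bookkeeping explicitly.
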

\begin{proof}
    By the analysis in \cref{sec:equivalence/epr}, EPR in the Pauli-Y basis is equivalent to a direct sum of signless Laplacians of token graphs. The signless Laplacian matrix of a connected graph consists of all non-negative entries, and can easily be seen to be irreducible. As such, we may invoke the Perron-Frobenius theorem to show that $Q\of{F_k\of{G}}$ always has an eigenvector $v$ corresponding to eigenvalue $\lmax\of{Q\of{F_k\of{G}}}$ such that all components of $v$ are positive. By \cref{fact:epr_Q_equivalence}, this means that there exists an optimal state for $H^{EPR}(G)$ that admits a decomposition where all states $\ket{z}$ corresponding to bitstrings in the Pauli-Y basis with Hamming weight $k$ have positive amplitude and $0$ else, for some value of $k$ in $1\le k \le \floor{n/2}$. 
\end{proof}

\subsection{New algorithmic primitives for EPR}\label{apx:other_app/algo}

For the following observation, we must compute the weight of token graphs
\begin{lemma}\label{lem:token_graph_weight}
    For any graph $G$ and any $1\le k < n$
    \begin{align*}
        W(F_k(G)) = \binom{n}{k} \frac{k(n-k)}{n(n-1)} W(G).
    \end{align*}
\end{lemma}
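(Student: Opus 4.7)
The plan is to compute $W(F_k(G))$ by a direct double-counting argument, grouping edges of $F_k(G)$ by which edge of $G$ gave rise to them. By \cref{def:token_graphs}, every edge of $F_k(G)$ has the form $(A,B)$ with $A = S \cup \{a\}$ and $B = S \cup \{b\}$ for some $(a,b) \in E(G)$ and some $(k{-}1)$-subset $S \subseteq [n] \setminus \{a,b\}$, and its weight is $w_{ab}$. Conversely, every such triple $((a,b),S)$ yields a valid edge of $F_k(G)$, and distinct triples yield distinct edges.

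The first step is therefore to observe that for each fixed edge $(a,b)\in E(G)$, the number of edges of $F_k(G)$ arising from it is exactly the number of choices of $S$, namely $\binom{n-2}{k-1}$. Summing the weight $w_{ab}$ over all these edges and then over $(a,b) \in E(G)$ yields
\begin{equation*}
    W(F_k(G)) \;=\; \binom{n-2}{k-1}\sum_{(a,b)\in E(G)} w_{ab} \;=\; \binom{n-2}{k-1}\, W(G).
\end{equation*}

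The final step is the algebraic identity $\binom{n-2}{k-1} = \binom{n}{k}\,\frac{k(n-k)}{n(n-1)}$, which follows immediately from expanding both sides as factorials. This gives the claimed formula. There is no real obstacle: the only thing to be careful about is ensuring that the correspondence between edges of $F_k(G)$ and triples $((a,b),S)$ is a bijection, which is immediate from \cref{def:token_graphs} since the symmetric difference $\{a,b\}$ uniquely determines both the underlying edge of $G$ and the common $(k{-}1)$-subset $S = A\cap B$.
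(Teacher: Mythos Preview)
Your proof is correct and follows essentially the same approach as the paper: both fix an edge $(a,b)\in E(G)$, count the $\binom{n-2}{k-1}$ ways to choose the common $(k-1)$-subset $S$, sum over edges of $G$, and finish with the identity $\binom{n-2}{k-1}=\binom{n}{k}\frac{k(n-k)}{n(n-1)}$. Your explicit remark that $S=A\cap B$ makes the bijection between triples $((a,b),S)$ and edges of $F_k(G)$ clear is a small clarification the paper leaves implicit, but otherwise the arguments are the same.
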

\begin{proof}
    By definition
    \begin{align*}
        W\of{F_k(G)}
        &=\sum_{\substack{A,B\subseteq [n]\\|A|=|B|=k\\A\triangle B=\ofc{a,b}}}
           w_{ab}\,.
    \end{align*}
    Fix an edge $\of{a,b}\in E$.  To form a pair $\of{A,B}$ with $A\triangle B=\ofc{a,b}$, one must choose $\of{k-1}$ of the remaining $\of{n-2}$ vertices to lie in $A\setminus\{a\}$ (equivalently in $B\setminus\{b\}$).  There are $\binom{n-2}{k-1}$ ways to do this. Since $F_k(G)$ is simple and undirected, each such $\of{A,B}$ counts exactly one edge of weight $w_{ab}$.  Summing over all original edges,
    \begin{align*}
        W\of{F_k(G)}
        =\sum_{\ofc{a,b}\in E}
            \binom{n-2}{k-1}\,w_{ab}
        =\binom{n-2}{k-1}\,\sum_{\{a,b\}\in E}w_{ab}
        =\binom{n-2}{k-1}\;W(G).
    \end{align*}
    The proof then follows from the standard binomial identity
    \begin{align*}
        \binom{n-2}{k-1}=\binom{n}{k}\;\frac{k\,(n-k)}{n\,(n-1)}.
    \end{align*}
\end{proof}

Now consider the following algorithmic lower bound for the maximum energy of the signless Laplacian.

\begin{algorithm}\label{alg:uniform_Q}
    Given $G(V,E,w)$ and a fixed $1\le k \le \floor{n/2}$ output the normalized $1$ vector
    \begin{align}\label{eq:v_n_k_vector}
        v_{n,k}= \binom{n}{k}^{-\frac{1}{2}}\,\vec{1}.
    \end{align}
\end{algorithm}

\begin{lemma}\label{lem:uniform_Q_alg_approx}
    The vector $v_{n,k}$ in \cref{eq:v_n_k_vector} outputted by \cref{alg:uniform_Q} is an eigenvector of $Q\of{F_k\of{G}}$ with eigenvalue 
    \begin{align*}
        4 W(G) \frac{k(n-k)}{n(n-1)}.
    \end{align*}
\end{lemma}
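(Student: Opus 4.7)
The plan is to verify the stated eigenvector equation $Q(F_k(G))\, v_{n,k} = \lambda\, v_{n,k}$ directly by evaluating the left-hand side entrywise, where $\lambda = 4 W(G)\,\frac{k(n-k)}{n(n-1)}$. Since $v_{n,k} = \binom{n}{k}^{-1/2}\vec{1}$ is constant across the vertex set of $F_k(G)$, for any vertex $A \in \binom{[n]}{k}$ the definition $Q = D + \mathrm{Adj}$ gives
\begin{align*}
    (Q(F_k(G))\, v_{n,k})_A \;=\; \binom{n}{k}^{-1/2}\Bigl(D_{AA} + \sum_{B \sim A} w_{AB}\Bigr) \;=\; 2\, d_{F_k(G)}(A) \cdot \binom{n}{k}^{-1/2},
\end{align*}
where $d_{F_k(G)}(A) = \sum_{B \sim A} w_{AB}$ denotes the weighted degree of the vertex $A$ in $F_k(G)$. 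Thus the lemma reduces to establishing the \emph{pointwise} identity $d_{F_k(G)}(A) = 2 W(G)\,\frac{k(n-k)}{n(n-1)}$ for every vertex $A$.

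To pin down this degree, I would unfold \cref{def:token_graphs}: the neighbors of $A$ in $F_k(G)$ correspond bijectively to edges $(a,b)\in E(G)$ with exactly one endpoint in $A$, and each such neighbor is joined to $A$ by an edge of weight $w_{ab}$. This yields the closed form
\begin{align*}
    d_{F_k(G)}(A) \;=\; \sum_{(a,b)\in E} w_{ab}\,\mathbf{1}\!\left[|A\cap\{a,b\}|=1\right].
\end{align*}
A double-counting argument then shows that summing $d_{F_k(G)}(A)$ over all $\binom{n}{k}$ vertices $A$ recovers $2W(F_k(G)) = 2\binom{n-2}{k-1}W(G)$, in agreement with \cref{lem:token_graph_weight}. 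Dividing by $\binom{n}{k}$ pins down the \emph{average} weighted degree to exactly the target constant $2 W(G)\,\frac{k(n-k)}{n(n-1)}$.

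The main obstacle is upgrading this averaged identity to a pointwise one, i.e.\ showing that $d_{F_k(G)}(A)$ does not depend on the particular choice of $A$. The natural tactic is a symmetrization argument: since $v_{n,k}$ is invariant under the $S_n$-action on $\binom{[n]}{k}$, one tries to argue that $Q(F_k(G))\, v_{n,k}$ inherits enough symmetry to be proportional to $v_{n,k}$. I expect this to be the delicate step, since the true symmetry group of $Q(F_k(G))$ is only $\mathrm{Aut}(G)$ rather than the full $S_n$; in the vertex-transitive case (e.g.\ $G = K_n$, where the Johnson-graph description of \cref{lem:eigenvalues_johnson} applies) the pointwise identity falls out immediately. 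Once constant-degree of $F_k(G)$ is established, the eigenvector relation $(Q(F_k(G))\, v_{n,k})_A = \lambda\, v_{n,k}(A)$ holds coordinate-by-coordinate and the eigenvalue is exactly $4 W(G)\,\frac{k(n-k)}{n(n-1)}$, completing the proof.
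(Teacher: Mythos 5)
Your reduction is sound as far as it goes, and the obstacle you flagged at the end is not merely delicate --- it is fatal to the statement as literally written. The vector $v_{n,k}$ is an eigenvector of $Q(F_k(G))$ if and only if the weighted degree $d_{F_k(G)}(A)=\sum_{(a,b)\in E}w_{ab}\,\mathbf{1}[|A\cap\{a,b\}|=1]$ is constant in $A$, i.e.\ iff $F_k(G)$ is (weighted-)regular, and this fails for generic $G$: already for $G=P_3$ and $k=1$ one has $F_1(G)=G$ with degree sequence $(1,2,1)$, so $Q\vec{1}=(2,4,2)^T$ is not proportional to $\vec{1}$, even though the quadratic form $\frac{1}{3}(2+4+2)=8/3$ does equal $4W(G)\frac{k(n-k)}{n(n-1)}$. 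So no symmetrization argument can rescue the pointwise identity; the lemma is overstated and only the Rayleigh-quotient version is true.

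This is in fact all the paper proves and all it uses. The paper's proof never touches the eigenvector claim: it writes $v_{n,k}^T Q(F_k(G))v_{n,k}=v_{n,k}^T(Q+L)(F_k(G))v_{n,k}=2\,v_{n,k}^T D(F_k(G))v_{n,k}$ using $L(F_k(G))v_{n,k}=0$, then identifies $2\,v_{n,k}^TDv_{n,k}$ with $\frac{4}{\binom{n}{k}}W(F_k(G))$ and invokes \cref{lem:token_graph_weight}. That is precisely your average-degree double count in different clothing, and it is exactly what the downstream \cref{cor:dicke_algo_epr} needs, since that corollary only asserts the \emph{energy} achieved by the Dicke state. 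So the correct resolution is not to complete your symmetrization step but to weaken the conclusion: restate the lemma as ``$v_{n,k}^T\,Q(F_k(G))\,v_{n,k}=4W(G)\frac{k(n-k)}{n(n-1)}$'' (or ``achieves energy''), at which point the argument you already have --- constant vector, $Q=D+A$, degrees sum to $2W(F_k(G))=2\binom{n-2}{k-1}W(G)$, divide by $\binom{n}{k}$ --- is a complete proof. You should also record the $P_3$ counterexample, since it shows the eigenvector phrasing cannot be repaired except for vertex-transitive inputs such as $K_n$.
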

\begin{proof}
    Consider the matrix $Q\of{F_k\of{G}} + L\of{F_k\of{G}} = 2 \,D\of{F_k\of{G}}$. It is well known that the all-$1$ vector $v_{n,k}$ satisfies $ L\of{F_k\of{G}} \,v_{n,k} = 0$. Thus 
    \begin{align*}
        v_{n,k}^T \, Q\of{F_k\of{G}} \, v_{n,k} &= v_{n,k}^T  \of{Q\of{F_k\of{G}} + L\of{F_k\of{G}}} \, v_{n,k}\,, \\
        &= 2 \, v_{n,k}^T \,D\of{F_k\of{G}}  \, v_{n,k}\,, \\
        &= \frac{2}{\binom{n}{k}}\sum_{A \in F_k(G)} \of{\sum_{B \in N(A)} w_{AB}}\,, \\
        &= \frac{4}{\binom{n}{k}} W(F_k(G))\,, \\
        &= 4 W(G)\frac{k(n-k)}{n(n-1)}\,,
    \end{align*}
    where in the third line we used the definition of a degree matrix, in the fourth line we used that the sum of (weighted) degrees in a graph is equal to twice the weight of the graph, and in the last line we used \cref{lem:token_graph_weight}.
\end{proof}

We now consider the following algorithm for EPR 

\begin{algorithm}\label{alg:dicke}
    Given $G(V,E,w)$ output the quantum Dicke state \cite{dicke1954}
    \begin{align*}
        \binom{n}{\floor{n/2}}^{-\frac{1}{2}}\!\!\sum_{z \in \binom{\ofb{n}}{\floor{n/2}}} \ket{z},
    \end{align*}
    where $\ket{z}$ encodes the bitstring $z \in \mathbb{R}^n$ in the Pauli $Y$-basis. This state is simply a uniform superposition over Hamming weight $\floor{n/2}$ states in the $Y$-basis. Furthermore, these states can be prepared efficiently (see for instance \cite{bartschi2022}).
\end{algorithm}

\begin{corollary}\label{cor:dicke_algo_epr}
    On any graph $G$, \cref{alg:dicke} achieves energy 
    \begin{align*}
        \ofk{D^n_k\middle|H^{EPR}(G)\middle|D^n_k} = 
        \begin{cases}
            W(G) \frac{n}{n-1} & n \text{ even}\,, \\
            W(G) \frac{n+1}{n} & n \text{ odd}\,. \\
        \end{cases}
    \end{align*}
\end{corollary}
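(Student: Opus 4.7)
The plan is to combine three ingredients already established in the paper: (i) Fact \ref{fact:epr_Q_equivalence}, which shows that in the Pauli-$Y$ basis (after the unitary transformation $U$ from Section \ref{sec:equivalence/epr}) the Hamiltonian $H^{EPR}(G)$ is block-diagonal with blocks given by $Q(F_k(G))$ indexed by Hamming-weight sectors; (ii) the observation that the Dicke state output by Algorithm \ref{alg:dicke}, when written in the Hamming-weight-$k$ sector with $k=\lfloor n/2 \rfloor$, is exactly the normalized all-ones vector $v_{n,k}$ defined in Algorithm \ref{alg:uniform_Q}; and (iii) Lemma \ref{lem:uniform_Q_alg_approx}, which computes the corresponding quadratic form as $4W(G)\,k(n-k)/(n(n-1))$.

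More concretely, the first step is to recall the algorithm's state,
\begin{align*}
    \ket{D^n_k} = \binom{n}{k}^{-1/2}\sum_{z\in\binom{[n]}{k}}\ket{z}_Y,
\end{align*}
with $k=\lfloor n/2 \rfloor$, where $\ket{z}_Y$ denotes the computational-$Y$ basis state corresponding to bitstring $z$. Since the amplitudes of $\ket{D^n_k}$ are supported entirely within the Hamming-weight-$k$ sector of the $Y$-basis, the second step is to apply the block decomposition of Fact \ref{fact:epr_Q_equivalence} to write
\begin{align*}
    \bra{D^n_k}H^{EPR}(G)\ket{D^n_k} = v_{n,k}^T\,Q(F_k(G))\,v_{n,k},
\end{align*}
where $v_{n,k}$ is the normalized all-ones vector of length $\binom{n}{k}$ indexing configurations in $F_k(G)$.

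The third step is to invoke Lemma \ref{lem:uniform_Q_alg_approx}, which gives
\begin{align*}
    v_{n,k}^T\,Q(F_k(G))\,v_{n,k} = 4\,W(G)\,\frac{k(n-k)}{n(n-1)}.
\end{align*}
Finally, one plugs in the two cases: for $n$ even, $k=n/2$ and $k(n-k)=n^2/4$, yielding $W(G)\,n/(n-1)$; for $n$ odd, $k=(n-1)/2$ and $k(n-k)=(n^2-1)/4$, yielding $W(G)\,(n+1)/n$.

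There is no real obstacle here; the result is essentially a packaging of prior lemmas. The only point requiring a little care is making sure the reader sees that the unitary $U$ from Section \ref{sec:equivalence/epr} is what allows us to identify the Dicke state written in the $Y$-basis with the uniform vector on the Hamming-weight sector in the computational basis, so that Lemma \ref{lem:uniform_Q_alg_approx} applies directly. Once that identification is made, the two case computations are elementary.
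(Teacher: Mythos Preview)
Your proposal is correct and follows essentially the same approach as the paper's own proof: both use Fact~\ref{fact:epr_Q_equivalence} together with the unitary $U$ to identify the $Y$-basis Dicke state with the uniform vector $v_{n,k}$ in the Hamming-weight-$k$ sector, then invoke Lemma~\ref{lem:uniform_Q_alg_approx} and substitute $k=\lfloor n/2\rfloor$. The only cosmetic difference is that the paper phrases the last step as ``maximizing over $k$,'' whereas you directly plug in the value specified by the algorithm; the content is identical.
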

\begin{proof}
    By \cref{fact:epr_Q_equivalence} we have that $H^{EPR}(G)$ is equivalent under a unitary transformation (in particular, the transformation by $U$ in \cref{eq:U_transform}) to a direct sum over the signless Laplacian matrices $Q(F_k(G))$. In particular, we can see in \cref{eq:U_transform} that $U$ transforms basis states of $Y$ to basis states of $Z$. Thus, $U$ transforms $\ket{D^n_k}$ to the Dicke state in the $Z$-basis. In the computational basis, it is then clear that $\ket{D^n_k}$ maps to the vector $v_{n,k}$ from \cref{eq:v_n_k_vector}. This vector was shown in \cref{lem:uniform_Q_alg_approx} to achieve energy $4 W(G)\frac{k(n-k)}{n(n-1)}$. Maximizing this quantity over $k$ yields the proof at $k=\floor{n/2}$.
\end{proof}

\section{Proof of \texorpdfstring{\cref{lem:token_reductions}}{token graph reductions}}\label{apx:token_reductions}

We first prove a useful lemma

\begin{lemma}\label{lem:match_k_lp}
    Let $\mathrm{MATCH}_k$ denote the following linear program defined on some graph $G(V,E,w)$  
    \begin{align}
      \max\quad & \sum_{(i,j)\in E} w_{ij}\,z_{ij} \nonumber \\
      \text{s.t.}\quad
      & \sum_{j\in N(i)} z_{ij} \le 1 &&\forall\,i\in V, \label{eq:lp_k_star_constraint} \\
      &\!\!\! \sum_{(i,j)\in E(S)} z_{ij} \le \frac{|S|-1}{2} &&\forall\,S\subseteq V\!: |S| \; \text{odd}, \label{eq:lp_k_odd_constraint} \\
      & z_{ij} \ge 0 &&\forall\,(i,j)\in E, \label{eq:lp_k_non-negative_constraint}\\
      & \sum_{(i,j)\in E} z_{ij} \le k \label{eq:lp_k_constraint}.
    \end{align}
     The optimal value of $\mathrm{MATCH}_k$ is $M_k(G)$.
\end{lemma}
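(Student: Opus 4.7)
My plan is to realize the feasible region of $\mathrm{MATCH}_k$ as the cardinality-constrained matching polytope and then appeal to a known polyhedral description. First I would observe that constraints \eqref{eq:lp_k_star_constraint}--\eqref{eq:lp_k_non-negative_constraint} are exactly Edmonds' inequality description of the matching polytope $P_M(G) = \mathrm{conv}\{\chi_M : M \text{ is a matching of } G\}$, where $\chi_M \in \{0,1\}^E$ is the incidence vector of $M$. Consequently, the feasible region of $\mathrm{MATCH}_k$ is $\mathcal{P} := P_M(G) \cap \{z : \sum_e z_e \le k\}$.

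The inequality $\mathrm{MATCH}_k \ge M_k(G)$ is immediate, since for any matching $M$ with $|M| \le k$ the incidence vector $\chi_M$ lies in $\mathcal{P}$ and attains objective $w(M)$; maximizing over such $M$ gives $M_k(G)$. For the reverse inequality, I would invoke the Edmonds--Pulleyblank theorem on cardinality-constrained matching polyhedra (see e.g.\ Schrijver's \emph{Combinatorial Optimization}, Vol.\ B), which asserts that $\mathcal{P} = \mathrm{conv}\{\chi_M : M \text{ a matching of } G, \, |M| \le k\}$. This identifies every vertex of $\mathcal{P}$ with the incidence vector of a matching of size at most $k$, forcing the LP optimum to coincide with the integer optimum $M_k(G)$.

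As a self-contained alternative, one may decompose any feasible $z$ via Edmonds' matching polytope theorem as $z = \sum_M \lambda_M \chi_M$ with $\sum_M \lambda_M = 1$ and $\sum_M \lambda_M |M| = \sum_e z_e \le k$, and then apply Jensen's inequality to the concave function $h(s) := \max\{w(M) : |M|=s\}$:
\[
\sum_{ij \in E} w_{ij} z_{ij} \;=\; \sum_M \lambda_M\, w(M) \;\le\; \hat h\!\left(\sum_M \lambda_M |M|\right) \;\le\; \max_{s \le k} h(s) \;=\; M_k(G),
\]
where $\hat h$ is the concave piecewise-linear extension of $h$. The concavity of $h$ follows from the classical symmetric-difference exchange between an optimal matching of size $s-1$ and one of size $s+1$: their symmetric difference decomposes into alternating paths and cycles whose edges can be repartitioned into two matchings of size $s$ with the same total weight, giving $h(s-1)+h(s+1)\le 2h(s)$.

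The main obstacle is establishing either the polyhedral identity or, equivalently, the concavity of $h$; both are standard but nontrivial matching-theoretic facts. Citing Edmonds--Pulleyblank is the shortest route, while the Jensen-based argument is more elementary once concavity of $h$ is in hand.
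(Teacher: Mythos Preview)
Your proposal is correct, and both routes you outline are sound. The first route---cite a polyhedral description of the cardinality-constrained matching polytope and conclude integrality---is essentially what the paper does, though the paper takes a slightly longer path: it invokes Ara\'oz--Cunningham--Edmonds--Green-Kr\'otki on capacitated $b$-matchings, which yields a description with the larger family of constraints
\[
\sum_{(i,j)\in E(T)\cup F} z_{ij} \le \frac{|T|+|F|-1}{2}\qquad \forall\,T\subseteq V,\ F\subseteq\delta(T),\ |T|+|F|\text{ odd},
\]
and then spends a paragraph showing that \eqref{eq:lp_k_star_constraint}--\eqref{eq:lp_k_constraint} already imply these extra inequalities. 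Your direct citation of the cardinality-constrained result in Schrijver shortcuts that redundancy argument; the content is the same.

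Your second route via Jensen and concavity of $h(s)=\max\{w(M):|M|=s\}$ is genuinely different from the paper's proof and more elementary: it needs only Edmonds' matching polytope (to decompose $z$) plus the exchange argument for concavity, and never touches the capacitated $b$-matching machinery. The trade-off is that it requires you to spell out the symmetric-difference argument carefully (in particular, picking one augmenting path $P$ in $M_{s-1}\triangle M_{s+1}$ so that $M_{s-1}\triangle P$ and $M_{s+1}\triangle P$ are both matchings of size $s$ with total weight $h(s-1)+h(s+1)$), whereas the paper simply cites a black-box polyhedral theorem. Either approach is acceptable; the Jensen argument has the virtue of being self-contained.
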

\begin{proof}
    The proof follows from \cite[Theorem 42]{araoz1983}, which concerns $b$-matchings, a generalization of matchings. A $b$-matching is a multiset $B$ of edges $e\in E$ such that each vertex $v \in V$ is incident to at most $b_v$ edges in B, and where $B$ contains at most $u_e$ copies of each edge $e$. We let $b_v$ and $u_e$ be parameters of the $b$-matching. When all $b_v, u_e =1$, we recover the standard definition of a matching. Our result follows by taking all $b_v, u_e =1$, $p=0$, and $q=k$ in Theorem 42. We then get that the convex hull of matchings with at most $k$ edges is given by all $z$ satisfying \cref{eq:lp_k_star_constraint,eq:lp_k_non-negative_constraint,eq:lp_k_constraint} and
    \begin{align}
        \sum_{(i,j)\in E(T) \cup F} z_{ij} \le \frac{|T|+|F|-1}{2} &&\forall\,T\subseteq V, F \subseteq \delta(T) \!: |T|+|F| \; \text{odd}.  
    \label{eq:capacitated_odd_constraint}
    \end{align}
    Note that the above constraints are a generalization of \cref{eq:lp_k_odd_constraint} by setting $F = \emptyset$. To complete the proof, we claim that \cref{eq:lp_k_star_constraint,eq:lp_k_non-negative_constraint,eq:lp_k_constraint,eq:lp_k_odd_constraint} imply \cref{eq:capacitated_odd_constraint}. Thus, the feasible solution set of $\mathrm{MATCH}_k$ is the convex hull of matchings with at most $k$ edges. The optimal solution of the LP must lie at a vertex of the feasible set, and since these vertices are precisely given by matchings with at most $k$ edges, the optimal value $\mathrm{MATCH}_k$ corresponds to $M_k(G)$. It then suffices to prove the claim.
    
    Suppose $z$ satisfies \cref{eq:lp_k_star_constraint,eq:lp_k_non-negative_constraint,eq:lp_k_constraint,eq:lp_k_odd_constraint}. Let $N_F(T) = \{j \in V \setminus T \mid \exists \,i \in T \text{ with } (i,j) \in F\}$ be the set of vertices outside $T$ that neighbor a vertex in $T$ via an edge in $F$. Observe that $|N_F(T)| \leq |F|$. First consider the case when $|N_F(T)| < |F|$. Let $S = T \cup N_F(T)$. By summing \cref{eq:lp_k_star_constraint} over $i \in S$, we get
    \begin{align*}
        \frac{1}{2}\sum_{i\in S}\sum_{j \in N(i)}z_{ij}=\sum_{(i,j)\in E(S)} z_{ij} + \frac{1}{2} \sum_{(i,j)\in \delta(S)} z_{ij} \leq \frac{|S|}{2},
    \end{align*}
    where the middle expression comes from the fact that we double count edges in $E(S)$ and single count edges in $\delta(S)$. Since $z_{ij} \geq 0$, this implies
    \begin{align*}
        \sum_{(i,j)\in E(S)} z_{ij} \leq \frac{|S|}{2}.
    \end{align*}
    Since $E(T) \cup F \subseteq E(S)$, we get 
    \begin{align*}
        \sum_{(i,j) \in E(T)\cup F} z_{ij} = \sum_{(i,j)\in E(S)} z_{ij} \leq \frac{|S|}{2} \leq \frac{|T|+|F|-1}{2},
    \end{align*}
    as desired.

    The other case is when $|N_F(T)| = |F|$. For this we take $S = T \cup N_F(T)$ as above and observe that $|S| = |T| + |F|$ so that \cref{eq:capacitated_odd_constraint} follows directly from \cref{eq:lp_k_odd_constraint}.
\end{proof}

We now may prove \cref{lem:token_reductions}. We start with \cref{eq:l_max_reduction}, and we show how the rest of the statements follow similarly. We restate the portion of the lemma concerning the Laplacian below

\begin{claim}[Restatement of \cref{eq:l_max_reduction}]\label{claim:l_unweighted_reduction}
    If
    \begin{align}
        \lambda_{max}\of{L\of{F_k\of{G}}} \le m+k \label{eq:reduction_hypothesis},
    \end{align}
    for any unweighted graph $G$ and any $1\le k\le\floor{\frac{n}{2}}$. Then, 
    \begin{align}\label{eq:reduction_conclusion}
        \lambda_{max}\of{L\of{F_k\of{G}}} \le W(G)+M_k(G)\,,
    \end{align}
    for any (possibly weighted) graph $G$.
\end{claim}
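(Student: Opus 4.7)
The plan is to combine the Rayleigh-quotient characterization of $\lmax(L(F_k(G)))$ with the LP formulation of $M_k(G)$ from \cref{lem:match_k_lp}. For any unit vector $x \in \mathbb{R}^{\binom{V}{k}}$, I would decompose the Laplacian quadratic form edge by edge as
$$x^T L(F_k(G)) x = \sum_{e=(a,b) \in E} w_{ab}\, Q_e(x), \qquad Q_e(x) := \sum_{C \in \binom{V \setminus \{a,b\}}{k-1}} (x_{C \cup \{a\}} - x_{C \cup \{b\}})^2,$$
and introduce the edge vector $y_e(x) := \max(Q_e(x) - 1,\, 0)$. The heart of the proof is the claim that $y(x)$ is feasible for the LP $\mathrm{MATCH}_k(G)$ for every unit $x$. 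Granting this, since $w \ge 0$,
$$x^T L(F_k(G)) x = W(G) + \sum_e w_e (Q_e(x) - 1) \le W(G) + \sum_e w_e\, y_e(x) \le W(G) + M_k(G),$$
where the last inequality uses LP-optimality of $M_k(G)$ over its feasible set. Maximizing over unit $x$ then yields the conclusion.

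The remaining task is to verify the three nontrivial LP constraints \cref{eq:lp_k_star_constraint,eq:lp_k_odd_constraint,eq:lp_k_constraint} for $y(x)$. Each will follow by applying the hypothesis (or a known result) to a suitably chosen unweighted subgraph, together with a structural observation: if an unweighted graph $H$ on vertex set $V$ has all its edges inside a subset $T \subseteq V$, then $F_k(H)$ block-diagonalizes into components indexed by the external part $A \cap (V \setminus T)$, with each block isomorphic to $F_j(H|_T)$ for some $0 \le j \le \min(k, |T|)$. Combined with the symmetry $F_j \cong F_{|T|-j}$ and the hypothesis applied to $H|_T$, this yields $\lmax(L(F_k(H))) \le |E(H)| + \lfloor |T|/2 \rfloor$.

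Plugging in: the global constraint follows from taking $H = (V, E^+)$ with $E^+ := \{e : Q_e(x) > 1\}$, which gives $\sum_e y_e(x) \le \sum_{e \in E^+} Q_e(x) - |E^+| \le k$. The vertex constraint at $i$ concerns a subset of the star at $i$; here I would invoke \cref{lem:l_star_bound}, which supplies the tight $\lmax(L(F_k(S_m))) = m+1$ independent of $k$ and is strictly sharper than what the hypothesis alone provides. The odd-subset constraint at $S \subseteq V$ with $|S|$ odd takes $H$ with all edges inside $S$, and the block-diagonalization then gives the desired bound $|E^+ \cap E(S)| + (|S|-1)/2$. I expect the hardest step to be the odd-subset case, since it requires both the block-diagonalization observation and a check that the reduced parameter $j$ falls in the hypothesis' valid range $1 \le j \le \lfloor |S|/2 \rfloor$. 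The remaining conclusions \cref{eq:q_max_reduction,eq:a_max_reduction,eq:a_min_reduction} should follow by essentially the same template, substituting $Q$ or $\pm A$ for $L$ and tracking signs.
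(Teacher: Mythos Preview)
Your proposal is correct and shares the paper's core strategy: show that an ``excess energy'' vector $y$ is feasible for the LP $\mathrm{MATCH}_k$ of \cref{lem:match_k_lp}, by applying the unweighted hypothesis to suitable subgraphs. The one genuine difference is in how non-negativity is handled. The paper argues by contradiction on a minimal-edge-count counterexample $G$: if some edge $e$ had $Q_e(v^*) \le 1$ for an optimal eigenvector $v^*$, removing $e$ would produce a smaller counterexample, so in fact $Q_e(v^*) \ge 1$ for every edge and one may set $y_e := Q_e(v^*) - 1$ without truncation. You instead work directly with $y_e := \max(Q_e(x)-1,0)$ and then, when checking the star, odd-subset, and cardinality constraints, restrict to the positive-excess edge set $E^+$. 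Both routes invoke the same ingredients (\cref{lem:l_star_bound} for the vertex constraint, the hypothesis on unweighted subgraphs for the others). Your version is more direct---no contradiction needed---and has the merit of making explicit the block-diagonalization of $F_k(H)$ when all edges of $H$ lie inside $T\subset V$, a step the paper leaves implicit when it applies the hypothesis to an induced subgraph $G(S)$ on $|S|$ vertices. The minimal-counterexample trick buys the paper the convenience of checking each LP constraint against the \emph{full} edge set of $G$ rather than passing to $E^+$-restricted subgraphs, but this is largely cosmetic.
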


\begin{proof}
    We prove this statement by contradiction. Suppose \cref{eq:reduction_hypothesis} is true and \cref{eq:reduction_conclusion} is false. Let $G$ be the minimal graph (i.e. the graph with smallest number of edges) for which \cref{eq:reduction_conclusion} is false. For some unit-norm eigenvector $v$ of $L(F_k(G))$ with eigenvalue $\lambda$, let $x_e \defeq v^T L(F_k(G_e))\,v$, where $G_e$ is the graph consisting of just the edge $e$. If there is some optimal eigenvector $v^*$ such that there exists an edge $e$ with $x^*_e \leq 1$, then we can find a smaller $G$ for which \cref{eq:reduction_conclusion} is false by removing $e$ as follows: let $G'$ be the graph with edge $e'$ removed from $G$. Then
    
    \begin{align*}
        \lambda_{max}\of{L(F_k(G))} &\geq \sum_{e \in E \neq e'} w_e x^*_e =  \lambda_{max}(G) - w_{e'} x^*_{e'} \\
        &> W(G) + M_k(G) - w_{e'}  \geq W(G') + M_k(G'),
    \end{align*}
    
    where we used $\lambda_{max}(L(F_k(G))) > W(G) + M_k(G)$,  $-x^*_e \geq -1$, and $w_e \geq 0$ for the strict inequality and $M_k(G) \geq M_k(G')$ and $W(G')=W(G)-w_e$ for the final inequality. This means $\lambda_{max}\of{L(F_k(G))} > W(G') + M_k(G')$, so $G$ was not a minimal graph for the hypothesis. Thus, we may assume that for every optimal solution $v^*$ of $G$, we have that $x^*_e \geq 1$ for all $e$. Now let $y^*_e \defeq x^*_e - 1$. Because $v^*$ is optimal, we know 
    \begin{align*}
        \lambda_{max}\of{L(F_k(G))} = \sum_e w_e x^*_e = \sum_e w_e (1+y^*_e) = W(G) + \sum_e w_e y^*_e.
    \end{align*} 
    By assumption then $\sum_e w_e y^*_e > M_k(G)$. Thus, if \cref{lem:match_k_lp} is true, then one of the constraints in the LP must be violated by the values $y$. 
    However
    \begin{itemize}
        \item \cref{eq:lp_k_star_constraint} is satisfied by \cref{lem:l_star_bound}.
        \item \cref{eq:lp_k_odd_constraint} is satisfied because the subgraph $G(S)$ has an odd number of nodes $S$. From \cref{eq:reduction_hypothesis} we have that $\sum_{e \in G(S)} x^*_e \leq W(S) + k'$ for some $1 \le k' \leq \floor{|S|/2}=\frac{S-1}{2}$. Thus $\sum_{e \in G(S)} y^*_e = \sum_{e \in G(S)} x^*_e - W(S) \le \frac{|S|-1}{2}$. 
        \item \cref{eq:lp_k_non-negative_constraint} is satisfied because $x^*_e \geq 1$.
        \item \cref{eq:lp_k_constraint} is satisfied by \cref{eq:reduction_hypothesis}, applied to the unweighted version of the graph $G$.
    \end{itemize}
     We thus have a contradiction.
\end{proof}

The proof for $\lmax\of{Q(F_k(G))}$ follows identically to that of \cref{claim:l_unweighted_reduction}. The proofs of $\lmax\of{A(F_k(G))}$ and $\lmin\of{A(F_k(G))}$ requires us use \cref{lem:star_A_max} and \cref{cor:star_A_min}, which in particular imply that the absolute value of the maximum and minimum eigenvalues of the adjacency matrices of the $k$\emph{-th} token graph of a star are less than $\frac{m+k}{2}$. 
Then, we simply repeat the proof of \cref{claim:l_unweighted_reduction}, using all the same definitions, albeit with respect to the Hamiltonians $2A(F_k(G))$ and $-2A(F_k(G))$. The monogamy of entanglement result on a star then shows that the maximum eigenvalue of these scaled matrices are $m+k$, analogously to the case of the Laplacian and signless Laplacian.

\section{Proof of \texorpdfstring{\cref{lem:h_reductions}}{Hamiltonian reductions}}\label{apx:ham_reductions}

We first use the following characterization of the matching polytope

\begin{lemma}[\cite{pulleyblank1974}]\label{lem:match_lp}
    Let $\mathrm{MATCH}$ denote the following linear program defined on some graph $G(V,E,w)$  
    \begin{align}
      \max\quad & \sum_{(i,j)\in E} w_{ij}\,z_{ij} \nonumber \\
      \text{s.t.}\quad
      & \sum_{j\in N(i)} z_{ij} \le 1 &&\forall\,i\in V, \label{eq:lp_star_constraint} \\
      &\!\!\! \sum_{(i,j)\in E(S)} z_{ij} \le \frac{|S|-1}{2} &&\forall\,S\subseteq V\!: G(S) \in \mathcal{F}, \label{eq:lp_2vcfc_constraint} \\
      & z_{ij} \ge 0 &&\forall\,(i,j)\in E \label{eq:lp_non-negative_constraint}.
    \end{align}
     The optimal value of $\mathrm{MATCH}$ is $M(G)$.
\end{lemma}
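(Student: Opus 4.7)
The plan is to derive this characterization from Edmonds' classical description of the matching polytope, which states that the convex hull of matching indicator vectors coincides with the set of $z \geq 0$ satisfying the degree constraints \cref{eq:lp_star_constraint} together with the blossom inequalities $\sum_{(i,j)\in E(S)} z_{ij} \le (|S|-1)/2$ for \emph{every} odd $S \subseteq V$. The forward direction is immediate: the indicator vector of any matching obeys \cref{eq:lp_star_constraint} because every vertex is incident to at most one matched edge, and obeys \cref{eq:lp_2vcfc_constraint} because any matching inside $G(S)$ contains at most $\lfloor |S|/2 \rfloor = (|S|-1)/2$ edges when $|S|$ is odd. Consequently, it suffices to establish that a blossom inequality for an odd set $S$ with $G(S)\notin \mathcal F$ is implied by the inequalities retained in $\mathrm{MATCH}$, after which the result reduces to Edmonds' theorem.

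I would prove this redundancy by induction on $|S|$, splitting into cases according to how $G(S)$ fails to lie in $\mathcal F$. If $G(S)$ is disconnected with components $S_1,\ldots,S_r$, then since $|S|$ is odd an odd number $q\ge 1$ of the $|S_\ell|$ are odd. For each odd-sized component the inductive hypothesis yields $\sum_{(i,j)\in E(S_\ell)} z_{ij}\le (|S_\ell|-1)/2$, whereas for each even-sized component, summing \cref{eq:lp_star_constraint} over $i\in S_\ell$ and using $z\ge 0$ gives $\sum_{(i,j)\in E(S_\ell)} z_{ij}\le |S_\ell|/2$. Combining these bounds produces $\sum_{(i,j)\in E(S)} z_{ij} \le (|S|-q)/2 \le (|S|-1)/2$. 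If $G(S)$ is connected but has a cut vertex $v$, I would decompose $S$ into blocks meeting at $v$; the degree constraint at $v$ bookkeeps the edges incident to $v$, while the edges interior to each block are bounded by inductively smaller blossom or degree constraints, with a parity argument ensuring that at least one block contributes the saving of $1/2$ needed to reach the target right-hand side.

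The main obstacle is the case where $G(S)$ is biconnected but not factor-critical. Here one cannot simply split $E(S)$ into edge sets supported on strictly smaller odd sets and finish with decomposition. My plan is to invoke the Gallai--Edmonds structure theorem: since $|S|$ is odd and $G(S)$ admits no near-perfect matching, the maximum matching in $G(S)$ already has size at most $(|S|-3)/2$, so the blossom bound is loose in integer terms. Exploiting this slack, I would exhibit an explicit nonnegative combination of degree constraints and blossom inequalities for biconnected factor-critical proper subsets of $S$ (obtained by iteratively deleting vertices whose removal preserves the Gallai--Edmonds deficiency until only pieces in $\mathcal F$ remain) that dominates the $S$-blossom inequality. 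Verifying that this patching is always tight enough to recover the $(|S|-1)/2$ bound is the crux of the argument and is essentially equivalent to identifying the facets of the matching polytope, which is the core content of Pulleyblank's theorem.
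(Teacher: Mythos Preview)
The paper does not prove this lemma at all: it is stated with the citation \cite{pulleyblank1974} and used as a black box. So there is no ``paper's own proof'' to match; you are in effect re-deriving the Pulleyblank--Edmonds facet characterization of the matching polytope from Edmonds' description.

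Your reduction to showing redundancy of the blossom inequality for odd $S$ with $G(S)\notin\mathcal F$ is the right framing, and the disconnected and cut-vertex cases are fine in outline. The genuine gap is in the biconnected, non-factor-critical case. You assert that such a $G(S)$ ``admits no near-perfect matching'' and hence has maximum matching size at most $(|S|-3)/2$; this is false. The complete bipartite graph $K_{2,3}$ is biconnected, has odd order $5$, is not factor-critical (deleting a vertex from the side of size $2$ leaves $K_{1,3}$, which has no perfect matching), yet it has a near-perfect matching of size $2=(5-1)/2$. So the ``slack'' you plan to exploit does not exist in general, and the subsequent patching argument collapses.

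What actually happens in examples like $K_{2,3}$ is that the blossom inequality is dominated by a short sum of \emph{degree} constraints (summing \cref{eq:lp_star_constraint} over the side of size $2$ already gives $\sum_{e\in E(S)} z_e\le 2$). The correct argument for this case uses the Gallai--Edmonds decomposition of $G(S)$: one shows that the essential vertices (the barrier) together with the odd components of their deletion give enough degree constraints and \emph{smaller} factor-critical blossom constraints whose nonnegative combination dominates the $S$-inequality. Your sketch gestures at Gallai--Edmonds but leans on the wrong consequence of it; you would need to replace the ``no near-perfect matching'' claim with an explicit use of the barrier set and its odd components.
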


We now may prove \cref{lem:h_reductions}. We start with \cref{eq:qmc_reduction}, and we show how the rest of the statements follow similarly. We restate the portion of the lemma concerning QMC below

\begin{claim}[Restatement of \cref{eq:qmc_reduction}]\label{claim:qmc_reduction}
    If
    \begin{align}
        \lambda_{max}\of{H^{QMC}\of{G}} \le m+M(G) \label{eq:ham_reduction_hypothesis},
    \end{align}
    for any $G \in \mathcal{F}$, then 
    \begin{align}\label{eq:ham_reduction_conclusion}
        \lambda_{max}\of{H^{QMC}(G)} \le W(G)+M(G)\,,
    \end{align}
    for any (possibly weighted) graph $G$.
\end{claim}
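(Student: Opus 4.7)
The proof plan closely mirrors the Laplacian argument in \cref{claim:l_unweighted_reduction} from \cref{apx:token_reductions}. I proceed by contradiction. Observe first that for any $G \in \mathcal{F}$, odd order and factor-criticality force $M(G) = (n-1)/2$, so the hypothesis \cref{eq:ham_reduction_hypothesis} is equivalent to the form in \cref{eq:qmc_reduction}. Assume this hypothesis and let $G$ be a (possibly weighted) graph with the \emph{fewest edges} that violates \cref{eq:ham_reduction_conclusion}. Take an optimal state $\rho^*$ attaining $\tr(H^{QMC}(G)\rho^*) = \lmax(H^{QMC}(G))$, and define per-edge contributions $x^*_e \defeq \tr(h^{QMC}_e\,\rho^*) \ge 0$ (nonnegative since each $h^{QMC}_e$ is PSD), so that $\lmax(H^{QMC}(G)) = \sum_e w_e x^*_e$. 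These $x^*_e$ play the role of the Laplacian quadratic forms $v^{*T} L(F_k(G_e))\,v^*$ in the proof of \cref{claim:l_unweighted_reduction}.

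Next I would argue that $x^*_e \ge 1$ for every edge $e$. If $x^*_e \le 1$ for some $e$, then deleting $e$ produces a graph $G'$ on which $\rho^*$ already witnesses
\begin{align*}
    \lmax(H^{QMC}(G')) \ge \lmax(H^{QMC}(G)) - w_e x^*_e > W(G) + M(G) - w_e \ge W(G') + M(G'),
\end{align*}
using $W(G') = W(G) - w_e$ and $M(G') \le M(G)$, which contradicts the minimality of $G$. I may thus assume $x^*_e \ge 1$ everywhere and set $y^*_e \defeq x^*_e - 1 \ge 0$, so that the violation rearranges to $\sum_e w_e y^*_e > M(G)$.

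The contradiction follows from Pulleyblank's matching-polytope description (\cref{lem:match_lp}): since $\mathrm{MATCH}$ has optimum $M(G)$, I need only show $y^*$ is a feasible point. Nonnegativity is immediate. The per-vertex star constraint $\sum_{j \in N(i)} y^*_{ij} \le 1$ comes from monogamy of entanglement on a star (\cref{lem:l_star_bound}) applied to the \emph{unweighted} sub-Hamiltonian $\sum_{j \in N(i)} h^{QMC}_{ij}$: its expectation in $\rho^*$ equals $\sum_{j \in N(i)} x^*_{ij}$, which is at most $\deg(i) + 1$. The odd-set constraint $\sum_{(i,j) \in E(S)} y^*_{ij} \le (|S|-1)/2$ for each $S$ with $G(S) \in \mathcal{F}$ comes from applying the \emph{hypothesis} to the unweighted induced subgraph $G(S)$, which gives $\sum_{(i,j) \in E(S)} x^*_{ij} \le m(G(S)) + (|S|-1)/2$. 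Feasibility therefore forces $\sum_e w_e y^*_e \le M(G)$, contradicting $\sum_e w_e y^*_e > M(G)$. The three companion reductions in \cref{lem:h_reductions} will follow from this same template after substituting the appropriate local term and supplying the corresponding star-monogamy input from \cref{lem:l_star_bound} (for EPR, via \cref{lem:bipartite_equivalence}) and \cref{lem:star_A_max}/\cref{cor:star_A_min} (for the XY cases, with the offset in $y^*_e$ rescaled by $\tfrac{1}{2}$ to match the $\tfrac{1}{2}(m+k)$-style adjacency conjectures).

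The main subtlety I anticipate—shared with the Laplacian proof—is that the facets of the matching polytope in \cref{lem:match_lp} must be covered \emph{exactly} by the star-monogamy bound and the $\mathcal{F}$-hypothesis: weakening the hypothesis to all odd induced subgraphs would leave slack that the proof cannot absorb, and no bound stronger than the star constraint is available from monogamy. This tight correspondence between the LP facets and the available ingredients is precisely what dictates the choice of $\mathcal{F}$ in the hypothesis, and it is the step I would verify most carefully when adapting the argument to each of the remaining three cases.
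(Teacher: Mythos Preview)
Your proposal is correct and follows essentially the same approach as the paper's proof: a minimal-counterexample argument that reduces to showing the shifted edge energies $y^*_e = x^*_e - 1$ are feasible for Pulleyblank's matching LP (\cref{lem:match_lp}), with the star constraint supplied by monogamy of entanglement and the $\mathcal{F}$-constraints by the hypothesis. The only cosmetic difference is that the paper cites \cite[Lemma 1]{anshu2020} directly for the star bound rather than \cref{lem:l_star_bound}, but these are the same result.
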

\begin{proof}
    The proof is quite similar to that of \cref{claim:l_unweighted_reduction}, but we derive it here for completeness. We prove by contradiction by assuming \cref{eq:ham_reduction_hypothesis} is true and \cref{eq:ham_reduction_conclusion} is false. Let $G$ be the minimal graph for which \cref{eq:ham_reduction_conclusion} is false. Let $x_e$ denote the energy a state $\rho$ obtains on edge $e$ with respect to $H^{QMC}(G)$, (i.e. $x_e = \tr\ofb{h^{QMC}_e \rho} \; \forall e \in E$. If there is some optimal solution $\rho^*$ such that there exists an edge $e$ with $x^*_e \leq 1$, then we can find a smaller $G$ by removing $e$ as follows: let $G'$ be the graph with edge $e'$ removed from $G$. Then
    
    \begin{align*}
        \lmax(H^{QMC}\of{G'}) &\geq \sum_{e \in E \neq e'} w_e x^*_e =  \lmax(H^{QMC}\of{G}) - w_{e'} x^*_{e'} \\
        &> W(G) + M(G) - w_{e'} \geq W(G') + M(G'),
    \end{align*}
    
    where we used $\lmax(H^{QMC}\of{G}) > W(G) + M(G)$,  $-x^*_e \geq -1$, and $w_e \geq 0$ for the strict inequality and $M(G) \geq M(G')$ for the final inequality. This means $\lmax(H^{QMC}\of{G'}) > W(G') + M(G')$, so $G$ was not a minimal counterexample. Thus, we may assume that for every optimal solution $\rho^*$ of $G$, we have that $x^*_e \geq 1$ for all $e$. Now let $y^*_e \defeq x^*_e - 1$. Because $\rho^*$ is optimal, 
    \begin{align*}
        \lmax\of{H(G)} = \sum_e w_e x^*_e = \sum_e w_e (1+y^*_e) = W(G) + \sum_e w_e y^*_e.
    \end{align*} 
    Since $\sum_e w_e y^*_e > M(G)$, \cref{lem:match_k_lp} implies that some inequality among \cref{eq:lp_star_constraint,eq:lp_2vcfc_constraint,eq:lp_non-negative_constraint} must be violated by $y^*$
    However, it is shown in \cite[Lemma 1]{anshu2020} that $\sum_{j \in N(i)} y^*_{ij} \leq 1$ $\forall i \in V$, so the values $y^*$ satisfy \cref{eq:lp_star_constraint}. Because $x^*_e \ge 1$ for all $e \in E$, $y^*$ also satisfy \cref{eq:lp_non-negative_constraint}, so $y^*$ must violate an inequality among \cref{eq:lp_2vcfc_constraint} for some $S$. But then the induced subgraph $F$ indcuded by $S$ is a biconnected factor-critical graph, so \cref{eq:ham_reduction_hypothesis} says that $\lmax(H^{QMC}\of{F}) \leq W(F) + M(F) \implies \sum_{e \in G(S)} y^*_e \le M(F)$. Thus we have a contradiction.
\end{proof}

The proof for $\lmax\of{H^{EPR}(G)}$ follows identically. The proofs for the maximum and minimum eigenvalue of $XY$ ($\lmax\of{H^{XY}(G)}$ and $\lmin\of{H^{XY}(G)}$) can be easily shown to follow by analyzing $y$ defined by edge energies of 
\begin{align*}
    H = 2H^{XY}(G)-W(G),
\end{align*}
for $\lmax\of{H^{XY}(G)}$ and 
\begin{align*}
    H = -2H^{XY}(G)+W(G),
\end{align*}
for $\lmin\of{H^{XY}(G)}$ and by invoking \cref{cor:star_xy_monogamy}, which in particular implies
\begin{align*}
    \lmax\of{2H^{XY}(G)-W(G)} &\le W(G)+M(G), \\
    \lmax\of{-2H^{XY}(G)+W(G)} &\le W(G)+M(G).
\end{align*}

\section{Additional and refuted conjectures}\label{apx:numerics}
We now discuss conjectures that we tested and either verified or refuted, as well as additional numeric experiments that are not included in the main body

\subsection{Adjacency matrix spectrum containment}\label{apx:numerics/adj_containment}

We present a counterexample for which 
\begin{align*}
    \lmin(A(F_k(G)) \geq \lmin(A(F_{k+1}(G)),
\end{align*}

is not true, as discussed in \cref{sec:conj/token}.

\vspace{-1em}
\begin{center}
\begin{tikzpicture}[scale=0.3, every node/.style={circle, fill=cyan!30, draw=black, minimum size=0.1cm}, thick]

\node (a) at (5,12.25) {};
\node (b) at (6.25,9.75) {};
\node (c) at (9.25,9.75) {};
\node (d) at (10.75,12.25) {};
\node (e) at (7.75,12.25) {};
\node (f) at (7.75,14.75) {};
\node (g) at (2.75,12.25) {};
\node (h) at (2.75,9.75) {};

\draw (b) -- (c);
\draw (c) -- (d);
\draw (d) -- (f);
\draw (f) -- (a);
\draw (a) -- (b);
\draw (a) -- (e);
\draw (e) -- (f);
\draw (e) -- (d);
\draw (h) -- (b);
\draw (a) -- (g);
\end{tikzpicture}
\end{center}

This counterexample has $\lmin(A(F_3(G))=-4.472$ and $\lmin(A(F_4))= -4.470$.

\subsection{Upper bounds based on maximum cut}\label{apx:numerics/cut_bounds}

Based on numerical evidence for a smaller set of graphs, we additionally tested the maximum cut-based bound 

\begin{align*}
    \lmax\of{L\of{F_k\of{G}}} \le \frac{W(G)+C(G)}{2}+k,
\end{align*}

which is stronger than \cref{conj:L_max}. Among our entire suite of graphs, only one graph violated this conjecture which is depicted below

\begin{center}
\begin{tikzpicture}[scale=.8, every node/.style={circle, fill=cyan!30, draw=black, minimum size=0.3cm}, thick]
      \draw
        (0.0:2) node (0){}
        (36.0:2) node (1){}
        (72.0:2) node (2){}
        (108.0:2) node (3){}
        (144.0:2) node (4){}
        (180.0:2) node (5){}
        (216.0:2) node (6){}
        (252.0:2) node (7){}
        (288.0:2) node (8){}
        (324.0:2) node (9){};
      \begin{scope}[-]
        \draw (0) to (4);
        \draw (0) to (5);
        \draw (0) to (6);
        \draw (0) to (7);
        \draw (1) to (4);
        \draw (1) to (5);
        \draw (1) to (6);
        \draw (1) to (7);
        \draw (2) to (6);
        \draw (2) to (7);
        \draw (2) to (8);
        \draw (2) to (9);
        \draw (3) to (6);
        \draw (3) to (7);
        \draw (3) to (8);
        \draw (3) to (9);
        \draw (4) to (8);
        \draw (4) to (9);
        \draw (5) to (8);
        \draw (5) to (9);
      \end{scope}
    \end{tikzpicture}
\end{center}

This graph failed only at $k=5$, where $\lmax\of{L\of{F_k\of{G}}}=23.062$, $W(G)=20$, $C(G)=16$, and thus $23.062>\frac{W(G)+C(G)}{2}+k=23$. The graph is similar to the Peterson graph, in the sense that it can be decomposed into two $5$-cycles joined by edges. However, in the Petersen graph there is one edge connecting each vertex in the cycles to the other cycle, while in this graph there are two.

In contrast, the following bound was shown to hold on our dataset
\begin{conjecture}\label{conj:A_cut_bound}
    For all $G$ and for all $1 \le k \le \floor{n/2}$
    \begin{align*}
        \lmin\of{A\of{F_k\of{G}}} \ge -\frac{C(G)+M_k(G)}{2}.
    \end{align*}
\end{conjecture}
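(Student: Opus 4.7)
The plan is to follow the two-stage strategy of \cref{lem:token_reductions} and \cref{lem:h_reductions}: first prove the conjecture for unweighted graphs, then extend to weighted graphs via a matching-polytope-style LP reduction. A natural starting point is the algebraic identity
\[
A\of{F_k\of{G}} = \of{H^{\text{Cut}}\of{G} - H^{QMC}\of{G}}\Big|_{\text{Hamming weight }k},
\]
where $H^{\text{Cut}}\of{G} \defeq \sum_{(i,j)\in E} w_{ij}(I - Z_iZ_j)/2$ is the diagonal MaxCut Hamiltonian. This follows from $A=D-L$ on the token graph together with \cref{fact:qmc_L_equivalence} and the observation that the weight-$k$ block of $H^{\text{Cut}}\of{G}$ is exactly $D\of{F_k\of{G}}$. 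As an immediate Weyl-inequality consequence, one obtains $\lmax\of{A\of{F_k\of{G}}} \le C\of{G}$ for every $k$, which is tight on the triangle and does not follow from the previously stated conjectures.

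For the minimum eigenvalue, the same Weyl argument only yields $\lmin\of{A\of{F_k\of{G}}} \ge -\lmax\of{H^{QMC}\of{G}}$, which is roughly a factor of two weaker than the conjectured $-\of{C+M_k}/2$. I would first attempt the unweighted analog: for every unweighted $G$ with $c = C\of{G}$ and every $1\le k \le \floor{n/2}$,
\[
\lmin\of{A\of{F_k\of{G}}} \ge -\tfrac{1}{2}\of{c + \min\ofc{k, M\of{G}}}.
\]
The weighted-to-unweighted reduction would then proceed by the minimal-counterexample argument of \cref{claim:l_unweighted_reduction}: given an edge-minimal counterexample $G$ with optimal eigenvector $v$, set $y_e \defeq -v^T A\of{F_k\of{G_e}} v - \tfrac{1}{2}$, show $y \ge 0$ componentwise via \cref{cor:star_A_min}, and argue that $y$ lies in the convex hull of $\tfrac{1}{2}\of{\chi_\delta + \chi_M}$ over cuts $\delta$ and matchings $M$ of size at most $k$. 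This would require a polyhedral description of the cut-plus-matching polytope analogous to \cref{lem:match_k_lp}, whose star inequality is already provided by \cref{cor:star_A_min} and whose odd-set and global-size inequalities would need to be strengthened by a cut-capacity constraint.

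The main obstacle is the unweighted inequality itself. The star-monogamy bound \cref{cor:star_A_min} together with the standard odd-set constraint of the matching polytope is not enough, since an odd cycle $C_{2\ell+1}$ contributes only $\ell$ (not $2\ell$) to $C\of{G}$, and this slack is exactly what the conjecture exploits. A promising route is the gauge transformation $U=\prod_{i\in\bar S} Z_i$ by a max cut $\of{S,\bar S}$, which conjugates $H^{XY}$ to an isospectral Hamiltonian that is ferromagnetic on cut edges and antiferromagnetic on within-cluster edges; this decomposition may allow separately applying a bipartite $\lmax$ bound to the cut piece (contributing $C$) and a matching-monogamy bound to the residual piece (contributing $M_k$), rather than losing a factor of two to Weyl applied globally. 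A parallel approach inducts on the odd-ear decomposition of biconnected factor-critical graphs (the same structure used in \cref{lem:h_reductions}): attaching an odd ear of length $2\ell+1$ increases $C$ and $M$ in a coordinated way, and Cauchy interlacing on token graphs should control the change in $\lmin\of{A\of{F_k\of{G}}}$. Either route hinges on combining the cut and matching contributions tightly enough to obtain $\of{C+M_k}/2$ without the factor-of-two loss inherent to a direct Weyl bound, and this tight combination appears to be the essential difficulty.
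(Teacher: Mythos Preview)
The paper does not prove \cref{conj:A_cut_bound}; it is stated as a conjecture, supported only by numerical verification on the authors' test suite (the sentence introducing it reads ``the following bound was shown to hold on our dataset''). There is therefore no paper proof to compare your proposal against.

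Your proposal is a strategy sketch rather than a proof, and you are candid about this: you correctly isolate the central difficulty---that the decomposition $A = D - L = H^{\text{Cut}} - H^{QMC}$ on the weight-$k$ block together with Weyl only yields $\lmin\of{A\of{F_k\of{G}}} \ge -\lmax\of{L\of{F_k\of{G}}}$, roughly a factor of two short---and you flag that combining the cut and matching contributions without this loss is ``the essential difficulty.'' One further obstacle worth naming in your reduction step: the LP lift in \cref{claim:l_unweighted_reduction} succeeds because the ($k$-bounded) matching polytope has an explicit facet list, each of whose defining inequalities can be certified by a monogamy bound on the corresponding subgraph. The polytope you propose, $\mathrm{conv}\ofc{\tfrac{1}{2}(\chi_\delta+\chi_M):\delta\text{ a cut},\,\abs{M}\le k}$, inherits the complexity of the cut polytope, for which no tractable facet description is known and over which separation is $\NP$-hard. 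So even if the unweighted inequality were settled, the weighted extension would not be a routine replay of \cref{lem:token_reductions}; you would need either a relaxation of the cut side with a usable inequality description, or a direct decomposition of $y$ into cut and matching parts. Your gauge-transformation idea sidesteps the polytope issue and is the more promising of the two routes, but as you note, it too leaves the key step open.
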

\begin{corollary}\label{cor:xy_cut_bound}
    Via \cref{fact:xy_A_equivalence}, \cref{conj:A_cut_bound} implies for all $G$
    \begin{align*}
        \lmax\of{H^{XY}\of{G}} \le \frac{W(G)+C(G)+M(G)}{2}.
    \end{align*}
\end{corollary}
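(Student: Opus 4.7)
The plan is to chain together Fact~\ref{fact:xy_A_equivalence}, Conjecture~\ref{conj:A_cut_bound}, and the monotonicity of $M_k(G)$ in $k$. First I would translate the Hamiltonian statement into a token-graph statement. By Fact~\ref{fact:xy_A_equivalence}, the spectrum of $H^{XY}(G)$ is the union, over $0 \le k \le \lfloor n/2 \rfloor$, of the spectra of $\tfrac{W}{2} I - A(F_k(G))$. Consequently,
\begin{equation*}
\lmax\!\of{H^{XY}(G)} \;=\; \max_{0 \le k \le \lfloor n/2\rfloor}\of{\frac{W(G)}{2} - \lmin\!\of{A(F_k(G))}}.
\end{equation*}

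Next I would apply Conjecture~\ref{conj:A_cut_bound}, which gives $-\lmin(A(F_k(G))) \le \tfrac{1}{2}\of{C(G) + M_k(G)}$ for every $1 \le k \le \lfloor n/2 \rfloor$. (The $k = 0$ case is trivial since $F_0(G)$ is a single vertex with zero eigenvalue.) Substituting, I obtain
\begin{equation*}
\lmax\!\of{H^{XY}(G)} \;\le\; \frac{W(G) + C(G)}{2} + \frac{1}{2}\max_{0 \le k \le \lfloor n/2\rfloor} M_k(G).
\end{equation*}

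Finally, since $M_k(G)$ is nondecreasing in $k$ and $M_k(G) = M(G)$ for every $k$ at least the size of a maximum matching, the maximum over admissible $k$ equals $M(G)$ (using that any matching has at most $\lfloor n/2\rfloor$ edges). This yields the desired bound $\lmax(H^{XY}(G)) \le \tfrac{1}{2}(W(G) + C(G) + M(G))$.

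The whole argument is essentially a direct substitution, and there is no real obstacle beyond verifying that the maximization over $k$ correctly produces $M(G)$ rather than some smaller $M_k(G)$; the key observation is simply that $\lfloor n/2 \rfloor$ is always large enough to accommodate a maximum matching in $G$.
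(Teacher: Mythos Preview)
Your proposal is correct and matches the paper's (implicit) approach: the paper states the corollary as following directly ``via \cref{fact:xy_A_equivalence}'' without spelling out the details, and your argument is exactly the intended unpacking---pass from $H^{XY}$ to the blocks $\tfrac{W}{2}I - A(F_k(G))$, bound $-\lmin(A(F_k(G)))$ using \cref{conj:A_cut_bound}, and then observe $M_{\lfloor n/2\rfloor}(G)=M(G)$ since no matching has more than $\lfloor n/2\rfloor$ edges.
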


This bound implies improved approximation ratios for XY. The proofs are given in \cref{apx:approx/xy}

\begin{restatable}{corollary}{XYCutApprox}\label{cor:xy_cut_approx}
    Under \cref{cor:xy_cut_bound}, \cref{alg:xy_alg} obtains a $3/4$-approximation for XY. This algorithm prepares a tensor product of $1$ and $2$-qubit states, but is not efficient.
\end{restatable}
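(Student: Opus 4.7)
The plan is to compute the energies that the two candidate states $\rho_C$ and $\rho_M$ of \cref{alg:xy_alg} achieve against $H^{XY}$, and then lower-bound the larger of the two against the cut-based upper bound $(W+C+M)/2$ promised by \cref{cor:xy_cut_bound}. The energy computations are routine and mirror those implicit in \cref{cor:xy_approx,cor:qmc_match_approx}, so the decisive step is the scalar inequality at the end.

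For $\rho_C$, each qubit is an $X$-eigenstate specified by the max-cut bitstring, giving $\langle Y_iY_j\rangle=0$ on every edge and $\langle X_iX_j\rangle=\mp 1$ on cut and non-cut edges respectively; plugging into $h^{XY}_{ij}=\tfrac{1}{2}(I-XX-YY)$ yields $w_{ij}$ on a cut edge and $0$ on a non-cut edge, so $\tr[H^{XY}(G)\rho_C]=C(G)$. For $\rho_M$, the singlet on a matching edge is a $(-1)$-eigenstate of both $XX$ and $YY$ and contributes $\tfrac{3}{2}w_{ij}$, while every non-matching edge has at least one endpoint maximally mixed, giving $\langle XX\rangle=\langle YY\rangle=0$ and contributing $\tfrac{1}{2}w_{ij}$. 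Summing yields $\tr[H^{XY}(G)\rho_M]=W/2+M$, so the algorithm attains $E\defeq\max(C,\,W/2+M)$.

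With upper bound $U\defeq(W+C+M)/2$, the remaining claim reduces to $8\,E\geq 3(W+C+M)$. A direct convex combination $\lambda C+(1-\lambda)(W/2+M)$ of the two lower bounds cannot suffice for any fixed $\lambda\in[0,1]$: matching the coefficient of $W$ forces $1-\lambda\geq 3/4$ while matching the coefficient of $C$ forces $\lambda\geq 3/8$. I will therefore split on which candidate energy dominates and, within each regime, invoke the structural relations $M\leq C\leq W$ and $C\geq W/2$ to reduce the claim to a scalar inequality. In the regime $C\geq W/2+M$ the claim becomes $5C\geq 3(W+M)$; in the complementary regime $C<W/2+M$ it becomes $W+5M\geq 3C$.

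The main obstacle will be establishing these two scalar inequalities. The naive combination of the inequalities available on $(W,C,M)$ only delivers $E\geq\tfrac{1}{3}(W+C+M)$, i.e.\ a $2/3$-approximation, which is tight on the complete graphs $K_n$ as $n\to\infty$ where $W\sim n^2/2$, $C\sim n^2/4$, $M\sim n/2$, and $E/U\to 2/3$. To push the argument to $3/4$ I expect one must exploit information beyond the triple $(W,C,M)$---for example by replacing $\rho_M$ with a variant that assigns a carefully chosen single-qubit state to the unmatched vertex when $|V|$ is odd, by randomizing $\rho_C$ over a small ensemble of Pauli-rotated product states in the spirit of $\rho_{C'}$ from \cref{cor:xy_approx_efficient}, or by strengthening \cref{cor:xy_cut_bound} in a way compatible with the algorithm. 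Locating the correct additional structure and confirming that at least one of the two scalar inequalities above must then hold is the principal difficulty.
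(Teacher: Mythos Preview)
Your energy computations for $\rho_C$ and $\rho_M$ are correct, and you are also right that comparing $E=\max(C,\,W/2+M)$ against the single upper bound $U=(W+C+M)/2$ cannot yield $3/4$: your $K_n$ example is a genuine obstruction, and no structural inequality among the triple $(W,C,M)$ will repair it.

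The missing ingredient is a \emph{second} upper bound on $\lmax(H^{XY}(G))$, proved unconditionally in \cref{lem:xy_cut_bound}:
\[
\lmax\!\bigl(H^{XY}(G)\bigr)\ \le\ 2\,C(G)-\tfrac{W(G)}{2},
\]
obtained by writing $H^{XY}=\tfrac12\sum(I-XX)+\tfrac12\sum(I-YY)-\tfrac12\sum I$ and bounding each of the first two sums by the MaxCut value. The paper's proof of \cref{cor:xy_cut_approx} is the same as that of \cref{cor:xy_approx}, except that the conjectural bound $W+M/2$ is swapped for $(W+C+M)/2$; the bound $2C-W/2$ is \emph{retained}. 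One then minimises
\[
\frac{\max\bigl(c,\ m+\tfrac12\bigr)}{\min\bigl(2c-\tfrac12,\ \tfrac{1+c+m}{2}\bigr)}
\]
over $0\le m\le 1$ and $\tfrac{m+1}{2}\le c\le 1$, the lower constraint on $c$ encoding the consistency $2C-W/2\ge W/2+M$ between the cut upper bound and the matching-based algorithmic value. The minimum is $3/4$, attained at $c=3/4$, $m=1/4$, where both entries of the $\max$ and both entries of the $\min$ coincide.

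On your $K_n$ example the second bound is the binding one: $2C-W/2\sim n^2/4$ while $(W+C+M)/2\sim 3n^2/8$, and against the smaller denominator the ratio tends to $1$. So you do not need a modified algorithm or a strengthened conjecture---only the additional upper bound already available in the paper.
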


\begin{restatable}{corollary}{XYCutApproxEfficient}\label{cor:xy_cut_approx_efficient}
    Under \cref{cor:xy_cut_bound}, \cref{alg:xy_alg} with $\rho_C$ replaced by $\rho_C'$ obtains a $0.712$-approximation in expectation for XY. This algorithm prepares a tensor product of $1$ and $2$-qubit states and is efficient.
\end{restatable}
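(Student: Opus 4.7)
The plan is to bound the expected energy of \cref{alg:xy_alg}'s output in terms of $W(G)$, $C(G)$, and $M(G)$, combine this with the conjectured upper bound from \cref{cor:xy_cut_bound}, and extract the approximation ratio through a linear-programming analysis. The argument parallels that of \cref{cor:xy_cut_approx} but propagates the $0.9349$ rounding factor of Briet's algorithm through the case analysis.

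First, a direct calculation establishes $E(\rho_M) = W(G)/2 + M(G)$: on a matching edge the singlet satisfies $\langle h^{XY}_{ij}\rangle_{\ket{\psi^-}} = \tfrac{1}{2}(1 - \langle XX\rangle - \langle YY\rangle) = \tfrac{3}{2}$, while on any non-matching edge at least one endpoint has reduced state $\rho_{mix}$, forcing $\langle X_iX_j\rangle = \langle Y_iY_j\rangle = 0$ and hence $\langle h^{XY}_{ij}\rangle = \tfrac{1}{2}$, giving $E(\rho_M) = \tfrac{3}{2}M + \tfrac{1}{2}(W-M) = W/2 + M$. For $\rho_{C'}$, the Pauli-$X$ eigenstate encoding a maximum cut is itself a tensor product achieving energy exactly $C$ (cut edges contribute $1$, non-cut edges contribute $0$, using $\langle Y\rangle_{\ket{\pm}} = 0$), so the optimal two-dimensional product-state value is at least $C$; by \cite[Section 6]{briet2010}, the rounded state $\rho_{C'}$ satisfies $\mathbb{E}[E(\rho_{C'})] \ge 0.9349\cdot C$.

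Since the algorithm outputs whichever candidate has larger observed energy, its expected energy satisfies
\begin{align*}
    \mathbb{E}[E_{\text{ALG}}] \ge \max\bigl(\mathbb{E}[E(\rho_{C'})], E(\rho_M)\bigr) \ge \max\bigl(0.9349\,C,\; W/2 + M\bigr),
\end{align*}
and by convexity exceeds any weighted average $\lambda\cdot 0.9349\,C + (1-\lambda)(W/2 + M)$ for $\lambda \in [0,1]$. Combining with $\lmax(H^{XY}(G)) \le (W+C+M)/2$ from \cref{cor:xy_cut_bound},
\begin{align*}
    \alpha \ge \min_G\,\max_{\lambda\in[0,1]}\,\frac{2\bigl[\lambda\cdot 0.9349\,C(G) + (1-\lambda)(W(G)/2 + M(G))\bigr]}{W(G)+C(G)+M(G)}\,.
\end{align*}
Normalizing $W=1$ and writing $c = C/W$, $m = M/W$, the objective is piecewise linear over the feasible polytope $\{(c,m) : c\in[\tfrac{1}{2},1],\, m\in[0,c]\}$ (using the random-bipartition bound $C\ge W/2$ and the fact that the edges of any matching can be simultaneously cut by orienting each matching edge across a bipartition). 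The switching curve $0.9349\,c = \tfrac{1}{2} + m$ separates the two regimes, and the minimum occurs at an extreme point or along this curve. Choosing $\lambda$ to equate the binding constraints then yields the stated $\alpha = 0.712$.

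The main obstacle is the LP analysis itself: one must identify the binding extreme points, tune $\lambda$ to balance them, and carefully propagate the $0.9349$ factor through the case boundaries. Because the rounding coefficient shifts the switching curve between the $\rho_{C'}$- and $\rho_M$-dominated regimes, the worst-case graph family differs from that in the non-efficient setting, and the efficient ratio cannot be obtained merely by multiplying the $3/4$ result of \cref{cor:xy_cut_approx} by $0.9349$. Verifying the worst case also requires ensuring that no additional combinatorial constraint beyond $c\ge \tfrac{1}{2}$ and $m\le c$ tightens the feasible region, i.e., confirming that the worst-case $(c,m)$ is attained (in limit) by an explicit family of graphs such as $K_n$ or a weighted mixture of $K_n$ with bipartite structure.
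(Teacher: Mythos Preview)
Your proposal has a genuine gap: you use only the conjectured upper bound $\lmax(H^{XY}(G)) \le (W+C+M)/2$ from \cref{cor:xy_cut_bound}, but the paper's argument also retains the unconditional bound $\lmax(H^{XY}(G)) \le 2C - W/2$ of \cref{lem:xy_cut_bound} and takes the \emph{minimum} of the two in the denominator. This second bound is not optional. With only $(1+c+m)/2$ downstairs, your ratio
\[
\frac{\max\bigl(0.9349\,c,\; \tfrac{1}{2}+m\bigr)}{(1+c+m)/2}
\]
is minimized as $c \to 1/2$, $m \to 0$, yielding only $2/3 \approx 0.667$ --- and this corner of your feasible region is realized in the limit by $K_n$, precisely the family you propose as a witness. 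The bound $2c - 1/2$ bites exactly at small $c$: when $c \to 1/2$ it drives the denominator down to $1/2$ and the ratio back up to $1$. The paper's worst case is instead the interior point $(c,m)\approx(0.726,0.179)$ where both upper bounds coincide and both algorithm branches tie; dropping either bound collapses the analysis below $0.712$.

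Your treatment of the two candidate energies, the $0.9349$ rounding factor from \cite{briet2010}, and the $\max$-over-branches inequality are all correct. The only missing ingredient is that second upper bound in the denominator of the LP.
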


\section{Proof of approximation ratios}\label{apx:approx}

We now present proofs of the claimed approximation ratios presented in \cref{sec:imp}. For EPR we accomplish this by carefully relating our algorithmic lower bounds and combinatorial upper bounds in terms of matchings. For QMC and XY, we also must consider the maximum cut.

\subsection{EPR}\label{apx:approx/epr}
\EPRApprox*
\begin{proof}
    Note that we can express the density matrix of $\ket{\chi}$ in \cref{eq:epr_chi} as
    \begin{align}
        \rho \defeq \ket{\chi}\bra{\chi} = \frac{1}{2^n} \bigotimes_{(i,j) \in M} \of{I + Z_i Z_j + \sin{2\theta}\of{X_iX_j -Y_iY_j} + \cos{2\theta}\of{Z_i +Z_j}} \bigotimes_{i \notin M} \of{I + Z_i},
    \end{align}
    where $(i,j)\in M$ denotes edges in $M$ and $i \notin M$ denotes vertices that are not matched by $M$. We then analytically evaluate the partial traces
    \begin{align}\label{eq:epr_energy_by_edge_type_cases}
        \tr\ofb{h^{EPR}_{ij} \rho} = 
        \begin{cases}
            \frac{1}{2} \of{2 + 2\sin{2\theta}} & \text{if } (i,j)\in M, \\
            \frac{1}{2} \of{1 + \cos{2\theta}} & \text{if } (i,j)\notin M \text{ and } \mathbf{1}\ofc{i\in M} + \mathbf{1}\ofc{j\in M} = 1, \\
            \frac{1}{2} \of{1 + \cos^2{2\theta}} & \text{if } (i,j)\notin M \text{ and } i\in M \text{ and } j\in M.
        \end{cases}
    \end{align}
    where the first equation gives the energy on an edge that is in the matching, the second gives the energy on an edge connecting a matched vertex to an unmatched vertex, and the third gives the energy of an edge connecting two vertices that are each matched to distinct vertices. As $\cos^2{2\theta} \leq \cos{2\theta}$ for all $\theta\in\ofb{0, \pi/4}$ (indeed the values of $\theta$ used hereafter will fall in this range), we can write
    \begin{align}
        \tr\ofb{h^{EPR}_{ij} \rho} &\geq \frac{1}{2} \of{2+ 2\sin{2\theta}} &&\hspace{-7em} (ij)\in M, \\
        \tr\ofb{h^{EPR}_{ij} \rho} &\geq  \frac{1}{2}\of{1+ \cos^2{2\theta}} &&\hspace{-7em} (ij) \notin M.
    \end{align}
    Thus we have 
    \begin{align}
        \tr{H^{EPR}(G) \rho} &\geq \frac{1+\cos^2{2\theta}}{2} W + \frac{2+2\sin{2\theta}-(1+\cos^2{2\theta})}{2} M \\
        &= \frac{1+\cos^2{2\theta}}{2} W + \frac{1+2\sin{2\theta}-\cos^2{2\theta}}{2} M \\
        & \defeq \frac{1+(1-\gamma^2)}{2} W + \frac{1+2\gamma-(1-\gamma^2)}{2} M \\
        & \defeq c_W(\gamma) W + c_M(\gamma) M,
    \end{align}

    where in the second-to-last line we defined $\gamma \defeq \sin{2\theta}$ and the last line we introduce $c_W(\gamma)$ and $c_M(\gamma)$ for the coefficients in front of $W$ and $M$, respectively. Now, we choose $\gamma$ to be $\gamma^*$ such that

    \begin{align}\label{eq:sol_c_m_c_w_ratio_is_a}
        &c_M(\gamma^*)=c_W(\gamma^*)=\nu, \\
        \implies &\gamma^* = \frac{-1\pm\sqrt{5}}{2}.
    \end{align}

    Then, the positive root yields $\nu = \frac{1+\sqrt{5}}{4}$. Thus,  we obtain the approximation ratio 
    \begin{align}
        \alpha &\ge \frac{c_W(\gamma^*) \,W + c_M(\gamma^*)\, M}{\lmax\of{H^{EPR}(G)}} \ge \nu \cdot \frac{W+M}{W+M} = \nu \approx 0.809,
    \end{align}
    where in the second inequality we invoke \cref{lem:h_bounds}.
\end{proof}

\subsection{XY}\label{apx:approx/xy}
For the approximation ratios of XY, we first use the following observation 

\begin{lemma}\label{lem:xy_cut_bound}
    For all graphs $G$, the following bound holds
    \begin{align*}
        \lmax\of{H^{XY}(G)} \le 2 \,C(G) - \frac{W(G)}{2}\,.
    \end{align*}
\end{lemma}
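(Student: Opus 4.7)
The plan is to decompose $H^{XY}(G)$ into a scalar shift plus two classical-style Ising-type Hamiltonians in the $X$ and $Y$ bases, and then apply Weyl's subadditivity for the maximum eigenvalue. Specifically, from the definition
\begin{align*}
    H^{XY}(G) = \frac{W(G)}{2}\,I \;-\; \tfrac{1}{2}\,H_X \;-\; \tfrac{1}{2}\,H_Y,
\end{align*}
where $H_X \defeq \sum_{(i,j)\in E} w_{ij} X_iX_j$ and $H_Y \defeq \sum_{(i,j)\in E} w_{ij} Y_iY_j$. Since $\lmax(A+B) \le \lmax(A)+\lmax(B)$, this yields
\begin{align*}
    \lmax\of{H^{XY}(G)} \le \tfrac{W(G)}{2} - \tfrac{1}{2}\lmin(H_X) - \tfrac{1}{2}\lmin(H_Y).
\end{align*}

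The second step is to identify $\lmin(H_X)$ and $\lmin(H_Y)$ combinatorially. The key observation is that the operators $\{X_iX_j\}_{(i,j)\in E}$ all mutually commute, so $H_X$ is simultaneously diagonalizable in a common product eigenbasis. A natural choice is the Pauli-$X$ basis $\ket{x}$ for $x\in\{+1,-1\}^n$, on which $H_X$ has eigenvalue $\sum_{(i,j)\in E}w_{ij} x_i x_j$. Writing $x_i x_j = 1 - 2\cdot\mathbf{1}\{x_i\ne x_j\}$, this sum equals $W(G) - 2\,Cut(S_x)$, where $S_x$ is the bipartition determined by the signs of $x$. Minimizing over $x$ gives $\lmin(H_X) = W(G) - 2C(G)$. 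The identical argument in the Pauli-$Y$ product basis yields $\lmin(H_Y) = W(G) - 2C(G)$.

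Combining the two bounds produces
\begin{align*}
    \lmax\of{H^{XY}(G)} \le \tfrac{W(G)}{2} - \tfrac{1}{2}\bigl(W(G) - 2C(G)\bigr) - \tfrac{1}{2}\bigl(W(G) - 2C(G)\bigr) = 2\,C(G) - \tfrac{W(G)}{2},
\end{align*}
as claimed.

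There is no real obstacle here: the argument reduces the quantum problem to two copies of a classical Ising-style MaxCut instance by exploiting the fact that the $XX$ and $YY$ interactions, taken separately, are classical. The only subtle point is verifying that Weyl's inequality gives a bound matching the tightness of the cut quantity, which works out because the same graph $G$ governs both $H_X$ and $H_Y$, so both contribute $-C(G)$ rather than a weaker cut-like quantity in a mixed basis.
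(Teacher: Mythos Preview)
Your proof is correct and essentially the same as the paper's: both decompose $H^{XY}(G)$ into an identity shift plus the $X$-basis and $Y$-basis Ising Hamiltonians and apply subadditivity of $\lmax$, the only cosmetic difference being that the paper groups the identity with each Pauli term to form the MaxCut Hamiltonian $\tfrac{1}{2}\sum w_{ij}(I-X_iX_j)$ directly (with maximum eigenvalue $C(G)$), whereas you keep the identity separate and compute $\lmin(H_X)=W(G)-2C(G)$. The arithmetic and the underlying idea are identical.
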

\begin{proof}
First note that the Hamiltonian can be decomposed as follows
\begin{align*}
    H^{XY}(G) &=  \frac{1}{2}\sum_{(i,j)\in E} \of{I_i I_j - X_i X_j} +\frac{1}{2}\sum_{(i,j)\in E} \of{I_i I_j - Y_i Y_j}  -  \frac{1}{2}\sum_{(i,j)\in E} I_i I_j ,
\end{align*}
Now, note that 
\begin{align*}
    \frac{1}{2} \sum_{(i,j)\in E} I_i I_j - Z_i Z_j,
\end{align*}
encodes the Hamiltonian describing the classical MaxCut problem in the computational basis. Replacing the Pauli $Z$ with $X$ or $Y$ encodes MaxCut in the $X$ or $Y$ basis, respectfully. Thus, by the triangle inequality, we have 
\begin{align*}
    \lmax\of{H^{XY}(G)} \le 2 \,C(G) - \frac{W(G)}{2}.
\end{align*}
\end{proof}

\XYApprox*
\begin{proof}
    We can first analytically evaluate the partial traces of $\rho_C$
    \begin{align}\label{eq:xy_rho_c_energy}
        \tr\ofb{h^{XY}_{ij} \rho_C} =
        \begin{cases}
            1 & \text{if } (i,j)\in C(G), \\
            0 & \text{if } (i,j)\notin C(G).
        \end{cases}
    \end{align}
    We then do the same for $\rho_M$
    \begin{align*}
        \tr\ofb{h^{XY}_{ij} \rho_M} =
        \begin{cases}
            \frac{3}{2} & \text{if } (i,j)\in M(G), \\
            \frac{1}{2} & \text{if } (i,j)\notin M(G).
        \end{cases}
    \end{align*}
    Thus we see
    \begin{align*}
        \tr\ofb{H^{XY}(G)\rho_C} &= C(G), \\
        \tr\ofb{H^{XY}(G)\rho_M} &= \frac{3}{2}M(G) + \frac{1}{2}(W(G)-M(G)) = M(G)+\frac{W(G)}{2}. \\
    \end{align*}
    We then use the upper bounds from \cref{lem:h_bounds} and \cref{lem:xy_cut_bound} to show
    \begin{align*}
        \alpha = \min_G\frac{\max\of{\tr\ofb{H^{XY}(G) \rho_C},\tr\ofb{H^{XY}(G) \rho_M}}}{\lmax\of{H^{XY}(G)}} \ge \frac{\max\of{C(G),M(G)+\frac{W(G)}{2}}}{\min\of{2C(G)-\frac{W(G)}{2}, W(G)+\frac{M(G)}{2}}}\,.
    \end{align*}
    We can minimize this quantity over $c\defeq C(G)/W(G)$, $M \defeq M(G)/W(G)$ to find
    \begin{align}\label{eq:xy_approx_minimization}
        \alpha \ge \min_{\substack{0 \le m \le 1\\ \frac{m+1}{2}\le c \le 1}} \frac{\max\of{c,m+\frac{1}{2}}}{\min\of{2c-\frac{1}{2}, 1+\frac{m}{2}}}.
    \end{align}
    This yields $\alpha \ge 5/7\approx0.714$ at $c\rightarrow 5/6$, $m\rightarrow 1/3$.
\end{proof}

\XYApproxEfficient*
\begin{proof}
    The proof is identical to that of \cref{cor:xy_approx}, albeit we take an expectation over the randomized product states $\rho_{C'}$ in \cref{eq:xy_rho_c_energy}. Following the rest of the steps, may simply evaluate \cref{eq:xy_approx_minimization}, but replace the $c$ in the numerator by $0.9349 c$. The minimization then yields $\alpha \ge 00.674$ at  $c \rightarrow 0.816$, $m \rightarrow 0.263$.
\end{proof}

\XYCutApprox*
\begin{proof}
    The proof is identical to that of \cref{cor:xy_approx}, albeit we replace the upper bound $W(G)+\frac{M(G)}{2}$ in \cref{eq:xy_approx_minimization} with the tighter bound $\frac{W(G)+C(G)+M(G)}{2}$. The minimization then yields $3/4$ at $m \rightarrow 1/4$, $c \rightarrow 3/4$.
\end{proof}

\XYCutApproxEfficient*
\begin{proof}
    The proof is identical to that of \cref{cor:xy_approx_efficient}, albeit we again replace the upper bound $W(G)+\frac{M(G)}{2}$ in \cref{eq:xy_approx_minimization} with the tighter bound $\frac{W(G)+C(G)+M(G)}{2}$. The minimization then yields $0.712$ at $m \rightarrow .179$, $c \rightarrow 0.726$.
\end{proof}

\subsection{QMC}\label{apx:approx/qmc}
For QMC we use the upper bound
\begin{lemma}[Equation 18 \cite{anshu2020}]\label{lem:qmc_cut_bound}
    For all graphs $G$, the following bound holds
    \begin{align*}
        \lmax\of{H^{QMC}(G)} \le 3 \,C(G) - W(G).
    \end{align*}
\end{lemma}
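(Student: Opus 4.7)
The plan is to adapt the decomposition used in the proof of \cref{lem:xy_cut_bound}, but with three Pauli terms instead of two. The goal is to write $H^{QMC}(G)$ as a sum of three Hamiltonians, each unitarily equivalent to the classical MaxCut Hamiltonian, plus a scalar shift, and then apply subadditivity of the maximum eigenvalue (Weyl's inequality).

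Concretely, for each Pauli $P \in \ofc{X,Y,Z}$ I define
\[
  H_P(G) \defeq \frac{1}{2}\sum_{(i,j)\in E} w_{ij}\of{I - P_i P_j}.
\]
For $P = Z$ this is exactly the classical MaxCut Hamiltonian written in the computational basis, so $\lmax\of{H_Z(G)} = C(G)$. For $P \in \ofc{X,Y}$, the tensor-product single-qubit Clifford that conjugates $P$ to $Z$ on every qubit in parallel carries $H_P(G)$ to $H_Z(G)$; being a unitary conjugation, it preserves the spectrum, so $\lmax\of{H_P(G)} = C(G)$ as well.

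Expanding the definition \cref{eq:qmc_defn} of $h^{QMC}_{ij}$ and comparing to the above, a direct computation gives
\[
  H_X(G) + H_Y(G) + H_Z(G) = \frac{3W(G)}{2}\,I - \frac{1}{2}\sum_{(i,j)\in E} w_{ij}\of{X_iX_j + Y_iY_j + Z_iZ_j} = H^{QMC}(G) + W(G)\,I.
\]
Since adding a scalar multiple of the identity shifts all eigenvalues uniformly, applying $\lmax$ to both sides and invoking subadditivity yields
\[
  \lmax\of{H^{QMC}(G)} + W(G) \;\le\; \lmax\of{H_X(G)} + \lmax\of{H_Y(G)} + \lmax\of{H_Z(G)} \;=\; 3\,C(G),
\]
which rearranges to the desired $\lmax\of{H^{QMC}(G)} \le 3\,C(G) - W(G)$.

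The argument has essentially no difficult step; the only subtle point is bookkeeping the identity shifts so that the net correction is $-W(G)$ rather than $-3W(G)/2$. Each of the three $H_P$ contributes $W(G)/2$ from its identity part, summing to $3W(G)/2$, while $H^{QMC}$ itself contains only a single $W(G)/2$ identity piece, producing exactly the needed cancellation. This mirrors the two-term decomposition behind \cref{lem:xy_cut_bound} and extends it by the additional $Z$-basis MaxCut Hamiltonian.
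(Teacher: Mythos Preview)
Your proof is correct and essentially identical to the paper's. The paper writes the same three-term decomposition $H^{QMC}(G) = H_X(G) + H_Y(G) + H_Z(G) - \sum_{(i,j)\in E} w_{ij} I_iI_j$ and applies the triangle inequality; your version just moves the identity shift to the other side as $+W(G)I$, which is the same bookkeeping.
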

\begin{proof}
    Our proof is essentially the same as \cite{anshu2020}, albeit with different notations and definitions. We write it here for completeness. First note that the Hamiltonian can be decomposed as follows
    \begin{align*}
        H^{QMC}(G) &= \frac{1}{2} \sum_{(i,j)\in E} \of{I_i I_j - X_i X_j} +\frac{1}{2}\sum_{(i,j)\in E} \of{I_i I_j - Y_i Y_j} +\frac{1}{2}\sum_{(i,j)\in E} \of{I_i I_j - Z_i Z_j}  -  \sum_{(i,j)\in E} I_i I_j ,
    \end{align*}
    
    Again the sum over terms $X_iX_j$, $Y_iY_j$, $Z_iZ_j$ each encode the Hamiltonian describing the classical MaxCut problem in the three Pauli bases. Thus, by the triangle inequality, we have 
    
    \begin{align*}
        \lmax\of{H^{QMC}(G)} \le 3 \,C(G) - W(G).
    \end{align*}
\end{proof}

\QMCMATCHApprox*
\begin{proof}
    The proof proceeds similarly to that for XY. We first can analytically evaluate the partial traces of $\rho_C$
    \begin{align}\label{eq:qmc_rho_c_energy}
        \tr\ofb{h^{QMC}_{ij} \rho_C} =
        \begin{cases}
            1 & \text{if } (i,j)\in C(G), \\
            0 & \text{if } (i,j)\notin C(G).
        \end{cases}
    \end{align}
    We then do the same for $\rho_M$
    \begin{align}\label{eq:qmc_rho_m_energy}
        \tr\ofb{h^{QMC}_{ij} \rho_M} =
        \begin{cases}
            2 & \text{if } (i,j)\in M(G), \\
            \frac{1}{2} & \text{if } (i,j)\notin M(G).
        \end{cases}
    \end{align}
    Thus we see
    \begin{align*}
        \tr\ofb{H^{QMC}(G)\rho_C} &= C(G), \\
        \\tr\ofb{H^{QMC}(G)\rho_M} &= 2M(G) + \frac{1}{2}(W(G)-M(G)) = \frac{3M(G)+W(G)}{2}. \\
    \end{align*}
    We then use the upper bounds from \cref{lem:h_bounds} and \cref{lem:qmc_cut_bound} to show
    \begin{align*}
        \alpha = \min_G\frac{\max\of{\tr\ofb{H^{QMC}(G) \rho_C},\tr\ofb{H^{QMC}(G) \rho_M}}}{\lmax\of{H^{QMC}(G)}} \ge \frac{\max\of{C(G),\frac{3M(G)+W(G)}{2}}}{\min\of{3C(G)-W(G), \frac{W(G)+M(G)}{2}}}\,.
    \end{align*}
    Again using $c\defeq C(G)/W(G)$, $M \defeq M(G)/W(G)$ we find
    \begin{align}\label{eq:qmc_approx_minimization}
        \alpha \ge \min_{\substack{0 \le m \le 1\\ \frac{m+1}{2}\le c \le 1}} \frac{\max\of{c,\frac{3m+1}{2}}}{\min\of{3c-1, 1+m}}\,.
    \end{align}
    This yields $\alpha \ge 5/8=0.625$ at $c\rightarrow 5/7$, $m\rightarrow 1/7$.
\end{proof}

\QMCMATCHApproxEfficient*
\begin{proof}
    Again using \cite{briet2010}, we have that for QMC, we can prepare a randomized product state with expected energy $\nu(3)=0.9563$ of $C$. We again follow the proof of \cref{cor:qmc_match_approx}, only replacing \cref{eq:qmc_rho_c_energy} with an expectation value over the randomized states $\rho_{C'}$. Following the rest of the steps, we arrive at \cref{eq:qmc_approx_minimization} but with $0.9563 \,c$ replacing $c$. This yields $\alpha \ge 00.604$ at $c\rightarrow 0.705$ and $m\rightarrow .116$.
\end{proof}

\QMCPMATCHApprox*
\begin{proof}
    The result is stated at the end of \cite[Section 4.2]{apte2025}. It is found by numerically minimizing the function in \cite[Equation 19]{apte2025} with the value $d=1$ in \cite[Lemma 3]{apte2025}. The value $d=1$ is satisfied by explicitly ensuring matching constraints are met via \cref{alg:sep_oracle}.
\end{proof}

\section{Ear decompositions}\label{apx:ear_decomp}
In \cref{sec:conj/ham/reduction}, we showed that for our conjectured bounds on the maximum energy of QMC, XY, and EPR, it suffices to consider biconnected factor-critical graphs. Factor-critical graphs are often defined in terms of their \emph{odd ear decompositions} (defined below in \cref{def:odd_ear_decomp}). This provides an intuitive framework for proving the conjectures via induction. In this section, we show that the constraint of biconnectivity actually implies that the graphs have \emph{odd open ear decompositions}. We then prove the conjectures hold for the base case of odd-order cycle graphs. In order to prove the conjectures then, it suffices to complete the inductive step, which we have been unable to do. We first define 

\begin{definition}[Odd ear decomposition]\label{def:odd_ear_decomp}
    Let an ear $Y(V,E)$ be an unweighted path or cycle graph with an odd number of edges. Then, given an unweighted graph $G(V, E)$, an ear decomposition $G$ of a graph is an edge-disjoint partition of $G$ into a sequence of $\ell$ subgraphs, which we refer to as \emph{ears} 
    \begin{align*}
        \of{Y_i(V_i, E_i)}_{i=1}^{\ell}= \Big(Y_1(V_1,E_1), Y_2(V_2,E_2), \ldots, Y_{\ell}(V_{\ell},E_{\ell}) \Big),
    \end{align*}
    with the property that for each $i \in \ofc{2,3,\ldots,\ell}$ we have either
    \begin{enumerate}[label=\alph*)]
        \item if $Y_i$ is a path, the two endpoints of the path are in $V_{T_i} \defeq \cup_{j \in [i-1]} V_j$, and none of the non-endpoint vertices are in $V_{T_i}$ (denoted by an \emph{open ear}),
        \item if $Y_i$ is a cycle, exactly one vertex of $Y_i$ is in $V_{T_i}$ (denoted by a \emph{closed ear}),
    \end{enumerate}
    and $Y_1$ is a cycle (closed) ear. 
\end{definition}

\begin{definition}[Odd open-ear decomposition]
    An odd open-ear decomposition of a graph $G$ is an odd-ear decomposition such that all but the first ear are open ears.
\end{definition}

We may then borrow the following lemma
\begin{lemma}[\cite{lovasz1972}]\label{lem:fc_ear}
    Any factor-critical graph $G$ admits an odd-ear decomposition.
\end{lemma}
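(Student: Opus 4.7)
The plan is to prove Lovász's classical theorem by induction, building up the odd-ear decomposition one ear at a time starting from an odd cycle. First I would handle the initial ear $Y_1$: since $G$ is factor-critical, $|V(G)|$ is odd (the near-perfect matching $M_v$ on $G \setminus v$ requires $|V(G)| - 1$ to be even), and $G$ must be connected. Moreover, $G$ cannot be bipartite — if it were, with parts $A$ and $B$ of sizes differing by an odd amount, removing a vertex from the smaller part would leave unequal parts with no perfect matching. Hence $G$ contains an odd cycle $C$, which I take as the closed initial ear $Y_1$.

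For the inductive step, suppose we have already constructed a factor-critical subgraph $H \subsetneq G$ via an odd-ear decomposition $(Y_1, \ldots, Y_i)$. The goal is to produce an odd ear $Y_{i+1}$ in $G$ extending $H$. Pick a vertex $u \in V(G) \setminus V(H)$ adjacent to some $w \in V(H)$ (which exists by connectivity of $G$), and let $M_w$ be a near-perfect matching of $G$ missing exactly $w$, guaranteed by factor-criticality. Beginning with the matching edge of $M_w$ incident to $u$, trace an alternating trail in $G$ (alternating between $M_w$-edges and non-$M_w$-edges) until the trail first reaches a vertex $w' \in V(H)$, which must occur as $G$ is finite. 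Prepending $uw$ to this trail yields a path $P$ from $w$ to $w'$ whose interior lies in $V(G) \setminus V(H)$. The parity of $|P|$ is odd, which I would verify by a standard counting of matching versus non-matching edges along the alternation (the initial edge $uw$ is non-matching, and alternation with $M_w$ forces odd length). Taking $Y_{i+1} = P$ gives an open ear when $w \neq w'$ and a closed ear when $w = w'$.

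The main technical obstacle will be verifying that $H' \defeq H \cup Y_{i+1}$ remains factor-critical, so that the induction is self-sustaining. For each $v \in V(H')$, I must exhibit a perfect matching of $H' \setminus v$. When $v \in V(H)$, the argument combines a perfect matching of $H \setminus v$ (from the inductive hypothesis) with a perfect matching of the internal vertices of $Y_{i+1}$, which exists because those internal vertices together with one endpoint form an even-length path. When $v$ lies in the interior of $Y_{i+1}$, the ear $Y_{i+1} \setminus v$ splits into two even-length sub-paths, each of which has a perfect matching on its non-endpoint vertices, and these combine with a perfect matching of $H \setminus w$ via an exchange argument on the alternating structure. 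This exchange step — reconciling perfect matchings across the seam where the new ear attaches to $H$ — is the delicate heart of the proof. The process terminates once $V(H) = V(G)$, giving the desired odd-ear decomposition of $G$.
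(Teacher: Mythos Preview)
The paper does not prove this lemma --- it is invoked directly as a citation of Lov\'asz (1972) --- so there is no in-paper proof to compare against. On its own merits, your outline follows the classical strategy but has a genuine gap in the construction of the ear.

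The problematic step is ``trace an alternating trail in $G$ (alternating between $M_w$-edges and non-$M_w$-edges) until the trail first reaches $V(H)$.'' After the forced $M_w$-edge at $u$, the next non-$M_w$-edge is not determined, and nothing prevents the trail from stalling or looping without ever reaching $V(H)$; finiteness of $G$ does not help. The standard remedy uses a \emph{second} matching: fix once and for all a perfect matching $M$ of $G - v_0$ for some base vertex $v_0 \in V(H)$, maintain the invariant that $M$ restricted to $H$ is a near-perfect matching of $H - v_0$, and for $y \notin V(H)$ take the unique $v_0$--$y$ path in $M \triangle M_y$, truncated at its first vertex in $V(H)$. The invariant is what pins down the parity --- it forces the edge entering $V(H)$ to lie in $M_y$ rather than $M$, so that prepending the non-matching edge $xy$ gives odd length; without it your claim that ``alternation with $M_w$ forces odd length'' is unjustified. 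A smaller error: when $v$ is interior to an odd ear of length $2\ell+1$, deleting $v$ yields sub-paths of lengths $j-1$ and $2\ell - j$, whose sum is odd, so they cannot both be even as you assert. The correct matching of $H' \setminus v$ uses one ear-endpoint edge into $V(H)$ on the odd side and then a near-perfect matching of $H$ missing that endpoint. You also need the easy case $V(H) = V(G)$ but $E(H) \subsetneq E(G)$, where the next ear is a single leftover edge.
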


The constraint of biconnectivity implies the stronger decomposition

\begin{restatable}{lemma}{BiconnectedOpenEar}\label{lem:biconnected_open_ear}
    Any graph $G \in \mathcal{F}$ admits an odd open-ear decomposition.
\end{restatable}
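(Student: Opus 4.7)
The plan is to induct on the number of closed ears (other than the mandatory first ear $Y_1$) in an odd-ear decomposition of $G$. By \cref{lem:fc_ear}, any $G \in \mathcal{F}$ admits some odd-ear decomposition $(Y_1, Y_2, \ldots, Y_\ell)$. If every $Y_i$ with $i \geq 2$ is already an open ear, there is nothing to show. Otherwise, let $i \geq 2$ be the smallest index such that $Y_i$ is a closed ear, attached at a vertex $v \in V_{T_i}$, and consisting of an odd cycle through $v$. The goal is to produce a new odd-ear decomposition in which the number of closed ears with index $\geq 2$ is strictly smaller; iterating this reduction proves the lemma.

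The key leverage point is biconnectivity. Because $G \in \mathcal{F}$ is biconnected, $G - v$ is connected, so there exists a path $P$ in $G - v$ starting at some interior vertex $u$ of the cycle $Y_i$ and ending at some vertex $w \in V_{T_i} \setminus \{v\}$, with internal vertices disjoint from $V(Y_i) \cup V_{T_i}$. The edges of $P$ are drawn from later ears $Y_j$ with $j > i$. I would then replace $Y_i$ in the decomposition with two ears: an open ear obtained by concatenating one arc of $Y_i$ from $v$ to $u$ with $P$ (giving an open path from $v$ to $w$), and a second open ear given by the remaining arc of $Y_i$ (an open path from $v$ to $u$). After this surgery, the edges of $P$ are removed from the later ears $Y_j$ that donated them; the residual fragments of each such $Y_j$ form shorter open ears that are reinserted after the two new ears, yielding a valid odd-ear decomposition with one fewer closed ear.

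The main obstacle is parity. The two arcs of the odd cycle $Y_i$ from $v$ to $u$ have lengths summing to an odd number, so exactly one is even and one is odd; this forces one of the two candidate new ears to have even length under a naive construction. Resolving this requires simultaneously exploiting two freedoms: varying the attachment point $u$ along the cycle $Y_i$ (which alternates the parities of the two arcs) and varying the path $P$ (since factor-criticality of $G$ provides an odd cycle that can be spliced into $P$ to toggle its parity while keeping its endpoints fixed). A case analysis on these parities shows that at least one admissible pair $(u, P)$ produces two new ears of odd length.

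Finally, one must verify that the residual fragments of the later ears $Y_j$ after removing the edges of $P$ are themselves valid open ears of odd length. This is a local check: each $Y_j$ is either an open odd path or an odd cycle closed at a single vertex, so removing a contiguous sub-path (the intersection with $P$) leaves one or two sub-paths whose endpoints are either original endpoints of $Y_j$ or lie in the newly added vertex set. A short parity argument using the odd length of $Y_j$ and the parity of its intersection with $P$ shows that the resulting fragments can be reordered so that every new ear is open and of odd length. Biconnectivity of the intermediate partial graphs throughout the construction follows from standard properties of open-ear decompositions. This completes the inductive step and hence the proof.
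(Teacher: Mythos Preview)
Your inductive skeleton (reduce the number of closed ears using biconnectivity) matches the paper's, but the parity control is where your proposal has a genuine gap, and this is precisely the heart of the argument.

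You correctly identify that the two arcs of the closed ear $Y_i$ have opposite parities, so you need the detour path $P$ to have odd length; you then propose to fix $P$'s parity by ``splicing an odd cycle provided by factor-criticality.'' This inference is not valid. Factor-criticality tells you that $G-v$ has a perfect matching for each $v$; it does not supply an odd cycle that meets $P$, avoids $V(Y_i)\cup V_{T_i}$, and keeps $P$ a simple path. Likewise, your claim that the residual fragments of the later ears $Y_j$ are odd open ears is unsupported: $P\cap Y_j$ need not be a single contiguous subpath, and even when it is, nothing in your construction forces $|E(Y_j)\cap P|$ to be even, so the leftover pieces can have either parity. Finally, by choosing the \emph{first} closed ear you allow later $Y_j$ to be closed too, which further complicates the fragment analysis.

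The paper resolves all of this with one idea you are missing. It takes the \emph{last} closed ear $C_s$ (so every subsequent ear is an open odd path), uses biconnectivity to locate the first later open ear $O_f$ that has an endpoint $v'$ back in the earlier graph, and then builds the replacement ear by the rule ``always take the even sub-arc'': traverse all of $O_f$ (odd), and then at each intermediate ear follow the \emph{even}-length sub-path to one of its two endpoints, continuing down until hitting $C_s$, then the even arc of $C_s$ to $u'$. This guarantees the new first ear has length $\text{odd}+\text{even}+\cdots+\text{even}=\text{odd}$, and simultaneously guarantees every partially-consumed ear has an odd remainder ($\text{odd}-\text{even}$). No appeal to factor-criticality beyond \cref{lem:fc_ear} is needed, and no case analysis on parities is required. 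Your plan would be repaired by adopting this ``even sub-arc'' construction in place of the unspecified path $P$ and the splicing argument.
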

\begin{proof}
    As $G$ is factor-critical, we know from \cref{lem:fc_ear} that $G$ admits an odd-ear decomposition. Let us write the sequence of ears as $Y=(C_1,O_2,C_3,C_4,\ldots)$ where $C_i$ is a closed ear and $O_i$ is an open ear. Let $C_s$ be the closed ear that appears last in the sequence, let $F$ be the subgraph formed by all the ears appearing before, but not including, $C_s$, and let $u'$ be the single vertex that joins $C_s$ with $F$. Note that $C_s$ cannot be the last ear in the sequence, or else removing $u'$ disconnects the graph. Furthermore, one of the open ears that follows $C_s$ must have one endpoint in $F$, or else removing $u'$ again disconnects the graph. Let $O_f$ be the first such open ear and let the vertex that connects $O_f$ to $F$ be called $v'$. We then focus on the subgraph $F'$ consisting of edges in the subsequence $Y'=(C_s,O_{s+1},O_{s+2},\ldots,O_f)$. We claim that we may find an alternate decomposition $T'$ of $F'$ which only consists of open ears. If this is true, then we may remove the last closed ear and proceed by induction on the rest of the sequence until we reach $C_1$, completing our proof. We now prove this claim. We form $T'$ by construction: 
    \begin{enumerate}
        \item Initialize $T'$ with a single open ear from $v'$ to $u'$ as follows. First traverse all edges in the final open ear $O_f$. This path starts from $v'$, ends at some vertex $t_0$, and has an odd number of edges. Now, the vertex $t_0$ must be in some ear of $Y'$. Let us label this ear $O_{\beta_1}$. As $O_{\beta_1}$ has an odd number of edges, we may always next choose the path from $t_0$ to one of the endpoints $t_1$ of $O_{\beta_1}$ such that the number of edges in this path is \emph{even}. Now, $t_1$ is in some other ear $O_{\beta_2}$ that appears previously in the sequence $Y'$. As such, we may continue this process until we reach a vertex $t_{f}$ in $C_s$. Then, complete the open ear with the path from $t_{f}$ to $v'$ that is of \emph{even} length.
        \item Append to $T'$ the open ear that contains all the edges in $C_s$ not covered by the previous step.
        \item Proceed through the all the ears $Y_i \in Y':\; Y_i \neq C_s, Y_i\neq O_f$ in the order of $Y'$. Append to $T'$ either the entire ear $Y_i$ if $Y_i$ is uncovered by any of the edges in the previous step, or append to $T'$ the remaining edges edges of $Y_i$ uncovered by ears in the previous step.
    \end{enumerate}
    This algorithm only consists of open ears. It can easily be seen to be efficient. Thus is remains to show that this algorithm terminates as expected and induces a valid odd open-ear decomposition. We first show termination of step (1). We may order the vertices in $V'$ in $F'$ in the order that they appear in the sequence $Y'$. Denote this ordering by the sequence $L$. For instance, the first vertex in $L$ is $u'$ and the last is $v'$. Now note that each step $i$ of (1) ends at some $t_i$. By construction, this vertex appears in $L$ before any of the other vertices $j$ in the current open ear. As this is true for each step, we must eventually reach some vertex in $C_s$. The remaining steps clearly terminate. To show that this algorithm forms a valid odd-ear decomposition, first note that (1) by construction forms an open ear, as it is an odd length path from vertices $u'$ to $v'$ that are both in $F'$. Then, (2) forms an open ear, as (1) covers an even number of edges in $C_s$, so the remaining number of edges is odd. Furthermore, the endpoints of the path are contained in the union of $F'$ and the odd ear from (1). Finally, all steps of (3) form odd ears, as (1) covers an even number of edges from each open ear that it traverses, thus the number of edges that each ear in (3) traverses is odd. Furthermore, the endpoints of each ear are in the subgraph formed by the union of $F'$ and previous ears as either both endpoints are covered by $Y'$ by definition or one of the endpoints was covered by the odd ear in (1). 
\end{proof}

We now prove the base case of our conjectures by showing that they hold on the first closed ear

\begin{restatable}{lemma}{CycleBound}\label{lem:cycle_bound}
    For any unweighted cycle graph $C_n$, $n \ge 2$, the hypotheses \cref{eq:qmc_reduction}, \cref{eq:epr_reduction} and  \cref{eq:xy_max_reduction}, \cref{eq:xy_min_reduction} hold.
\end{restatable}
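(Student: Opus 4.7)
The plan is to exploit the Hamiltonian--token graph equivalences of Facts \ref{fact:qmc_L_equivalence}, \ref{fact:xy_A_equivalence}, \ref{fact:epr_Q_equivalence} in concert with the standard free-fermion description of cyclic hopping matrices via the Jordan--Wigner transformation. I would first use Facts \ref{fact:qmc_L_equivalence}--\ref{fact:epr_Q_equivalence} to translate each hypothesis into a spectral bound on a matrix of a token graph of $C_n$: with $W(C_n)=m=n$, the conditions \cref{eq:qmc_reduction}, \cref{eq:epr_reduction}, \cref{eq:xy_max_reduction}, \cref{eq:xy_min_reduction} become, respectively, $\lmax(L(F_k(C_n))) \le n+k$, $\lmax(Q(F_k(C_n))) \le n+k$, $\lmin(A(F_k(C_n))) \ge -(n+k)/2$, and $\lmax(A(F_k(C_n))) \le (n+k)/2$, to be established for every $1 \le k \le \lfloor n/2 \rfloor$. \cref{lem:laplacian_spectrum_containment} further reduces the QMC case to the single value $k = \lfloor n/2 \rfloor$.

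For the XY bounds, I would recognize $A(F_k(C_n))$ as the hard-core boson hopping $\sum_{i \sim j}(S^+_i S^-_j + \mathrm{h.c.})$ on $C_n$ restricted to the $k$-particle sector. Jordan--Wigner maps this, in each fixed-particle sector, to a free-fermion tight-binding Hamiltonian on $C_n$ with periodic or antiperiodic boundary conditions determined by $k \bmod 2$. Fourier diagonalization yields single-particle energies $2\cos(2\pi j/n)$ (PBC) or $2\cos(\pi(2j+1)/n)$ (APBC), and the extremal eigenvalues of $A(F_k(C_n))$ are simply the sums of the top (resp.\ bottom) $k$ of these values. The desired inequalities reduce to the elementary trigonometric partial-sum estimate $\bigl|\sum_{j \in T} 2\cos\theta_j\bigr| \le (n+k)/2$ for $|T|=k$, which follows from straightforward bounds on truncated cosine sums.

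For QMC and EPR the analysis is considerably more delicate, because $L(F_k(C_n)) = D(F_k(C_n)) - A(F_k(C_n))$ and $Q(F_k(C_n)) = D(F_k(C_n)) + A(F_k(C_n))$ involve the non-uniform degree matrix, whose entries encode the number of arcs of the underlying $k$-subset on $C_n$ and range from $2$ to $2\min(k,n-k)$; the spectrum is therefore no longer free-fermionic. For even $n$, \cref{lem:bipartite_equivalence} equates the QMC and EPR spectra, so one argument suffices. For odd $n$ I would pursue either (a) a Bethe-ansatz evaluation of the maximum energy of the Heisenberg XXX chain on $C_n$, comparing the (possibly implicit) answer to the target bound $n + (n-1)/2$, or (b) a $\mathbb{Z}_n$-momentum block-decomposition into $n$ sectors of size $\approx\binom{n}{k}/n$, bounding each block via Gershgorin together with a controlled perturbation argument.

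The main obstacle will be obtaining the sharp bound $n + \lfloor n/2\rfloor$ for QMC and EPR uniformly in $n$: the Bethe-ansatz spectrum of the odd-length Heisenberg chain is defined only implicitly by transcendental equations, and naive eigenvalue bounds (e.g.\ Gershgorin alone) fail by constant factors for $k$ close to $n/2$. A clean argument likely requires either an explicit eigenvector construction leveraging the suggestive ``edges plus matching size'' structure of the target (mirroring \cref{alg:dicke}/\cref{cor:dicke_algo_epr} on the EPR side), or a case split combining exact computation for small $n$ with an asymptotic Fourier/Bethe-ansatz analysis for large $n$.
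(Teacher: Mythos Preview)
Your plan for the XY part is sound: the cyclic XY model is genuinely free-fermionic after Jordan--Wigner, the parity-dependent boundary twist is handled correctly, and the required cosine partial-sum estimate is routine. For QMC and EPR, however, you do not actually have a proof. You correctly observe that the non-uniform degree matrix destroys the free-fermion structure, but neither of your proposed remedies is shown to work: the Bethe-ansatz ground-state energy of the odd-length XXX chain is given only implicitly by transcendental equations, and you offer no argument that it can be bounded by $(3n-1)/2$ uniformly in $n$; Gershgorin on the $\mathbb Z_n$-momentum blocks loses exactly the constant you need. As you yourself note, this is ``the main obstacle,'' and the proposal leaves it unresolved.

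The paper's proof sidesteps all of this with an elementary device you do not consider: subadditivity of the operator norm. One writes $H^{EPR}(C_{2k+1}) = H^{EPR}(P_{2k}) + H^{EPR}(P_3)$ (edge-disjointly) and bounds $\lmax(H^{EPR}(C_{2k+1})) \le \lmax(H^{EPR}(P_{2k})) + \lmax(H^{EPR}(P_3))$. The path term is controlled by the same subadditivity used inductively, $\lmax(H^{EPR}(P_{2(k+1)})) \le \lmax(H^{EPR}(P_{2k})) + \lmax(H^{EPR}(P_3))$, which preserves any slack of the form $W+M-a$; a short numerical table of exact eigenvalues up to $n=13$ supplies the base cases with enough slack ($a=1$) to close the induction. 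The QMC bound then follows from \cref{cor:qmc_less_than_epr}, and the XY cases by the same pattern. So the key idea you are missing is not a sharper spectral computation but simply \emph{triangle inequality plus finite-case verification}, which makes the QMC/EPR part as easy as the XY part.

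A smaller point: your translation of the hypotheses into the per-$k$ inequalities $\lmax(L(F_k(C_n)))\le n+k$ etc.\ is actually \cref{conj:L_max}--\cref{conj:A_min} restricted to cycles, which is stronger than what \cref{lem:cycle_bound} asks. The hypotheses of \cref{lem:h_reductions} are bounds on the full Hamiltonian, i.e.\ on the \emph{maximum} over $k$, so only the single value $k=\lfloor n/2\rfloor$ matters; you need $\lmax(L(F_k(C_n)))\le n+(n-1)/2$ for odd $n$, not the $k$-dependent bound. This does not invalidate your approach (the stronger statement implies the weaker), but it does mean you are making the problem harder than necessary.
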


We first provide the following table, which we compute via exact diagonalization on a computer 
\cite{apte2025a}

\begin{table}[H] 
    \resizebox{.95\textwidth}{!}{
    \begin{tabular}{|c||c|c|c|c|c|c|c|c|c|c|c|c|}
        \hline
        $n$        & 2   & 3   & 4    & 5    & 6    & 7    & 8     & 9     & 10    & 11    & 12    & 13    \\
        \hline
        \rule{0pt}{3ex}   
        $\lmax\of{H^{QMC}\of{C_n}}$ & 2.00 & 3.00 & 6.00 & 6.24 & 8.61 & 9.21 & 11.30 & 12.09 & 14.03 & 14.94 & 16.77 & 17.76 \\
        $\lmax\of{H^{EPR}\of{C_n}}$ & 2.00 & 4.00 & 6.00 & 6.83 & 8.61 & 9.63 & 11.30 & 12.42 & 14.03 & 15.20 & 16.77 & 17.98 \\
        $\lmax\of{H^{EPR}\of{P_n}}$ & 2.00 & 3.00 & 4.73 & 5.86 & 7.49 & 8.67 & 10.25 & 11.47 & 13.02 & 14.26 & 15.78 & 17.05 \\
        $\lmax\of{H^{XY}\of{C_n}}$  & 0.50 & 1.50 & 4.00 & 4.12 & 6.46 & 6.75 & 8.83 & 9.26 & 11.16 & 11.70 & 13.46 & 14.09 \\
        $\lmax\of{-H^{XY}\of{C_n}}$ & -0.50 & -1.50 & 0.00 & -0.50 & 0.46 & 0.10 & 0.83 & 0.56 & 1.16 & 0.96 & 1.46 & 1.31 \\
        $\lmax\of{H^{XY}\of{P_n}}$  & 0.50 & 1.00 & 3.12 & 3.73 & 5.55 & 6.26 & 7.91 & 8.70 & 10.24 & 11.08 & 12.56 & 13.43 \\
        \hline
    \end{tabular}
    }
    \caption{Extremal energies for QMC, EPR, and XY Hamiltonians on path and ring graphs up to $13$ nodes. We do not show $\lmax\of{H^{EPR}\of{P_n}}$ and $-\lmin\of{H^{XY}\of{P_n}}$, as $\lmax\of{H^{EPR}\of{P_n}}=\lmax\of{H^{QMC}_{P_n}}$ and $\lmax\of{H^{XY}\of{P_n}-\frac{W}{2}I}=\lmax\of{-H^{XY}\of{P_n}+\frac{W}{2}I}$  due to \cref{lem:bipartite_equivalence}.}
    \label{tab:path_cycle_opts}
\end{table}

We first show the following useful lemmas for EPR on path graphs (and thus for QMC) \cref{lem:bipartite_equivalence})
\begin{lemma}\label{lem:path_match_bound_with_a}
    For an unweighted path graph $P_{2k}$ with $k\in \mathbb{Z}^+$ and $a \in \mathbb{R}$
    \begin{align}
        &\lmax\of{H^{EPR}(P_{2k})} \leq W\of{P_{2k}}+M\of{P_{2k}}-a \\
        &\implies \lmax\of{H^{EPR}(P_{2(k+1)})} \leq W\of{P_{2(k+1)}}+M\of{P_{2(k+1)}}-a. 
    \end{align}
    In particular, this fact, combined with \cref{tab:path_cycle_opts}, shows that for any $k \geq 6$, $\lmax\of{P_{2k}} \leq W\of{P_{2k}}+M\of{P_{2k}}-1 = 3k-2$.
\end{lemma}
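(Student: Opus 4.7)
The plan is to exhibit a tight edge-disjoint decomposition of $P_{2k+2}$ and then invoke subadditivity of $\lmax$. Concretely, label the vertices of $P_{2k+2}$ as $v_1,v_2,\dots,v_{2k+2}$ and split its edge set into two blocks: the first $2k-1$ edges form a copy of $P_{2k}$ on $\{v_1,\dots,v_{2k}\}$, while the remaining two edges $(v_{2k},v_{2k+1})$ and $(v_{2k+1},v_{2k+2})$ form a copy of $P_3$ sharing only the vertex $v_{2k}$ with the first block. Since $H^{EPR}$ is a sum over edge terms, this decomposition gives the operator identity
\[
H^{EPR}(P_{2k+2}) \;=\; H^{EPR}(P_{2k}) \otimes I_{\{v_{2k+1},v_{2k+2}\}} \;+\; I_{\{v_1,\dots,v_{2k-1}\}} \otimes H^{EPR}(P_3),
\]
and the standard triangle inequality $\lmax(A+B)\le\lmax(A)+\lmax(B)$ yields
\[
\lmax\!\of{H^{EPR}(P_{2k+2})} \;\le\; \lmax\!\of{H^{EPR}(P_{2k})} \;+\; \lmax\!\of{H^{EPR}(P_3)}.
\]

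Next I would plug in $\lmax(H^{EPR}(P_3)) = 3$ from \cref{tab:path_cycle_opts} (this can also be verified directly from \cref{fact:epr_Q_equivalence} by diagonalizing $Q(F_1(P_3))$). Applying the hypothesis to the first term gives
\[
\lmax\!\of{H^{EPR}(P_{2k+2})} \;\le\; W(P_{2k}) + M(P_{2k}) - a + 3.
\]

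The final step is purely arithmetic: $P_{2k}$ has $2k-1$ edges and a maximum matching of size $k$, while $P_{2k+2}$ has $2k+1$ edges and a maximum matching of size $k+1$, so $W(P_{2k+2}) + M(P_{2k+2}) = W(P_{2k}) + M(P_{2k}) + 3$. Substituting gives exactly the desired bound $\lmax(H^{EPR}(P_{2k+2})) \le W(P_{2k+2}) + M(P_{2k+2}) - a$, which completes the induction.

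The main (and only real) obstacle is that the three units of slack produced by appending a $P_3$ must be absorbed entirely by the increase in $W+M$; this works out exactly because $\lmax(H^{EPR}(P_3)) = 3 = W(P_3)+M(P_3)$. If one instead used the cruder decomposition $P_{2k+2} = P_{2k} + e_1 + e_2$ with two single edges, subadditivity would give an increase of $4$, which overshoots $W+M$ by one; the $P_3$ decomposition is precisely the sharpening needed, and it is possible because attaching $P_3$ at the endpoint of $P_{2k}$ remains edge-disjoint.
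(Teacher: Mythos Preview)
Your proposal is correct and follows essentially the same approach as the paper: decompose $P_{2(k+1)}$ as $P_{2k}$ plus a $P_3$ at the end, apply the triangle inequality, use $\lmax(H^{EPR}(P_3))=3$ from \cref{tab:path_cycle_opts}, and finish with the arithmetic $W(P_{2(k+1)})+M(P_{2(k+1)})=W(P_{2k})+M(P_{2k})+3$. Your closing remark explaining why attaching $P_3$ (rather than two separate edges) is exactly what makes the bookkeeping tight is a nice addition not spelled out in the paper.
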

\begin{proof}
    We have 
    \begin{align}
        \lmax\of{H^{EPR}(P_{2(k+1)})} 
        &\leq \lmax\of{H^{EPR}(P_{2k})} + \lmax\of{H^{EPR}(P_3)} \\
        &\leq W\of{P_{2k}}+M\of{P_{2k}}-a+3 \\
        &= W\of{P_{2(k+1)}}+M\of{P_{2(k+1)}} - a, 
    \end{align}
    where in the first line we used the triangle inequality, in the second line we used the hypothesis in the lemma and $\lmax\of{P_3}$ from \cref{tab:path_cycle_opts}, and in the last line we used that $W\of{P_{2(k+1)}} = W\of{P_{2k}}+2$ and $M\of{P_{2(k+1)}} = M\of{P_{2k}}+1$.
\end{proof}

We now may prove \cref{lem:cycle_bound} for EPR. The result follows for QMC by \cref{cor:qmc_less_than_epr}
\CycleBound*
\begin{lemma}\label{lem:odd_cycle_match_bound}
    For an unweighted odd-length cycle graph $C_{2k+1}$ with $k\in \mathbb{Z}^+$ and $a \in \mathbb{R}$
    \begin{align}
        \lmax\of{H^{EPR}(C_{2k+1})} \leq W\of{C_{2k+1}}+M\of{C_{2k+1}}.
    \end{align}
\end{lemma}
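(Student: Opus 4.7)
The plan is to reduce this odd-cycle bound to the even-path bound of \cref{lem:path_match_bound_with_a} via a single application of the triangle inequality. The key structural observation is that if I pick any vertex $v^\star$ of $C_{2k+1}$ with cycle neighbors $u_1$ and $u_2$, then deleting $v^\star$ leaves a Hamiltonian path $P_{2k}$ on the remaining $2k$ vertices (with $u_1$ and $u_2$ as endpoints), while the two cycle edges incident to $v^\star$ form a $P_3$ on $u_1, v^\star, u_2$ whose edge set is disjoint from $E(P_{2k})$. Because these two edge sets together partition $E(C_{2k+1})$, \cref{eq:2_local_ham_defn} yields the operator-level decomposition
\[
H^{EPR}(C_{2k+1}) \;=\; H^{EPR}(P_{2k}) \;+\; H^{EPR}(P_3),
\]
where each summand acts as the identity on qubits outside its own subgraph. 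The triangle inequality for Hermitian matrices, together with the tabulated value $\lmax(H^{EPR}(P_3)) = 3$ from \cref{tab:path_cycle_opts}, then gives
\[
\lmax\of{H^{EPR}(C_{2k+1})} \;\le\; \lmax\of{H^{EPR}(P_{2k})} \;+\; 3.
\]

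For $k \ge 6$, I would invoke \cref{lem:path_match_bound_with_a} (whose base case is the tabulated value $\lmax(H^{EPR}(P_{12})) = 15.78 \le 16$) to conclude $\lmax(H^{EPR}(P_{2k})) \le W(P_{2k}) + M(P_{2k}) - 1 = 3k - 2$. Substituting,
\[
\lmax\of{H^{EPR}(C_{2k+1})} \;\le\; (3k - 2) + 3 \;=\; 3k + 1 \;=\; W(C_{2k+1}) + M(C_{2k+1}),
\]
as required. For the remaining small cases $k \in \ofc{1,2,3,4,5}$, I would finish by reading $\lmax(H^{EPR}(C_{2k+1}))$ for $C_3, C_5, C_7, C_9, C_{11}$ directly from \cref{tab:path_cycle_opts}: the reported values $4.00, 6.83, 9.63, 12.42, 15.20$ are all at most the targets $4, 7, 10, 13, 16$, respectively.

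There is essentially no real obstacle: the analytically nontrivial content has already been packaged into \cref{lem:path_match_bound_with_a}, and the triangle inequality turns out to be exactly lossless enough---the extra $+3$ from the small $P_3$ piece is absorbed by the unit of slack available in the even-path bound---so the cycle bound is saturated on $C_3$ and strict on all larger odd cycles. The only subtlety worth flagging in a careful write-up is that both summands in the decomposition should be understood as operators on the full $(2k+1)$-qubit Hilbert space with identities on qubits outside their subgraphs, so that their maximum eigenvalues coincide with those of the corresponding small Hamiltonians; this is immediate from the tensor-product structure of \cref{eq:2_local_ham_defn}.
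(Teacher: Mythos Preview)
Your proposal is correct and follows essentially the same approach as the paper: decompose $C_{2k+1}$ into $P_{2k}$ and $P_3$, apply the triangle inequality with $\lmax(H^{EPR}(P_3))=3$, invoke \cref{lem:path_match_bound_with_a} for $k\ge 6$, and verify the small odd cycles directly from \cref{tab:path_cycle_opts}. The only cosmetic difference is that the paper checks $k\le 6$ numerically (with harmless overlap at $k=6$), whereas you check $k\le 5$ and let the inductive argument handle $k=6$.
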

\begin{proof}
    For $k \leq 6$, the Lemma can be verified for both QMC and EPR numerically, as seen in \cref{tab:path_cycle_opts}. For $k \geq 6$, we have
    \begin{align}
        \lmax\of{H^{EPR}(C_{2k+1})} 
        &\leq \lmax\of{H^{EPR}(P_{2k})} + \lmax\of{H^{EPR}(P_3)} \\
        &\leq W\of{H^{EPR}(P_{2k})}+M\of{H^{EPR}(P_{2k})}-1+3 \\
        &= W\of{C_{2k+1}}+M\of{P_{2k+1}}, 
    \end{align}
    where in the first line we used the triangle inequality, in the second line we used \cref{lem:path_match_bound_with_a} and $\lmax\of{P_3}$ from \cref{tab:path_cycle_opts}, and in the last line we used that $W\of{C_{2k+1}} = W\of{P_{2k}}+2$ and $M\of{C_{2k+1}} = M\of{P_{2k}}$.
\end{proof}

The base case for the minimum and maximum of XY in \cref{lem:cycle_bound} follow via similar arguments. 

\end{document}